\theoremstyle{plain} 
\newtheorem{thm}{Theorem}
\newtheorem{lemma}{Lemma}
\newtheorem{assump}{Assumption}
\theoremstyle{definition}
\theoremstyle{remark} 
\newtheorem{rem}{Remark}
\def\var{{\rm {\mathbb Var\,}}}
\def\iid{\stackrel{\textrm{i.i.d.}}{\sim}}
\newcommand{\ee}{\mathbb{E}}
\newcommand*\dd{\mathop{}\!\mathrm{d}}
\newcommand{\ig}{\operatorname{IG}}
\newcommand{\g}{\operatorname{G}}
\newcommand{\ind}{\mathbf{1}}
\definecolor{lime}{HTML}{A6CE39}
\DeclareRobustCommand{\orcidicon}{%
	\begin{tikzpicture}
	\draw[lime, fill=lime] (0,0) 
	circle [radius=0.16] 
	node[white] {{\fontfamily{qag}\selectfont \tiny ID}};
	\draw[white, fill=white] (-0.0625,0.095) 
	circle [radius=0.007];
	\end{tikzpicture}
	\hspace{-2mm}
}
\xdef\csname orcid\x\endcsname{\noexpand\href{https://orcid.org/\csname orcidauthor\x\endcsname}{\noexpand\orcidicon}}
\title{Fast and scalable non-parametric Bayesian inference for Poisson point processes}
\author{Shota Gugushvili\footnote{Biometris, Wageningen University \& Research,
		Postbus 16,
		6700 AA Wageningen,
		The Netherlands, {shota.gugushvili@wur.nl}} \orcidA{}\negthickspace\negthinspace ,\; Frank van der Meulen\footnote{Delft Institute of Applied Mathematics, 
		Delft University of Technology,
		Van Mourik Broekmanweg 6, 
		2628 XE Delft,
		The Netherlands, f.h.vandermeulen@tudelft.nl} \orcidB{}\negthickspace\negthinspace , \; Moritz Schauer\footnote{Department of Mathematical Sciences,
		Chalmers University of Technology and University of Gothenburg,
		SE-412 96 G\"{o}teborg,
		Sweden, smoritz@chalmers.se} \orcidC{} \negthickspace\negthinspace \; \\
		and Peter Spreij\footnote{Korteweg-de Vries Institute for Mathematics, 
		University of Amsterdam,
		P.O.\ Box 94248, 
		1090 GE Amsterdam,
		The Netherlands,
		and
		Institute for Mathematics, Astrophysics and Particle Physics, Radboud University, Nijmegen, spreij@uva.nl} \orcidD{}}
\date{
\today
}
\begin{document}

\maketitle 

\thispagestyle{fancy}

\begin{abstract}    
We study the problem of non-parametric Bayesian estimation of the intensity function of a Poisson point process. The observations are $n$ independent realisations of a Poisson point process on the interval $[0,T]$. We propose two related approaches. In both approaches we model the intensity function as piecewise constant on $N$ bins forming a partition of the interval $[0,T]$.  In the first approach the  coefficients of the intensity function are assigned independent gamma priors, leading to a closed form posterior distribution. On the theoretical side,  we prove that as $n\rightarrow\infty,$ the posterior asymptotically concentrates around the ``true", data-generating intensity function at an  optimal rate for $h$-H\"older regular intensity functions ($0 < h\leq 1$).

In the second approach we employ a gamma Markov chain prior on the coefficients of the intensity function. The posterior distribution is no longer available in  closed form, but inference can be performed using a straightforward version of the Gibbs sampler. Both  approaches  scale well with sample size, but the second is much less sensitive to the choice of $N$.

Practical performance of our methods is first demonstrated via synthetic data examples. 
We  compare our second method with other existing approaches on the UK coal mining disasters data. Furthermore, we apply it to the US mass shootings data and Donald Trump's Twitter data.

\smallskip

\emph{Keywords and phrases:} Empirical Bayes, Intensity function, gamma Markov chain prior, Gibbs sampler, Independent gamma prior, Markov Chain Monte Carlo, Metropolis-within-Gibbs, Non-homogeneous Poisson process, Non-parametric Bayesian estimation, Poisson point process, Posterior contraction rate.
\end{abstract}

\section{Introduction}
\label{intro}
Poisson point processes (see \cite{kingman93}) are among the basic modelling tools in areas as different as astronomy, biology, geology, image analysis, medicine, operations research, physics, reliability theory, social media and others; see, e.g., \cite{bremaud81}, \cite{moller04} and \cite{streit10}. Their probabilistic properties are completely determined by their intensity measure $\Lambda$, or its density, the intensity function $\lambda$.  In this paper, we study a non-parametric Bayesian approach to estimation of the intensity function $\lambda$ for univariate Poisson point processes. We propose two related approaches that lead to simple implementations, and also enjoy favourable performance in practice.

\subsection{Setting}
We introduce our statistical setting in greater detail. We restrict our attention to a Poisson point process $X$ on an interval $[0,T]$ of the real line $\mathbb{R}$, which we equip with the Borel $\sigma$-field $\mathcal{B}([0,T])$. Such Poisson point processes are also often called non-homogeneous Poisson processes. The process $X$ is a random integer-valued measure on $[0,T]$ (we assume the underlying (complete) probability space $(\Omega,\mathcal{F},\mathbb{Q})$ in the background), such that
\begin{enumerate}
	\item for any disjoint subsets $B_1,B_2,\ldots,B_m\in\mathcal{B}([0,T]),$ the random variables $X(B_1),$ $X(B_2),\ldots,X(B_m)$ are independent, and
	\item for any $B\in\mathcal{B}([0,T]),$ the random variable $X(B)$ is Poisson distributed with parameter $\Lambda(B)$. Here $\Lambda$ is a finite measure on $([0,T],\mathcal{B}([0,T])),$ called the intensity measure of the process $X.$ Moreover, it is assumed that $\Lambda$ admits a density $\lambda$ with respect to the Lebesgue measure on $\mathcal{B}([0,T])$.
\end{enumerate}
Intuitively, the process $X$ can be thought of as random scattering of points in $[0,T],$ where the way the scattering occurs is determined by properties (i)--(ii).

A popular assumption in the statistical literature (see, e.g., \cite{karr86} and \cite{kutoyants98}) is that one has independent copies $X_1,\ldots,X_n$ of the process $X$ at her disposal.  Based on these observations, an estimator of the intensity function $\lambda$ has to be constructed. Intensity functions that are periodic with a known period also lead to this statistical setting.

\subsection{Literature overview}
\label{subsec:literature}

Theoretical results for a kernel-type estimator of the intensity function are given in \cite{kutoyants98}, where further references can be found. See also \cite{diggle85}, where a practical implementation is discussed in a closely related problem of estimation of the intensity of a stationary Cox process. Some recent references treating various types of the kernel method are \cite{baddeley16}, \cite{diggle14}, and \cite{lieshout18}.

Works dealing with non-parametric\footnote{As put characteristically in \cite{efron16}, page 172: ``Nonparametric" is another name for ``very highly parametrized''.} Bayesian intensity function estimation include, among others,
\cite{adams09}, \cite{arjas98}, \cite{hensman15}, \cite{john18}, \cite{komsamo15}, \cite{moller98} and  \cite{rao11}, and concentrate primarily on computational aspects. On the other hand, \cite{harry15} is a theoretical contribution analysing the approach in \cite{adams09} (cf.~also \cite{gugu13}). \cite{vz13} deals both with practical and theoretical aspects of the problem. Another recent paper dealing with derivation of posterior contraction rates in the Poisson point process setting using Gaussian priors is \cite{grant_leslie19}.

Advantages of a non-parametric Bayesian approach over the kernel method are succinctly summarised in the discussion given in \cite{adams09}. One obvious drawback of a ``naive" kernel estimator is that it is inconsistent at the boundary of the set $[0,T]$ on which the process is defined. In practice this estimator will also behave poorly close to boundary points. This has to do with the well-known boundary bias problem of the kernel estimator. Simulation examples in \cite{adams09} demonstrate that even after the correction for edge effects following the method in \cite{diggle85} and with an optimal choice of the bandwidth parameter, the kernel method is still  outperformed by the Bayesian approach of \cite{adams09}. Secondly, if a kernel of order higher than two is used,  a  non-negative estimate of the intensity function is not guaranteed by the kernel method. On the other hand, the Bayesian approaches studied in the above cited works are often computationally intense, in that their practical implementation is based on advanced Markov Chain Monte Carlo (MCMC) or optimisation methods, and are also not trivial to implement.

\subsection{Our contribution}
In this paper we propose two simple non-parametric Bayesian approaches to estimation of the intensity function. We model the intensity function as piecewise constant on the domain of its definition $[0,T].$ In our first approach we  equip the coefficients of the intensity function with independent gamma priors, and obtain a posterior distribution that is known in closed form and is also of gamma type. Posterior inference with this approach is computationally elementary, with no need to recourse to simulation methods such as MCMC, or optimisation techniques such as variational Bayes. Hence the method scales extremely well with sample size. This method can be thought of as a simple to use Bayesian tool for preliminary, exploratory data analysis, that is much akin to a histogram or a regressogram; see \cite{wasserman06}. The approach requires the choice of the number of coefficients of the intensity function. A simple practical method to select this hyperparameter is to use the empirical Bayes method. 

On the theoretical side, we derive the contraction rate of the posterior distribution around the ``true" intensity function $\lambda_0$ under $\mathbb{P}_{\lambda_0}^{(n)},$ where $\mathbb{P}_{\lambda_0}^{(n)}$ denotes the law of the observation $X^{(n)}=(X_1,\ldots, X_n)$ under the true parameter $\lambda_0.$
This concerns taking a sequence of shrinking neighbourhoods of $\lambda_0$ and determining the fastest rate, at which these neighbourhoods can shrink, while still capturing most of the posterior mass (the precise definition will be given below). This rate can be thought of as an analogue of the convergence rate of a frequentist estimator. 
As we will demonstrate, our approach attains the optimal posterior contraction rate for estimating an $h$-H\"older regular intensity function, $0<h\leq 1$. 
We stress the fact that the data-generating $\lambda_0$ in our approach is not necessarily assumed to be piecewise constant with known break points.

Inspired by ideas in the audio signal processing literature, see, e.g., \cite{cemgil07}, \cite{peeling08} and \cite{dikmen10}, as well as inference in diffusion models, see \cite{gugu18micro} and \cite{gugu18}, we next propose a second non-parametric Bayesian method. This method extends the first and relies on the gamma Markov chain (GMC) prior. Specifically, as in our first approach, we model a priori the intensity function as piecewise constant, but now its coefficients under a prior form a gamma Markov chain. Unlike our first method, the posterior is not available in closed form. However, this second method can be implemented in a straightforward way using the Gibbs sampler.
We initialise our Gibbs sampler using the information obtained from our first method, to ensure quick convergence of the resulting Markov chain. 
We provide a comparison of both methods via simulations, and show that the first one is more sensitive to the appropriate choice of the number of bins and is outperformed by the second one in practice. Finally, like the first method, also the second one scales well with sample size: once the number of Poisson events falling within each bins has been determined, the computational complexity per MCMC step of the second method is linear in the number of bins. 

The methodology developed in this paper is applied on three real data examples: the UK coal mining disasters data, the US mass shootings data and Donald Trump's Twitter data. The first dataset is a classical benchmark in point process inference and is used to compare our method with results from \cite{adams09} and \cite{lloyd15}.

\subsection{Structure of the paper}
The rest of the paper is organised as follows: in Section \ref{sec:likelihood} we introduce in detail our first Bayesian procedure, study its frequentist asymptotics, and discuss one method for practical selection of a hyperparameter (the number of coefficients), which governs properties of the prior. In Section \ref{sec:gamma} we introduce our second Bayesian estimation technique based on the GMC prior. In Section \ref{sec:simulations} we present simulations examples, while in Section \ref{sec:realdata} we  apply our approach on real datasets. In Section \ref{sec:conclusions} we summarise contributions of our paper. Appendix \ref{appendix}  contains the proofs of technical results. Finally, in Appendices \ref{sec:gp} and \ref{sec:revjump} we include some results obtained based on  Gaussian processes and equipping a prior on the model index respectively.

\subsection{Notation}
For two sequences $\{a_n\}$ and $\{b_n\}$ of positive real numbers, the notation $a_n\lesssim b_n$  (or $b_n\gtrsim a_n$) means that there exists a constant $C>0$ that is independent of $n$ and such that $a_n\leq C b_n.$ We write $a_n\asymp b_n$ if both $a_n\lesssim b_n$ and $a_n\gtrsim b_n$ hold. $\|\cdot\|_2$ denotes  the $L_2$-norm with respect to the Lebesgue measure on the Borel sets of $[0,T]$. 
We denote a prior (depending on the hyperparameter $N$) by $\Pi_N$. Given data $X^{(n)}$, the corresponding posterior measure is denoted by  $\Pi_N(\cdot\mid X^{(n)})$, and $\ee_{\Pi_N}[\cdot \mid X^{(n)}]$ stands for expectation with respect to the posterior measure, while $\var_{\Pi_N}(\cdot \mid X^{(n)})$ is the corresponding variance. The gamma distribution with shape parameter $\alpha$ and rate parameter $\beta$ ($\alpha,\beta >0$) is denoted by $\operatorname{G}(\alpha,\beta)$. Its density is given by
$x \mapsto \frac{\beta^{\alpha}}{\gamma(\alpha)} x^{\alpha-1}e^{-\beta x}, \quad x>0$,
where $\Gamma$ is the gamma function. The inverse gamma distribution with shape parameter $\alpha$ and scale parameter $\beta$ is denoted by $\ig(\alpha,\beta)$. Its density is 
$x \mapsto \frac{\beta^{\alpha}}{\Gamma(\alpha)} x^{-\alpha-1}e^{-\beta / x}, \quad x>0$.
A normal (Gaussian) density with mean $\mu$ and standard deviation $\sigma$ is denoted by $\phi(\cdot;\mu,\sigma)$. We use the notation $\operatorname{Uniform}(a,b)$ for a uniform distribution on $[a,b]$, whereas $\operatorname{Exp}(\beta)$ stands for an exponential distribution with mean $1/\beta$.  In conformance with standard Bayesian notation, we will often use lowercase letters for random variables and write a density of a random variable $x$ as $p(x)$. A conditional density of $x$ given $y$ will be written as $p(x\mid y)$, while conditioning of $x$ on $y$ will be denoted by $x\mid y$. Finally, $\lceil a \rfloor$ will stand for the integer nearest to the real number $a$.

\section{Independent gamma prior}
\label{sec:likelihood}

In this section we present in detail our first method of estimation of the intensity function.

\subsection{Likelihood}
In a Bayesian approach to estimation of $\lambda$ one starts with putting a prior $\Pi$ on $\lambda$. This might be thought of as reflecting one's prior knowledge or beliefs on $\lambda.$ In formal terms, the prior is a measure $\Pi$ defined on the parameter set $\Theta$ (a set of functions $\lambda\colon[0,T]\mapsto\mathbb{R}_{+}$) equipped with a $\sigma$-field $\sigma(\Theta)$. By Proposition~6.11 in \cite{karr86} or Theorem 1.3 in \cite{kutoyants98}, the law of $X$ under the parameter value $\lambda$ admits a density $p(\cdot\,;\lambda)$ with respect to the measure induced by a standard Poisson point process of rate $1.$ This density is given by
\begin{equation*}
p(\xi;\lambda)=\exp\left( \int_{[0,T]} \log \lambda(x)\mathrm{d}\xi(x) - \int_{ [0,T]  } (\lambda(x)-1) \mathrm{d}x \right),
\end{equation*}
where $\xi=\sum_{i=1}^m \delta_{x_i}$ is a realisation of $X$ (here $\delta_{x_i}$ denotes the Dirac measure at $x_i$). 
We assume independent observations $X_1,\ldots, X_n$ with the same distribution as $X$ and write 
$ X_j = \sum_{i=1}^{m_j} \delta_{X_{ij}}$ for $j=1,\ldots, n$. 
If we define $X^{(n)}=(X_1,X_2,\ldots,X_n)$, then it follows that the likelihood $L(X^{(n)};\lambda)$  can be written as
\begin{equation}
\label{likelih}
L(X^{(n)};\lambda)=\prod_{j=1}^n \exp\left( \int_{ [0,T] } \log \lambda(x)\mathrm{d}X_{j}(x) - \int_{ [0,T]  } \left(\lambda(x)-1\right) \mathrm{d}x \right).
\end{equation}

\subsection{Prior}\label{subsection:prior}
Fix a positive integer $N$. Let $0=b_0<b_1<\cdots<b_{N-1}<b_N=T$ be a grid of points on the interval $\mathcal{X}=[b_0,b_N]$, for instance a uniform grid. Using this grid, define the bins $B_k=[b_{k-1},b_{k}),$ $k=1,\ldots,N-1$, with the last bin $B_N = [b_{N-1},b_N]$. In order to define a prior on $\lambda$, introduce a collection of piecewise constant functions $\lambda$ on bins $B_k$,
\begin{equation}
\label{lambda}
\lambda(x)=\sum_{k=1}^N \psi_k \mathbf{1}_{B_k}(x), \quad x\in[0,T].
\end{equation}
\begin{assump}
	\label{assump:prio}
	Assume that the coefficients $\psi_k \iid \operatorname{G}(\alpha,\beta)$. Define the prior $\Pi_N$ on $\lambda$ to be the law of the random function \eqref{lambda}.
\end{assump}

We will refer to the prior $\Pi_N$ as the independent gamma prior. If the grid on $\mathcal{X}=[0,T]$ is taken to be uniform, the prior has three hyperparameters: $\alpha,$ $\beta,$ and $N.$ 
Depending on the amount of available data and the degree of prior knowledge on $\lambda$, the hyperparameters $\alpha,\beta$ can for instance be chosen to render the $\operatorname{G}(\alpha,\beta)$ prior diffuse (non-informative). This corresponds to the case when one has little information on the magnitude and shape of $\lambda$. 
The hyperparameter $N$ can be viewed as a smoothing parameter. We discuss one practical method for its choice in Subsection \ref{sec:ebayes}.

Using a prior with piecewise constant realisations as in \eqref{lambda} is not unnatural, as follows by a comparison to a histogram-like frequentist procedure studied in detail from the theoretical point of view in \cite{henderson03} and \cite{leemis04}. In fact, in practical applications it is often the case that Poisson point processes are only partially observed through aggregate counts of points over successive intervals (cf.~\cite{henderson03}). The intensity function cannot be learned beyond the resolution of these intervals, which lends support to using priors with piecewise constant realisations to model the intensity function. 

\subsection{Posterior}
Since the function $\lambda$ in our case is parametrised by the coefficients $\psi_1,\ldots,\psi_N$, the posterior distribution of the intensity function $\lambda$ given the data $X^{(n)}$ can be equivalently described through the posterior distribution $\psi_1,\ldots,\psi_N \mid X^{(n)}$. Transition from the prior to the posterior can be thought of as updating our prior opinion on $\lambda$ upon seeing the data $X^{(n)}.$
\begin{lemma}\label{lem:indepposterior}
	\label{lem:gamma}
	Define 
	\begin{equation}\label{eq:def-Hk}	
	H_k = \sum_{j=1}^n \sum_{i=1}^{m_j} \ind_{\{X_{ij} \in B_k\}}
	\end{equation}
	and $\Delta_k= b_k - b_{k-1}$ for $k=1,\ldots,N.$ Then  $\psi_1,\ldots,\psi_N$ are a posteriori independent and 
	\begin{equation}
	\label{post:gamma}
	\psi_k \mid   X^{(n)} \sim \operatorname{G}(\alpha  + H_k, \beta + n\Delta_k ), \quad k=1,\ldots,N.
	\end{equation}	
\end{lemma}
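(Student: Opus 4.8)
The plan is to prove the statement by a direct application of Bayes' theorem, exploiting the conjugacy between the gamma prior and the Poisson likelihood. Since $\lambda$ is fully determined by the coefficients $\psi_1,\ldots,\psi_N$ through \eqref{lambda}, I would first rewrite the likelihood \eqref{likelih} in terms of these coefficients. The key observation is that, because the bins $B_k$ are disjoint, the piecewise constant form collapses the two integrals appearing in the exponent into finite sums over $k$.

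Concretely, for the first integral I would use $\log\lambda(x)=\sum_{k=1}^N (\log\psi_k)\ind_{B_k}(x)$ together with $X_j=\sum_{i=1}^{m_j}\delta_{X_{ij}}$, so that
\begin{equation*}
\sum_{j=1}^n \int_{[0,T]} \log\lambda(x)\,\mathrm{d}X_j(x) = \sum_{k=1}^N (\log\psi_k)\sum_{j=1}^n\sum_{i=1}^{m_j}\ind_{\{X_{ij}\in B_k\}} = \sum_{k=1}^N H_k\log\psi_k,
\end{equation*}
which is precisely where the definition \eqref{eq:def-Hk} of $H_k$ enters. For the second integral, $\int_{[0,T]}\lambda(x)\,\mathrm{d}x=\sum_{k=1}^N \psi_k\Delta_k$, so summing over the $n$ observations contributes $\sum_{k=1}^N n\Delta_k\psi_k$ (plus a constant $nT$ not depending on the $\psi_k$). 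Substituting these into \eqref{likelih} yields, up to a multiplicative factor free of $\psi$,
\begin{equation*}
L(X^{(n)};\lambda)\propto \prod_{k=1}^N \psi_k^{H_k}\,e^{-n\Delta_k\psi_k}.
\end{equation*}

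Multiplying this by the prior density $\prod_{k=1}^N \psi_k^{\alpha-1}e^{-\beta\psi_k}$ coming from Assumption \ref{assump:prio}, I obtain that the posterior density of $(\psi_1,\ldots,\psi_N)$ is proportional to $\prod_{k=1}^N \psi_k^{\alpha+H_k-1}\,e^{-(\beta+n\Delta_k)\psi_k}$. Because this factorises across $k$, the coefficients are a posteriori independent, and the $k$-th factor is recognised as the (unnormalised) density of a $\operatorname{G}(\alpha+H_k,\beta+n\Delta_k)$ law, which is exactly \eqref{post:gamma}.

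The argument is essentially a conjugacy computation, so I do not expect a deep obstacle; the only point requiring care is the bookkeeping in passing from the integrals against the counting measure $\mathrm{d}X_j$ and against Lebesgue measure to the finite sums over bins, and in checking that the two families of normalising constants combine to produce the claimed shape and rate parameters rather than merely a proportionality. I would confirm the latter by matching the exponent of $\psi_k$ and the coefficient in the exponential against the gamma density recorded in the Notation section, so that the factorised posterior integrates to one term by term.
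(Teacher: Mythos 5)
Your proposal is correct and follows essentially the same route as the paper: rewrite the likelihood for the piecewise constant $\lambda$ to obtain $L(X^{(n)};\lambda)\propto \prod_{k=1}^N \psi_k^{H_k}e^{-n\Delta_k\psi_k}$, then invoke conjugacy with the independent gamma prior to read off the factorised gamma posterior. The bookkeeping you describe for passing from the integrals against $\mathrm{d}X_j$ and Lebesgue measure to sums over bins is exactly the computation the paper carries out.
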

Thus, the posterior for $\lambda$ is known in closed form. The posterior mean is given by
$ \hat{\lambda}(x)=\sum_{k=1}^N \hat{\psi}_k \mathbf{1}_{B_k}(x)$, $x\in [0,T]$,
with
\begin{equation}
\label{psik}
\hat{\psi}_k = \frac{ H_k + \alpha }{ n\Delta_k+\beta }, \quad k=1,\ldots,N.
\end{equation}
Marginal $1-\gamma$ credible bands for $\lambda$ can be obtained by producing $1-\gamma$ credible intervals for $\psi_k,$ using the lower and upper $\gamma/2$-quantiles of the gamma distribution. This gives Bayesian uncertainty quantification for estimation of $\lambda.$

If we were to ignore the hyperparameters $\alpha,\beta$ (or alternatively, if we were to consider the limit $n\rightarrow\infty$ in such a way that $n\min_k\Delta_k\rightarrow\infty$), then the posterior mean $\hat{\lambda}$ would have coincided with a frequentist estimator of the intensity function from \cite{henderson03}; cf.~also \cite{leemis04}. Thus our approach sheds additional light on the latter method, in that it yields its Bayesian interpretation. 

Returning to an interpretation of the prior distribution on $\lambda$, our view aligns with that in \cite{martin19}: ``[I]n high-dimensional problems\ldots the role [of prior distributions] is simply to facilitate efficient posterior inference, and, therefore, only priors whose corresponding posterior has good properties are used". That the latter is indeed the case, is the subject of the next subsection. Computational efficiency is clear from the above discussion and will be further demonstrated below.

\subsection{Bayesian asymptotics}
\label{sec:asymptotics}

In this subsection we perform the asymptotic frequentist analysis of the Bayesian procedure we introduced above. This concerns the study of the asymptotic properties of the posterior measure as the sample size $n\rightarrow\infty$.

We first formalise our assumption on the bins $B_k$.

\begin{assump}
	\label{assump:bins}
	Assume that the grid $\{b_k\}$ defining the bins $B_k$ is uniform: $b_k=Tk/N,$ $k=0,\ldots,N$, so that the bins are of equal width $\Delta_k=\Delta=T/N$.
\end{assump}

The assumption that the grid $\{b_k\}$ is uniform is made for simplicity in the proofs, and can in fact be relaxed. It is not necessary for our method to work in synthetic and real data examples.

The next condition deals with the ``true", data-generating intensity function $\lambda_0$. It places a modest smoothness assumption on $\lambda_0$, which will often be satisfied in practice.

\begin{assump}
	\label{lambda0}
	The intensity function $\lambda_0\colon[0,T]\rightarrow (0,\infty)$ is H\"older continuous: there exist constants $L>0$ and $0<h\leq 1$, such that
	\[
	|\lambda_0(x)-\lambda_0(y)| \leq L |x-y|^h, \quad \forall x,y\in [0,T].
	\]
\end{assump}

Our first result shows that the posterior mean $\hat{\lambda}$ is a consistent estimator of $\lambda_0$, and establishes its convergence rate. The expectation $\ee[\,\cdot\,]$ here and in the sequel is always under the law of the observations with the ``true" parameter value $\lambda_0$.

\begin{thm}
	\label{thm:mean}
	Let Assumption \ref{lambda0} hold, and assume $N \asymp n^{1/(2h+1)}$. Then
	\[
	\label{mse}
	\ee [\| \hat{\lambda} - \lambda_0  \|_2^2] \lesssim n^{-2h/(2h+1)}.
	\]
\end{thm}

The right-hand side is the optimal rate for estimating an $h$-H\"older-regular intensity function, see \cite{kutoyants98}.

The next result gives a posterior contraction rate in the $L_2$-metric. Whereas Theorem \ref{thm:mean} dealt with the `centre' of the posterior distribution, the theorem below deals with the entire posterior distribution.

\begin{thm}
	\label{thm:post}
	Let the assumptions of Theorem \ref{thm:mean} hold, and let $\varepsilon_n\asymp n^{-h/(2h+1)}$. Then, for any sequence $M_n\rightarrow \infty$,
	\[
	\ee [\Pi_N(\|\lambda-\lambda_0\|_2 \ge M_n \varepsilon_n \mid X^{(n)})] \rightarrow 0
	\]
	as $n\rightarrow\infty$.
\end{thm}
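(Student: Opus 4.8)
The plan is to reduce the statement to the mean-square rate already furnished by Theorem \ref{thm:mean}, the only additional ingredient being control of the posterior spread of $\lambda$ around its own mean $\hat{\lambda}$. First I would use the triangle inequality $\|\lambda-\lambda_0\|_2 \le \|\lambda-\hat{\lambda}\|_2 + \|\hat{\lambda}-\lambda_0\|_2$, so that the event $\{\|\lambda-\lambda_0\|_2 \ge M_n\varepsilon_n\}$ forces at least one of $\|\lambda-\hat{\lambda}\|_2 \ge M_n\varepsilon_n/2$ or $\|\hat{\lambda}-\lambda_0\|_2 \ge M_n\varepsilon_n/2$. This gives
\[
\Pi_N(\|\lambda-\lambda_0\|_2 \ge M_n\varepsilon_n \mid X^{(n)}) \le \Pi_N\left(\|\lambda-\hat{\lambda}\|_2 \ge \tfrac{1}{2}M_n\varepsilon_n \mid X^{(n)}\right) + \ind_{\{\|\hat{\lambda}-\lambda_0\|_2 \ge M_n\varepsilon_n/2\}},
\]
where the indicator depends on the data only and thus leaves the posterior unchanged.

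For the indicator term I would take $\ee[\,\cdot\,]$ and apply Chebyshev's inequality together with Theorem \ref{thm:mean}, obtaining $\pp(\|\hat{\lambda}-\lambda_0\|_2 \ge M_n\varepsilon_n/2) \le 4\,\ee[\|\hat{\lambda}-\lambda_0\|_2^2]/(M_n^2\varepsilon_n^2) \lesssim M_n^{-2} \to 0$, since $\ee[\|\hat{\lambda}-\lambda_0\|_2^2] \lesssim \varepsilon_n^2$. For the posterior term I would exploit the explicit gamma posterior from Lemma \ref{lem:gamma}. Because the representation \eqref{lambda} is orthogonal across bins and $\hat{\psi}_k$ is the posterior mean of $\psi_k$,
\[
\ee_{\Pi_N}[\|\lambda-\hat{\lambda}\|_2^2 \mid X^{(n)}] = \Delta\sum_{k=1}^N \var_{\Pi_N}(\psi_k \mid X^{(n)}) = \Delta\sum_{k=1}^N \frac{\alpha + H_k}{(\beta+n\Delta)^2},
\]
using that the $\operatorname{G}(\alpha+H_k,\beta+n\Delta)$ variance equals $(\alpha+H_k)/(\beta+n\Delta)^2$. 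Markov's inequality on the posterior then bounds $\Pi_N(\|\lambda-\hat{\lambda}\|_2 \ge M_n\varepsilon_n/2 \mid X^{(n)})$ by $4\,\ee_{\Pi_N}[\|\lambda-\hat{\lambda}\|_2^2 \mid X^{(n)}]/(M_n^2\varepsilon_n^2)$.

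It remains to take the outer expectation and read off the rate. Using $\ee[H_k] = n\int_{B_k}\lambda_0(x)\dd x$, with $\lambda_0$ bounded on $[0,T]$ (Assumption \ref{lambda0}) and $\sum_k \int_{B_k}\lambda_0 = \Lambda([0,T]) < \infty$, one finds
\[
\ee\big[\ee_{\Pi_N}[\|\lambda-\hat{\lambda}\|_2^2 \mid X^{(n)}]\big] = \Delta\sum_{k=1}^N \frac{\alpha + n\int_{B_k}\lambda_0}{(\beta+n\Delta)^2} \lesssim \frac{N}{n} + \frac{N^2}{n^2},
\]
since $n\Delta \asymp n^{2h/(2h+1)} \to\infty$. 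With $N \asymp n^{1/(2h+1)}$ the dominant term is $N/n \asymp n^{-2h/(2h+1)} = \varepsilon_n^2$, so the left-hand side is $\lesssim \varepsilon_n^2$, whence the posterior term contributes $\lesssim M_n^{-2} \to 0$ and the proof closes. The main obstacle, if any, is the variance calculation in this last display: it is precisely here that the choice $N \asymp n^{1/(2h+1)}$ is needed to balance the ``posterior variance'' contribution $N/n$ against the squared bias $\varepsilon_n^2$ already controlled in Theorem \ref{thm:mean}; everything else is routine application of Chebyshev's and Markov's inequalities.
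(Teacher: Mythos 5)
Your proof is correct and follows essentially the same route as the paper: both arguments reduce the statement to bounding $\ee[\|\hat{\lambda}-\lambda_0\|_2^2]$ via Theorem \ref{thm:mean} and the expected posterior variance $\Delta\sum_k\ee[\var_{\Pi_N}(\psi_k\mid X^{(n)})]$ via the explicit gamma posterior, each of order $n^{-2h/(2h+1)}$. The only difference is cosmetic — you split the event by the triangle inequality and apply Markov's inequality twice, whereas the paper applies it once to $\ee_{\Pi_N}[\|\lambda-\lambda_0\|_2^2\mid X^{(n)}]$ and then uses the posterior bias--variance decomposition; the two computations are identical.
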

The first conclusion that follows from Theorem \ref{thm:post} is that the proposed Bayesian procedure is consistent: as the sample size $n\rightarrow\infty$, the posterior puts most of its mass on (shrinking) $L_2$-neighbourhoods around the true parameter $\lambda_0$. Furthermore, the rate $\varepsilon_n\asymp n^{-h/(2h+1)}$ in  Theorem \ref{thm:post} is the optimal posterior contraction rate for $h$-H\"older-regular intensity functions. 

Note that the posterior contraction rate in Theorem \ref{thm:post} does not involve a $\log n$ factor, that often appears when studying the frequentist asymptotics of non-parametric Bayesian procedures. The reason for this is that our proof does not appeal to the powerful and general-purpose machinery from \cite{ghosal00}, which might yield a superfluous log factor in the contraction rate. Instead, it relies on elementary and direct calculations and estimates.

Although not concerned with direct inference in Poisson point process models, a work related to ours is \cite{grant19}. There the authors study the problem of adaptively placing sensors along an interval to detect stochastically-generated events. A histogram-type Bayesian prior on the intensity function plays a significant role in the developments in that paper. The authors derive an $O(M^{2/3})$ bound on Bayesian regret in $M$ rounds, which in their work is comparable to the statements of Theorems \ref{thm:mean} and \ref{thm:post} that we gave above.

\subsection{Empirical Bayes for bin number selection}
\label{sec:ebayes}

According to the asymptotic results in Subsection~\ref{sec:asymptotics}, the proposed Bayesian approach is guaranteed to be consistent and asymptotically optimal for estimating an $h$-H\"older intensity function (with $0<h\le 1$) if the number of bins satisfies $N\asymp n^{1/(2h+1)}$. However, this choice of the hyperparameter $N$ depends on a proportionality constant. In practice the resulting performance of our Bayesian procedure may turn out to be suboptimal for a given sample size and given dataset, if the constant is not chosen appropriately. 
Essentially $N$ plays the role of a smoothing parameter. 
In general, choice of a smoothing parameter constitutes the biggest challenge in non-parametric estimation (see, e.g., \cite{loader99}). \cite{chaudhuri00} goes as far as to propose a method `agnostic' of such a choice. 
Frequentist papers \cite{henderson03} and \cite{leemis04}, that are related to our work, concentrate on asymptotics of the kernel estimator of the intensity function and do not provide specific practical guidance for selecting the number of bins.

The main idea behind our approach in this section is that the marginal likelihood can be viewed as model evidence in Bayesian statistics. Maximising it over $N$, roughly speaking, selects a model that is most compatible with the data at hand.  This can be seen as an instance of the well-known empirical Bayes method (see, e.g., \cite{efron10}).

The marginal likelihood is given by
\begin{equation}\label{eq:marginal_lik}
\begin{split}
\operatorname{ML}_N(X^{(n)})&=\int_{[0,\infty)^N} L( X^{(n)};\psi_1,\ldots,\psi_N ) \prod_{k=1}^N \pi(\psi_k)\dd\psi_1 \cdots \dd\psi_N\\
&=e^{Tn} \frac{\beta^{\alpha N}}{\Gamma(\alpha)^N} \prod_{k=1}^N \frac{\Gamma(\alpha+H_k)}{(n\Delta_k+\beta)^{\alpha+H_k}}.
\end{split}
\end{equation}
Viewed as a function $N \mapsto \operatorname{ML}_N (X^{(n)}) $ (with $\alpha$ and $\beta$ fixed), the marginal likelihood can thus be easily evaluated and maximised graphically. For numerical stability, it is advisable to work with $ \operatorname{LML}_N(X^{(n)}) = \log\operatorname{ML}_N(X^{(n)})$.  We study the behaviour of this procedure for selecting $N$ in simulation examples in Section \ref{sec:simulations}.

Alternatively, at first sight, the marginal likelihood can also be used to optimise the hyperparameters $\alpha,\beta$ of the prior (keeping $N$ fixed). Cf.~\cite{clayton87} for a somewhat similar idea in a different context than ours. However, we advise against this approach in our setting, as the resulting procedure is plagued by numerical problems. Instead, a numerically stable empirical Bayes procedure can be obtained by maximisation of the marginal likelihood over $\beta$ for a  fixed $\alpha$ (and $N$). The (unique) maximiser can be found upon setting the derivative of the criterion function with respect to $\beta$ to zero, which leads to the relation
\begin{equation}
\label{eq:beta}
\frac{\alpha}{\beta}=\frac{1}{N} \sum_{k=1}^N \frac{H_k+\alpha}{n\Delta_k+\beta}.
\end{equation}
This has an intuitive interpretation. Namely,  the left-hand side is the prior mean of $\psi_k$, whereas the right-hand side is the  average of the posterior means $\hat{\psi}_k$'s, see equation \eqref{psik}. Thus $\beta$ chosen according to the above rule implies a stability property, whereby the prior mean matches the (average) posterior mean.

As small values of hyperparameters correspond to diffuse priors, one may want to fix $\alpha$ and $\beta$ at small positive values (from \eqref{eq:beta} it follows that for small $\alpha$ also $\beta$ should be small).

\section{Gamma Markov chain prior}
\label{sec:gamma}

In this section we propose an alternative Bayesian approach to intensity function estimation. Ideas we use to that end have appeared in various works in the audio signal modelling literature (see, e.g., \cite{cemgil07}, \cite{dikmen10} and \cite{peeling08}). They were applied in the volatility estimation setting for diffusion processes in \cite{gugu18}, where a prior resembling the one below was employed as a conjugate prior for a Gaussian likelihood. 

Our starting point is the same as in Section~\ref{sec:likelihood}. Namely, as in equation \eqref{lambda}, we model the intensity function as piecewise constant on bins $B_k$ forming a partition of the interval $[0,T]$. However, instead of assuming that the coefficients $\psi_k$ of the intensity function $\lambda$ are a priori independent and gamma distributed, we postulate that under a prior they form a gamma Markov chain (GMC). This chain is defined as follows: introduce auxiliary variables $\zeta_k,k=2,\ldots,N$, use the time ordering $\psi_1,\zeta_2,\psi_2,\ldots,\zeta_N,\psi_N$, and set
\begin{equation}
\label{formula:prior}
\psi_1 \sim \g(\alpha_{1},\beta_{1}), \quad \zeta_{k} \mid \psi_{k-1} \sim \ig(\alpha_{\zeta},\alpha_{\zeta} \psi_{k-1}), \quad \psi_{k} \mid \zeta_{k} \sim \g\left(\alpha_{\psi},\frac{\alpha_{\psi}}{\zeta_{k}}\right), 
\end{equation}
where $k\in \{2,\ldots, N\}$. 
The name of the chain reflects the fact that its transition distributions are (inverse) gamma. The parameters $\alpha_1,$ $\beta_1$, $\alpha_{\zeta}$ and $\alpha_{\psi}$ are the hyperparameters of the GMC prior. The hyperparameters $\alpha_1,\beta_1$ allow one to `release' the chain at the origin. This is important to avoid possible edge effects in non-parametric estimation due to a strong specification of the prior at the time origin $t=0$. Next, a principal aim in using latent variables $\zeta_k$'s in \eqref{formula:prior} is to attain positive correlation between $\psi_k$'s. In the intensity function modelling context this induces smoothing across different bins. Depending on whether the ratio $\alpha_{\zeta}/\alpha_{\psi}$ is less than one, equal to one, or greater than one, the subsequence $\{\psi_k\}$ extracted from the gamma Markov chain exhibits in the limit $N\rightarrow\infty$ a decreasing trend, no trend, or an increasing trend, respectively; cf.~\cite{cemgil07}. This feature is attractive in case one possesses prior information on the monotonicity properties of the ``true" intensity function $\lambda$. Furthermore, large values of the hyperparameters $\alpha_{\zeta},\alpha_{\psi}$ correspond to a strong correlation between $\psi_k$'s and a slow decay of the autocorrelation function, see Figure \ref{fig:psi} for an illustration. The GMC prior with large values of $\alpha_{\zeta},\alpha_{\psi}$ has thus a stronger smoothing effect.
\begin{figure}
	\begin{center}
		\includegraphics[width=0.9\textwidth]{./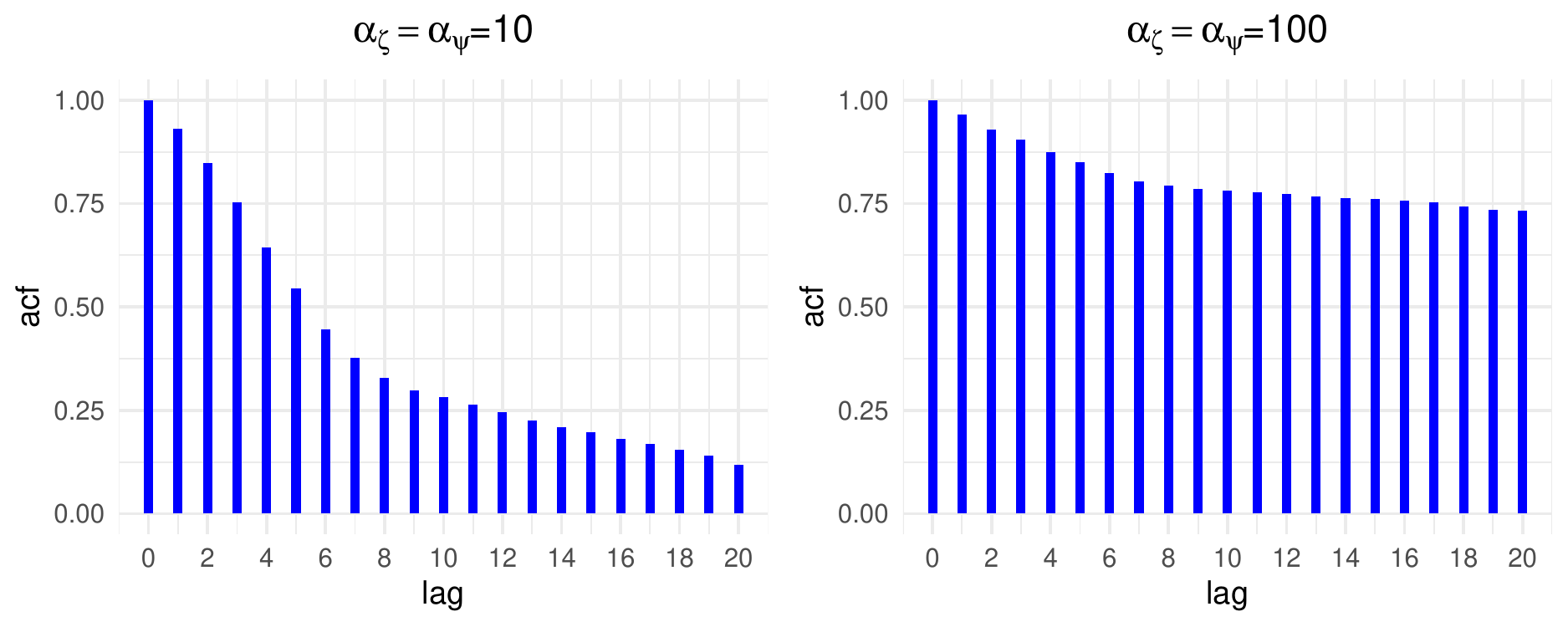}
	\end{center}
	\caption{Sample autocorrelation function of $\{\psi_k\}$ based on two realisations of the gamma Markov chain with $N=1000$. The left plot corresponds to a realisation with $\alpha_{\zeta}=\alpha_{\psi}=10$, the right one to $\alpha_{\zeta}=\alpha_{\psi}=100$. In both cases, $\alpha_1=4$, $\beta_1=1$.}
	\label{fig:psi}
\end{figure}	

\subsection{Sampling from the posterior}
The posterior distribution with the GMC prior is not available in closed form. This necessitates the use of some approximate posterior inference technique. In this subsection we give the details of the Gibbs sampler to draw (dependent) samples from the posterior.

\subsubsection{Full conditional distributions}
\label{sec:blankets}
By the Markov property in \eqref{formula:prior}, the joint distribution of $\{\psi_k\}$ and $\{\zeta_k\}$ factorises as
\begin{equation}
\label{formula:parameters}
p(\psi_1) \prod_{k=2}^N p(\zeta_{k} \mid \psi_{k-1}) p(\psi_k \mid \zeta_k) .
\end{equation}
Using this formula and \eqref{formula:prior}, it can be seen that the full conditional distributions are determined by
\begin{align*}
\zeta_k \mid \psi_k,\psi_{k-1} &\sim \ig\left(\alpha_{\zeta}+\alpha_{\psi}, {\alpha_{\zeta}}{\psi_{k-1}}+{\alpha_{\psi}}{ \psi_k}\right), \quad k=2,\ldots,N,\\
\psi_k \mid \zeta_{k+1},\zeta_{k} &\sim \g\left(\alpha_{\psi}+\alpha_{\zeta},\frac{\alpha_{\psi}}{\zeta_k}+\frac{\alpha_{\zeta}}{ \zeta_{k+1}}\right), \quad k=2,\ldots,N-1, 
\\
\psi_1 \mid \zeta_2 & \sim \g\left(\alpha_1+\alpha_{\zeta},  \beta_1 + \frac{\alpha_{\zeta}}{ \zeta_{2}} \right),  
\\
\psi_N \mid \zeta_N & \sim \g\left(\alpha_{\psi},\frac{\alpha_{\psi}}{\zeta_N}\right).
\end{align*}

\subsubsection{Gibbs sampler}
\label{sec:gibbs}

Combining the preceding results with formula \eqref{eq:likelih}, we deduce that  given the data $X^{(n)}$, the full conditional distributions of $\psi_k$, $\zeta_k$ are
\begin{align}
\zeta_k \mid \psi_k,\psi_{k-1} &\sim \ig\left(\alpha_{\zeta}+\alpha_{\psi}, {\alpha_{\zeta}}{\psi_{k-1}}+{\alpha_{\psi}}{ \psi_k}\right), \quad k=2,\ldots,N, \label{fullcond2} \\
\psi_k \mid \zeta_{k+1},\zeta_{k},X^{(n)} &\sim \g\left(\alpha_{\psi}+\alpha_{\zeta}+H_k,\frac{\alpha_{\psi}}{\zeta_k}+\frac{\alpha_{\zeta}}{ \zeta_{k+1}} + n\Delta_k\right), \quad k=2,\ldots,N-1, \label{fullcond1} \\
\psi_1 \mid \zeta_2,X^{(n)} & \sim \g\left(\alpha_1+\alpha_{\zeta}+H_1,  \beta_1 + \frac{\alpha_{\zeta}}{ \zeta_{2}} + n\Delta_1 \right), \label{fullcondstart} \\
\psi_N \mid \zeta_N,X^{(n)} & \sim \g\left(\alpha_{\psi}+H_N,\frac{\alpha_{\psi}}{\zeta_N}+n\Delta_N\right). \label{fullcondend}
\end{align}
The Gibbs sampler (see \cite{geman84} and \cite{gelfand90}) cycles through formulae \eqref{fullcond2}--\eqref{fullcondend} to generate (approximate) samples from the posterior distribution of $\{\psi_k\}$, $\{\zeta_k\}$ given the data $X^{(n)}$.  One can initialise the sampler e.g.\ by providing values for $\psi_1,\ldots,\psi_N$. 

\begin{rem}
	Comparison of formula \eqref{fullcond1} to Lemma~\ref{lem:gamma} shows that our two methods for estimating the intensity function match each other when diffuse limits $\alpha=\beta\rightarrow 0$ and $\alpha_{\psi}=\alpha_{\zeta}\rightarrow 0$ are taken, so that there is a type of continuous transition between the two methods.
\end{rem}

\subsubsection{Sampler initialisation}
\label{sec:init}

Quick convergence of the Gibbs sampler from Subsection~\ref{sec:gibbs} can be facilitated by a good initialisation of the Markov chain. In our simulations we obtained starting values for $\psi_1,\ldots, \psi_N$ by drawing from their posterior distribution based on the independent gamma prior. However, in the simulation examples we considered, even in those cases when we used rather poor initial values for the sampler, it stabilised quite quickly to the `correct' part of the parameter space (results not shown in this paper).

\subsection{Hyperparameters}

Assume the number of bins $N$ is fixed. The GMC prior in \eqref{formula:prior} depends on hyperparameters $\alpha_1,$ $\beta_1$, $\alpha_{\zeta}$ and $\alpha_{\psi}$. In practice we recommend to use a diffuse prior on $\psi_1$. Hyperparameters $\alpha_1,\beta_1$ can be either both fixed manually, or obtained in a data-driven way from formula \eqref{eq:beta}.
There remain two other hyperparameters $\alpha_{\zeta}$, $\alpha_{\psi}$, which we suggest to equip with a prior, and incorporate updates for these hyperparameters in the Gibbs sampler derived in Subsection~\ref{sec:gibbs}; cf.~\cite{dikmen09}. Taking for concreteness $\alpha_{\zeta}=\alpha_{\psi}$ (we will write $\alpha$ for simplicity, though this leads to a clash with our notation in Section~\ref{sec:likelihood} in the case of the independent gamma prior) and denoting this prior by $\pi$, we find that the joint density of $\alpha$, $\{\psi_k\}$, $\{\zeta_k\}$ is $\pi(\alpha) p(\psi_1) \prod_{k=2}^N p(\zeta_{k} \mid \psi_{k-1}) p(\psi_k \mid \zeta_k)$.
Using this formula, the unnormalised full conditional density of $\alpha$ given the remaining parameters $\{\psi_k\}$, $\{\zeta_k\}$ is seen to be
\begin{multline*}
q(\alpha) = \pi(\alpha) \times\left(\frac{\alpha^{\alpha}}{\Gamma(\alpha)}\right)^{2(N-1)}  \times \prod_{k=2}^N \left(\psi_{k-1} \psi_k \zeta_k^{-2}\right)^{\alpha} 
\times \exp\left(-\alpha\sum_{k=2}^N\frac{1}{\zeta_k}( {\psi_{k-1}} + {\psi_k})\right).
\end{multline*}
The corresponding normalised density is nonstandard. Hence the full conditional of $\alpha$ is not easily accessible. Therefore we will use a Metropolis-within-Gibbs step (see \cite{tierney94}) to update the parameter $\alpha$ when running the Gibbs sampler from Subsection~\ref{sec:gibbs}.

As $\alpha$ is nonnegative, we reparametrise $\alpha$ as $\widetilde{\alpha}=\log(\alpha)$ and note that the unnormalised full conditional density of $\widetilde{\alpha}$ is $\widetilde{\alpha} \mapsto e^{\widetilde{\alpha}}q(e^{\widetilde{\alpha}})=\widetilde{q}(\widetilde{\alpha})$. Once one designs an update algorithm for $\widetilde{\alpha}$, samples for $\alpha$ can be obtained by exponentiation.

We propose to use a Gaussian random walk proposal on $\widetilde{\alpha}$. As a prior on the parameter $\alpha$, we recommend to use a distribution with not too light right tail, next to having sufficient mass near zero. The former because large values of $\alpha$ may be necessary to adequately regularise the non-parametric estimation problem if the number of bins is large. On the other hand, if the prior puts mass close to zero, then the prior allows the method to choose a model similar to the independent gamma prior.

\subsection{Bin number selection}

There remains a choice to be made for the hyperparameter $N$, i.e.\ the number of bins. As with our first approach using independent gamma priors on the coefficients $\psi_k$'s, here too the bin number can in principle be optimised via the marginal likelihood. However, unlike our first approach, the marginal likelihood is not available in closed form. Although for any fixed $N$ it can be estimated from the posterior simulation output (see, e.g., \cite{chib95}, \cite{chib01} and references therein), this is far from trivial. However, as we demonstrate in Section~\ref{sec:simulations}, the inferences are quite stable with respect to the choice of $N$. In fact, for a fixed value of $N$, the GMC method will balance the amount of smoothing by tuning the hyperparameter $\alpha$ from the data. Such a shift from the discrete model selection problem to tuning a single continuous hyperparameter is often advantageous from the computational point of view.

In the examples which deal with small or moderate size datasets (of several hundred Poisson points), we use the rule-of-thumb
\begin{equation}
\label{eq:rule}
N = \min\left(50, \left \lceil \frac{\mathcal{H}}{4} \right \rfloor \right).
\end{equation}
Here
\[
\mathcal{H}=\sum_{j=1}^n \sum_{i=1}^{m_j} \ind_{\{X_{ij}\in [0,T]\}}
\]
is the total number of Poisson points. It is advisable not to make the bins $B_k$ too small. While increasing $\alpha$ gives to realisations from the prior more regularity, the exact trade-off between this and the number of bins is difficult to quantify. The upper truncation with $50$ in \eqref{eq:rule} is somewhat arbitrary. It was sufficient to obtain convincing results in our simulation examples. This upper bound can be easily modified to a larger number and does not involve a significant computational overhead. Cf.~our results in Subsection~\ref{sec:scalability}.

\section{Simulation examples}
\label{sec:simulations}

In this section we study the performance of our non-parametric Bayesian procedures on simulated data examples. We implemented our procedures in {\bf Julia}, see \cite{bezanson17}. The computer code and datasets for replication of our examples forms part of the PointProcessInference package, see \cite{pppjl}\footnote{For additional details see \url{https://github.com/mschauer/PointProcessInference.jl}}. For plotting we used functionalities of the {\bf{ggplot2}} package (see \cite{wickham09}) in {\bf R} (see \cite{R17}). The computations were performed on a MacBook Pro, with a 2.7GHz Intel Core i5 with 8 GB RAM. 

Full specifications used for synthetic data generation and posterior inference are given in each example  below. For the method based on the GMC prior, we ran the Gibbs sampler for $30000$ iterations. The first half of the generated posterior samples was then discarded as a  burn-in, and the posterior inference was based on the second half of the samples. A Gaussian random walk proposal was used to update the hyperparameter $\alpha$, with variance scaled in such a way so as to ensure the acceptance rate between $25\%-50\%$ in the Metropolis-within-Gibbs step.

In general, in the ensuing plots the ``true" intensity function is represented by a solid red line. The posterior mean is given by a solid black line, while $95\%$ marginal credible bands are shaded in light blue.

\subsection{Oscillating exponential function}
\label{subsec:exponential}
In our first example, we consider 
\[
\lambda_0(x)=2 e^{-x/5} (5+ 4\cos(x)) , \quad x\in[0,10].
\]
A principal challenge in inferring this function consists in the fact that it takes small values in the middle part of its domain and has a slope of changing sign.  In Figure~\ref{fig:example1n1} we plot our estimation results with sample size $n=1$ (the total number of Poisson points was $\mathcal{H}=46$). In this figure, as well as in subsequent ones, the rug plot on the top displays the event times. For the independent gamma prior we took $\alpha=\beta=0.1$. For the GMC prior we took $\alpha_1=\beta_1=0.1$ and $\alpha\sim\operatorname{Exp}(0.1)$. For the independent gamma prior, the optimal choice for $N$ obtained by maximising the  log marginal likelihood was $N=4$. See the top panels in  Figure~\ref{fig:example1n1}. In Figure~\ref{fig:expcos-mll} a plot of this log marginal likelihood (ignoring the irrelevant term $e^{Tn}$) versus $N$ is shown. Quick convergence of the Gibbs sampler can be seen from the trace and autocorrelation plots for several parameters in Figure~\ref{fig:sampler}.\begin{figure}
	\begin{center}
		\includegraphics[width=0.9\textwidth]{./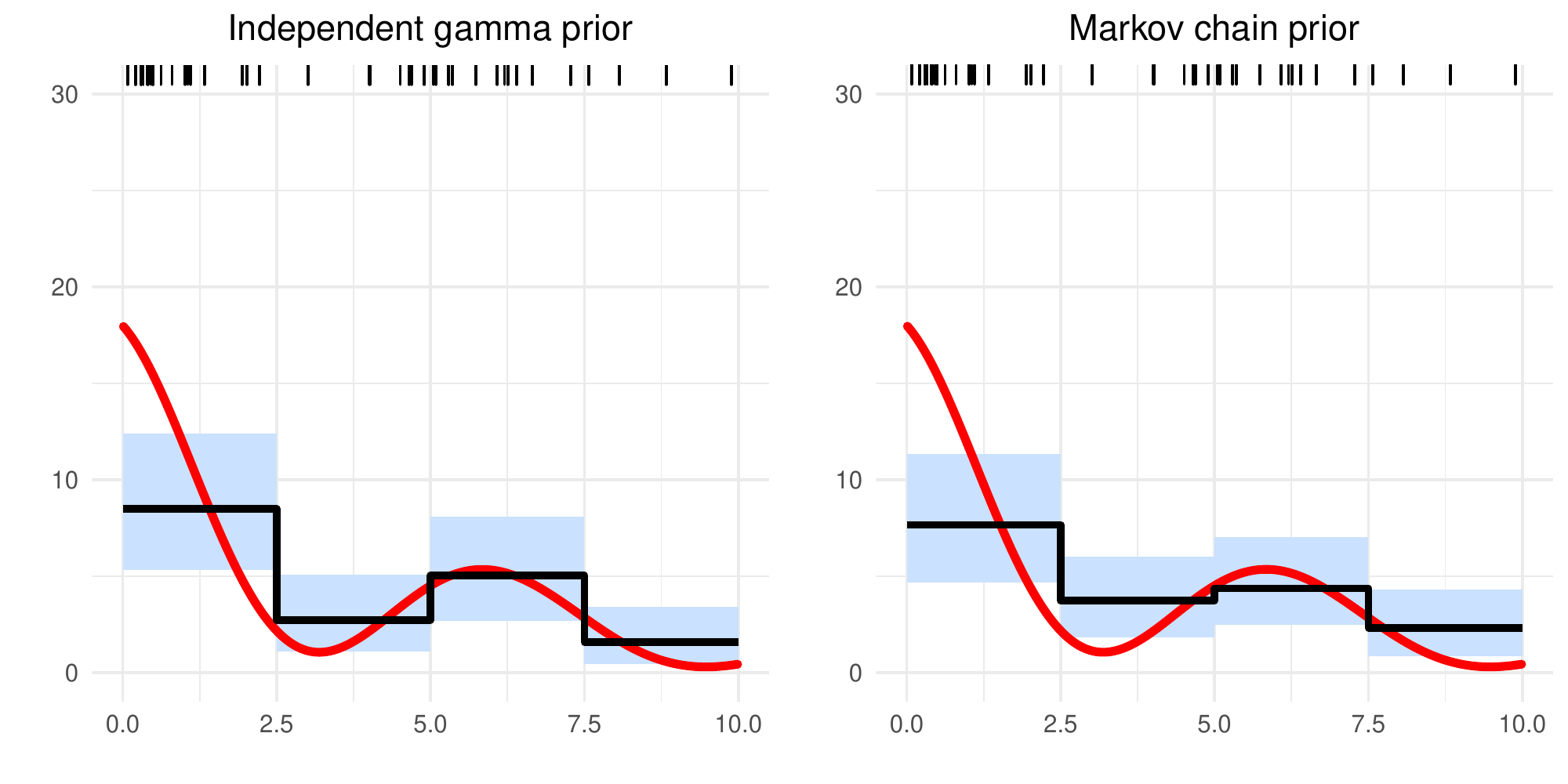}
		\includegraphics[width=0.9\textwidth]{./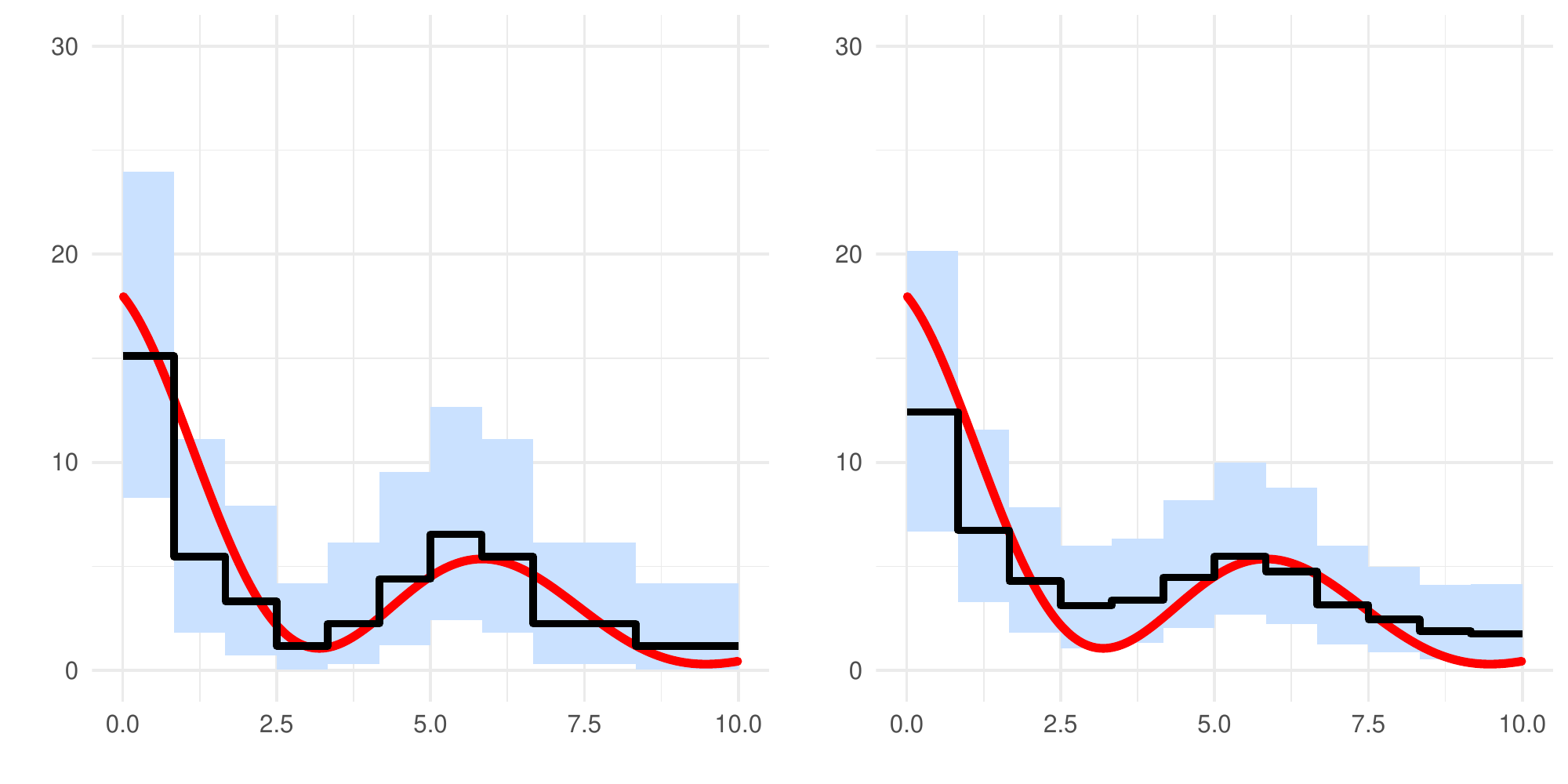}
		\includegraphics[width=0.9\textwidth]{./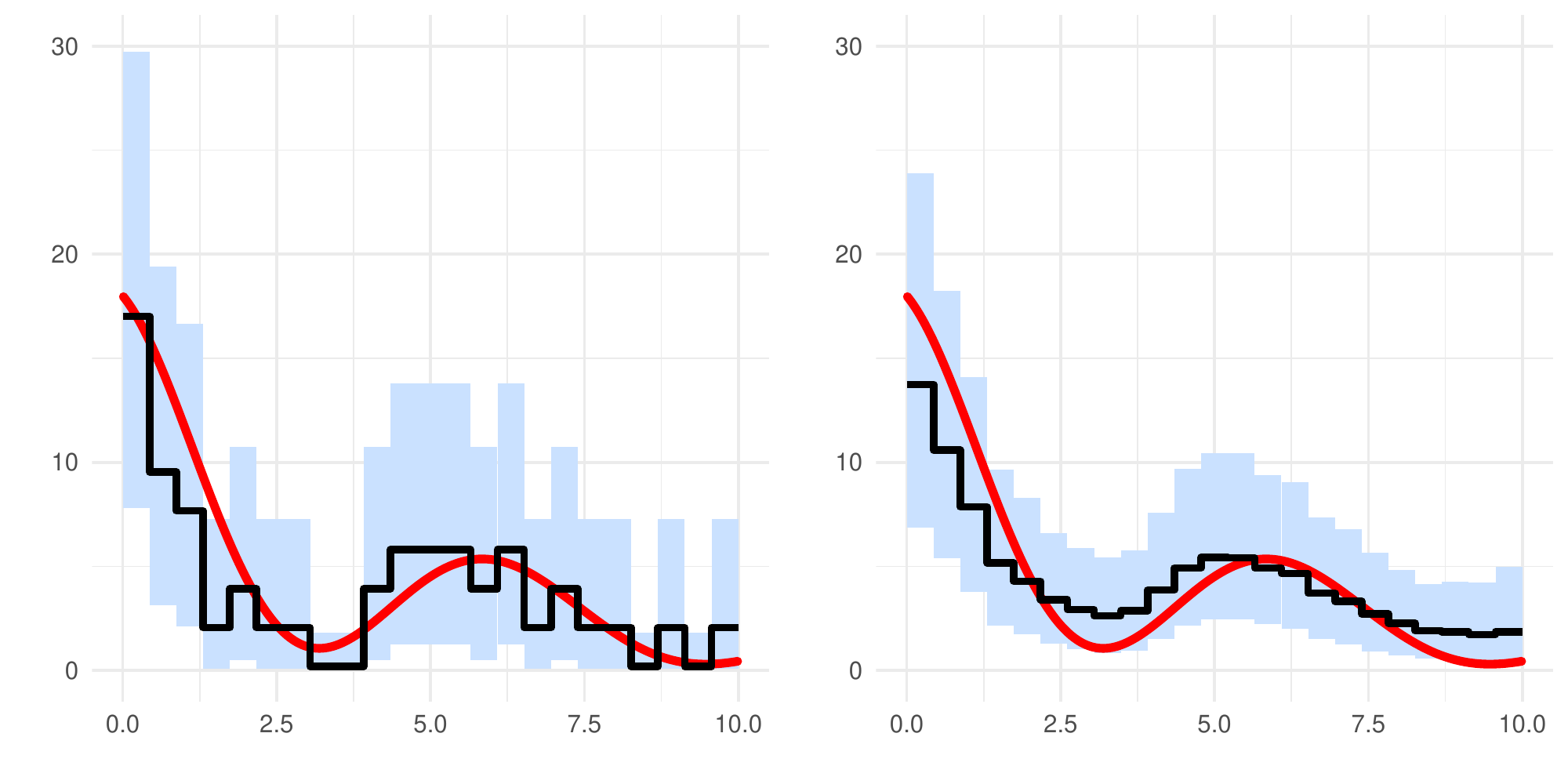}
	\end{center}
	\caption{Estimation results for the oscillating exponential function $\lambda_0$ from Subsection \ref{subsec:exponential} with $n=1$. In the top row, $N=4$ is selected via maximising the marginal likelihood as in Subsection \ref{sec:ebayes}, in the middle row $N=12$ via \eqref{eq:rule}, and in the bottom row $N=\mathcal{H}/2=23$. The prior settings were: $\alpha=0.1$ and $\beta$ determined from \eqref{eq:beta} for the independent gamma prior, and $\alpha_1=\beta_1=0.1$ and $\alpha\sim\operatorname{Exp}(0.1)$ for the GMC prior.}
	\label{fig:example1n1}
\end{figure}

\begin{figure}
	\begin{center}
		\includegraphics[width=0.9\textwidth]{./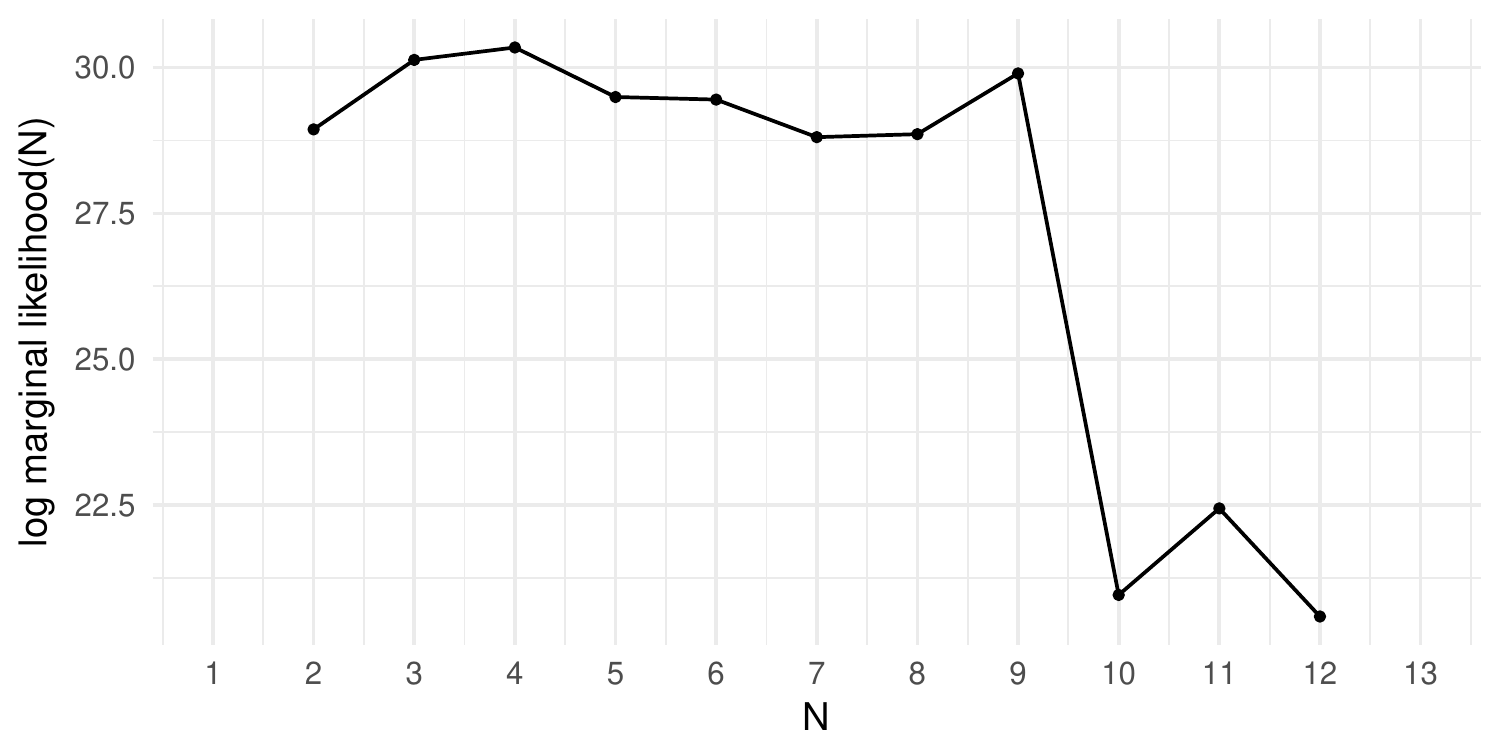}
	\end{center}
	\caption{Simulation example of the oscillating exponential function $\lambda_0$ from Subsection \ref{subsec:exponential} with $n=1$: logarithm of the marginal likelihood as a function of $N$ (up to an additive constant independent of $N$), with $\alpha=\beta=0.1$. The maximum is attained at $N=4$.}
	\label{fig:expcos-mll}
\end{figure}

\begin{figure}
	\begin{center}
		\includegraphics[width=0.9\textwidth]{./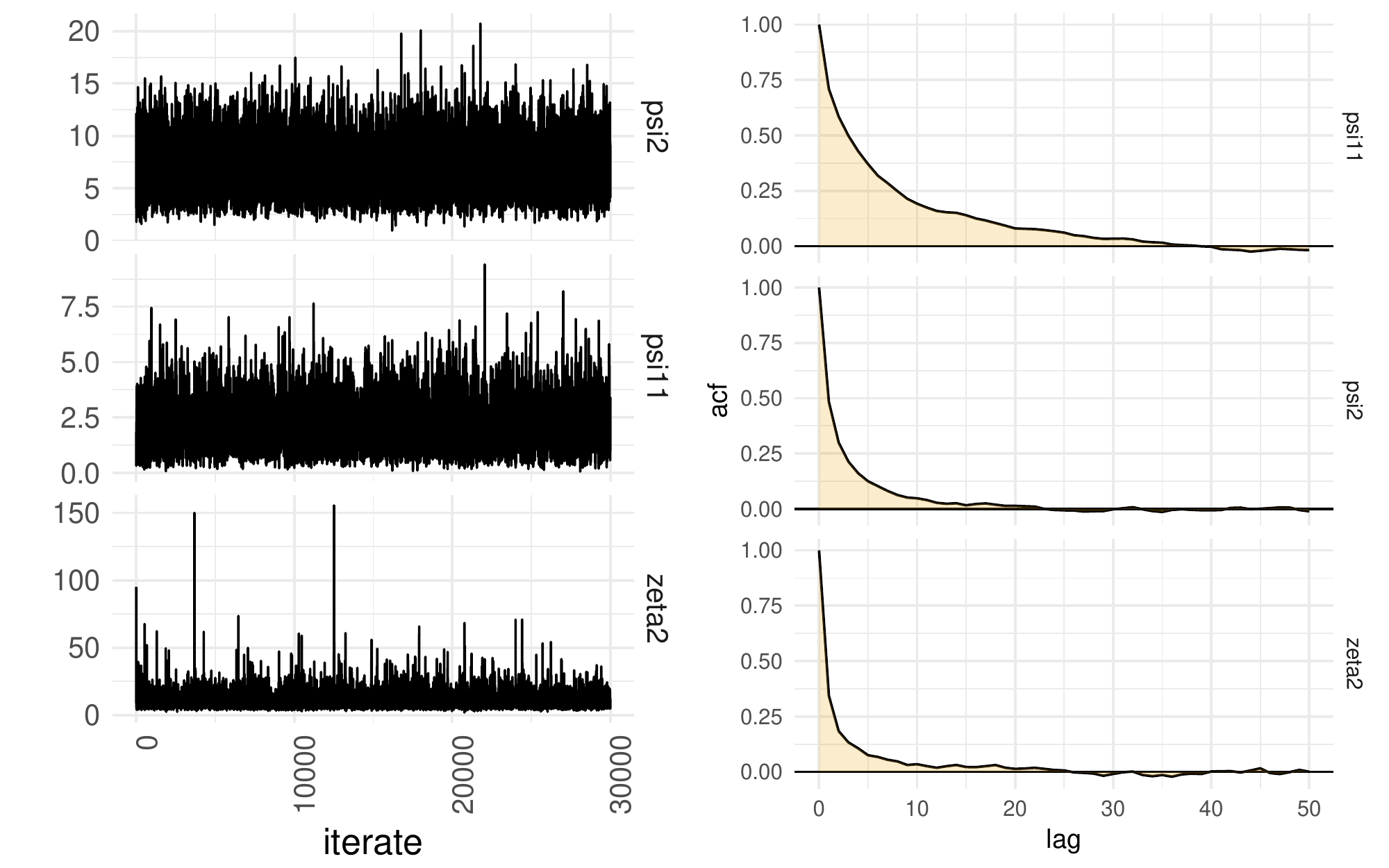}
		\includegraphics[width=0.9\textwidth]{./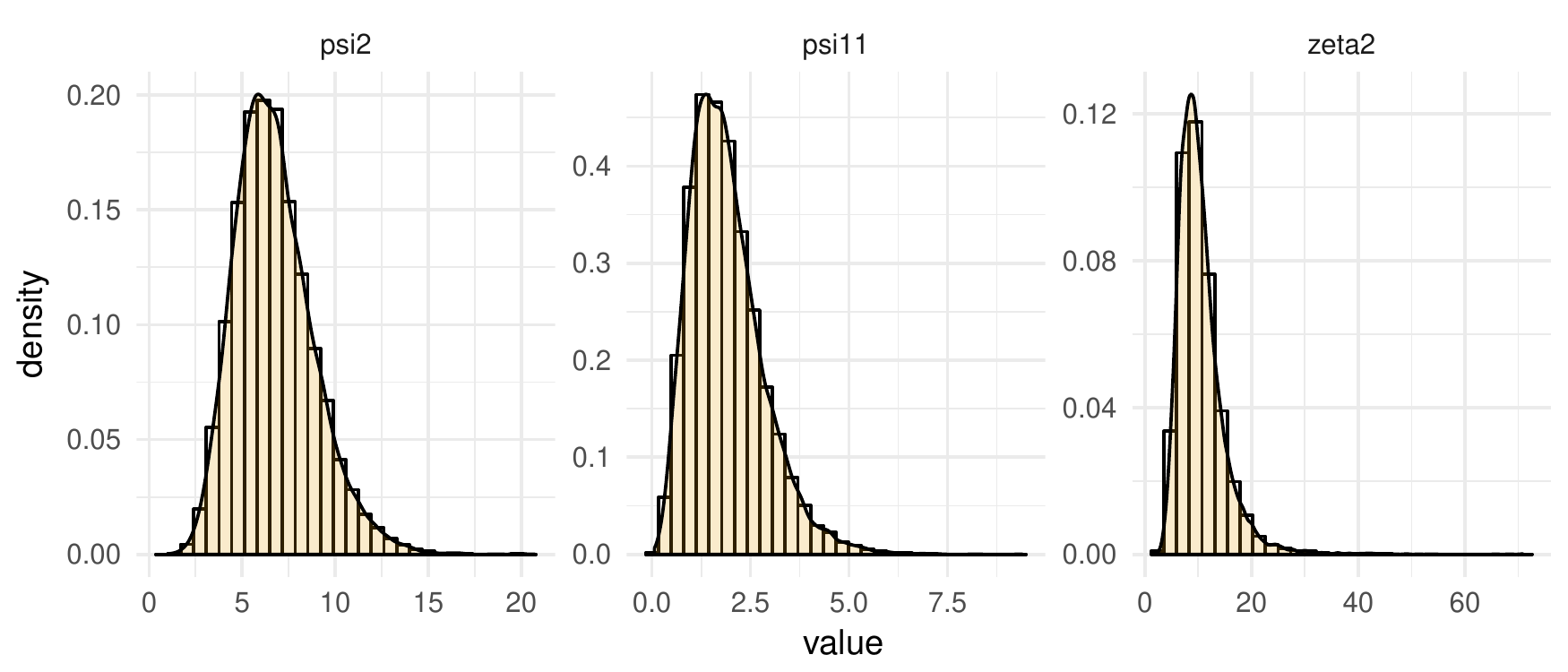}
	\end{center}
	\caption{Posterior simulation output for estimating the oscillating exponential function $\lambda_0$ from Subsection \ref{subsec:exponential}, with $n=1$. Trace plots, autocorrelation plots and histograms for $\psi$ and $\zeta$. For the histograms and autocorrelation plots the first half of the samples has been discarded as burn-in.}
	\label{fig:sampler}
\end{figure}

\begin{figure}
	\begin{center}
		\includegraphics[width=0.9\textwidth]{./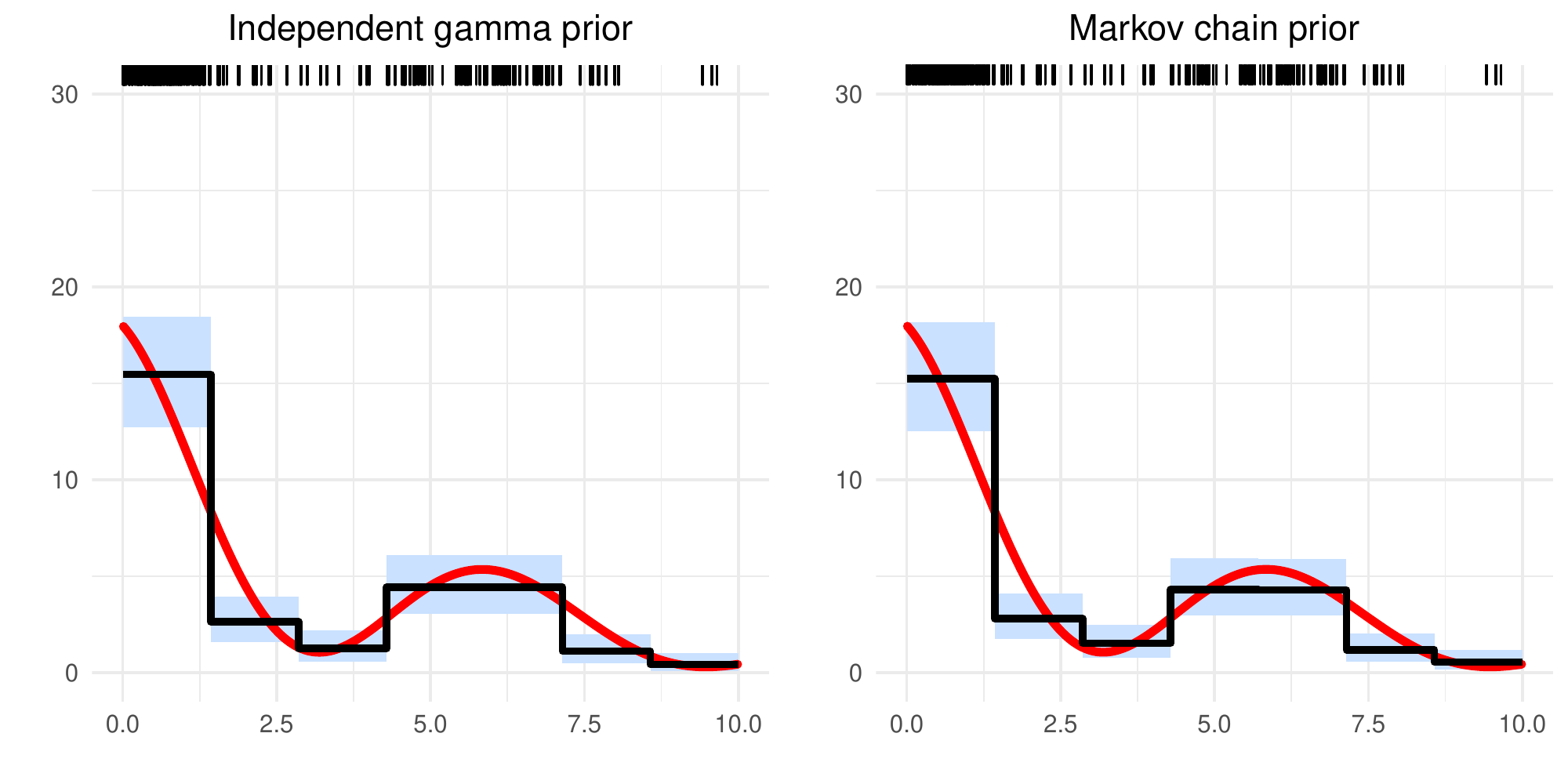}
		\includegraphics[width=0.9\textwidth]{./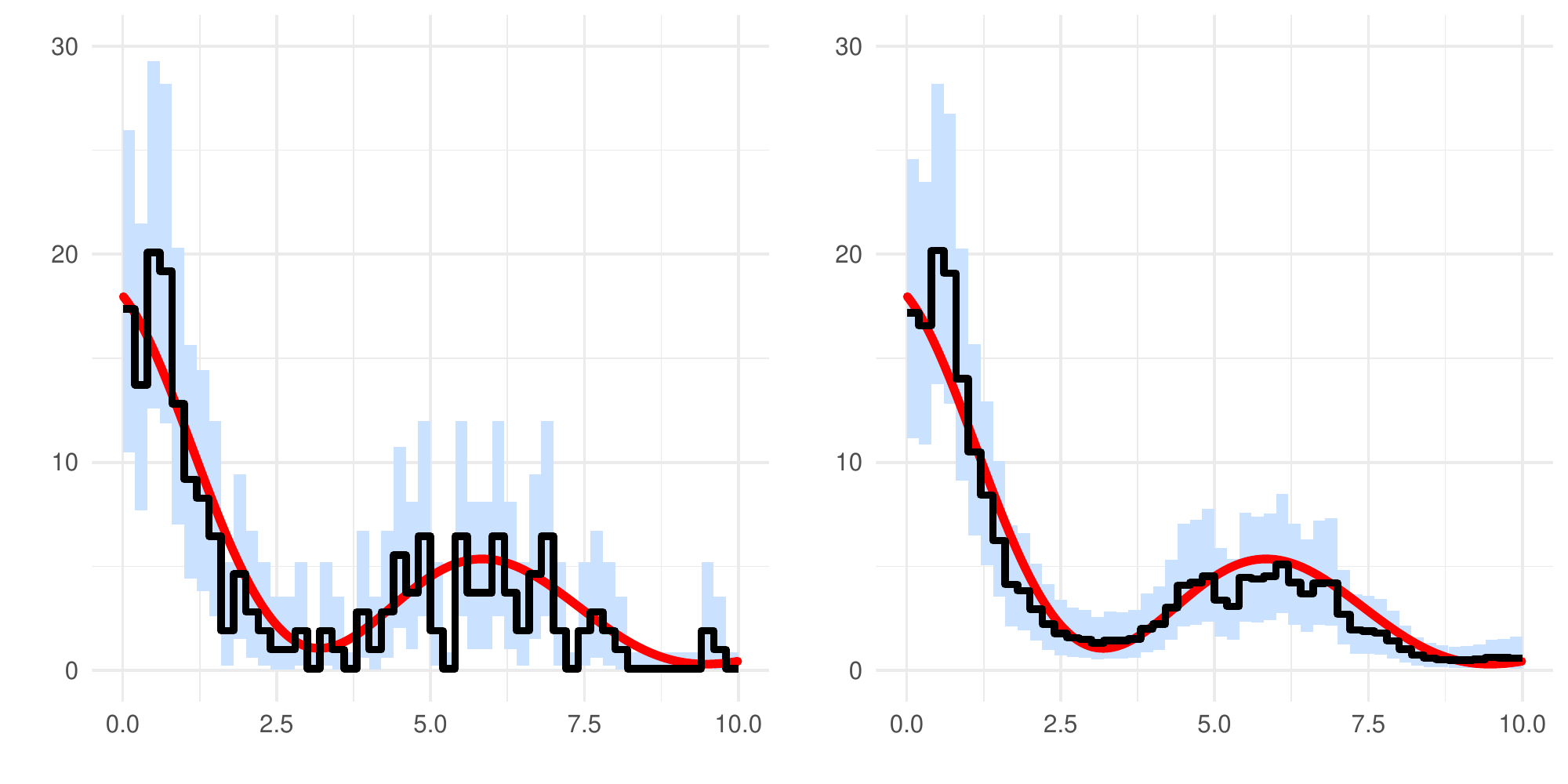}
	\end{center}
	\caption{Estimation results for the oscillating exponential function $\lambda_0$ from Subsection \ref{subsec:exponential} with $n=5$. The top plot corresponds to the method based on the independent gamma prior, with $N=7$ chosen via the empirical Bayes method. The bottom plot corresponds to the method based on the GMC prior, with $N=50$ chosen via \eqref{eq:rule}. Other settings are as in the case $n=1$ as in Figure~\ref{fig:example1n1}.}
	\label{fig:example1n5}
\end{figure}
Estimation results for the sample size $n=5$ are reported in Figure~\ref{fig:example1n5} (the total number of Poisson points was $\mathcal{H}=215$). Here in the top panel $N$ for the independent gamma prior was determined using the empirical Bayes method.  

The following conclusions can be gleaned from the simulation results:
\begin{itemize}
	\item Performance of the empirical Bayes method for selecting $N$ is not particularly encouraging. For moderate sample sizes that we considered, it oversmooths by selecting a visually too small $N$.
	\item For larger $N$, posterior means obtained with the independent gamma prior tend to show more fluctuation than those obtained with the GMC prior.
	\item For larger $N$, marginal posterior bands with the GMC method tend to be narrower than those obtained with the independent gamma prior.
	\item Inferential conclusions with the GMC prior, as reflected in marginal posterior bands, appear to be robust with respect to the choice of $N$, provided this is not chosen exceedingly large or small. In particular, the rule-of-thumb \eqref{eq:rule} appears to work in practice. On the other hand, this robustness property is not shared by the method based on the independent gamma prior.
\end{itemize}

If the independent gamma prior for $\psi_k$ is chosen in a more informative way, determining $N$ via the empirical Bayes method seems to perform better compared to the case when a diffuse gamma prior is used on $\psi_k$. In Figure \ref{fig:example1n1ebayes} we chose $\alpha=2$ and $\beta=1$, which led to the optimal value for $N$ being equal to $9$ (which is close to the value $12$, chosen by the rule-of-thumb \eqref{eq:rule} for the GMC prior). Interestingly enough, in this case the posterior band obtained with the independent gamma prior is somewhat narrower than the one with the GMC prior, although visually the latter reflects uncertainty in estimation results better than the former. A difficulty with the empirical Bayes method is that a sensible informative prior on $\psi_k$ might not be available in practice.
\begin{figure}
	\begin{center}
		\includegraphics[width=0.9\textwidth]{./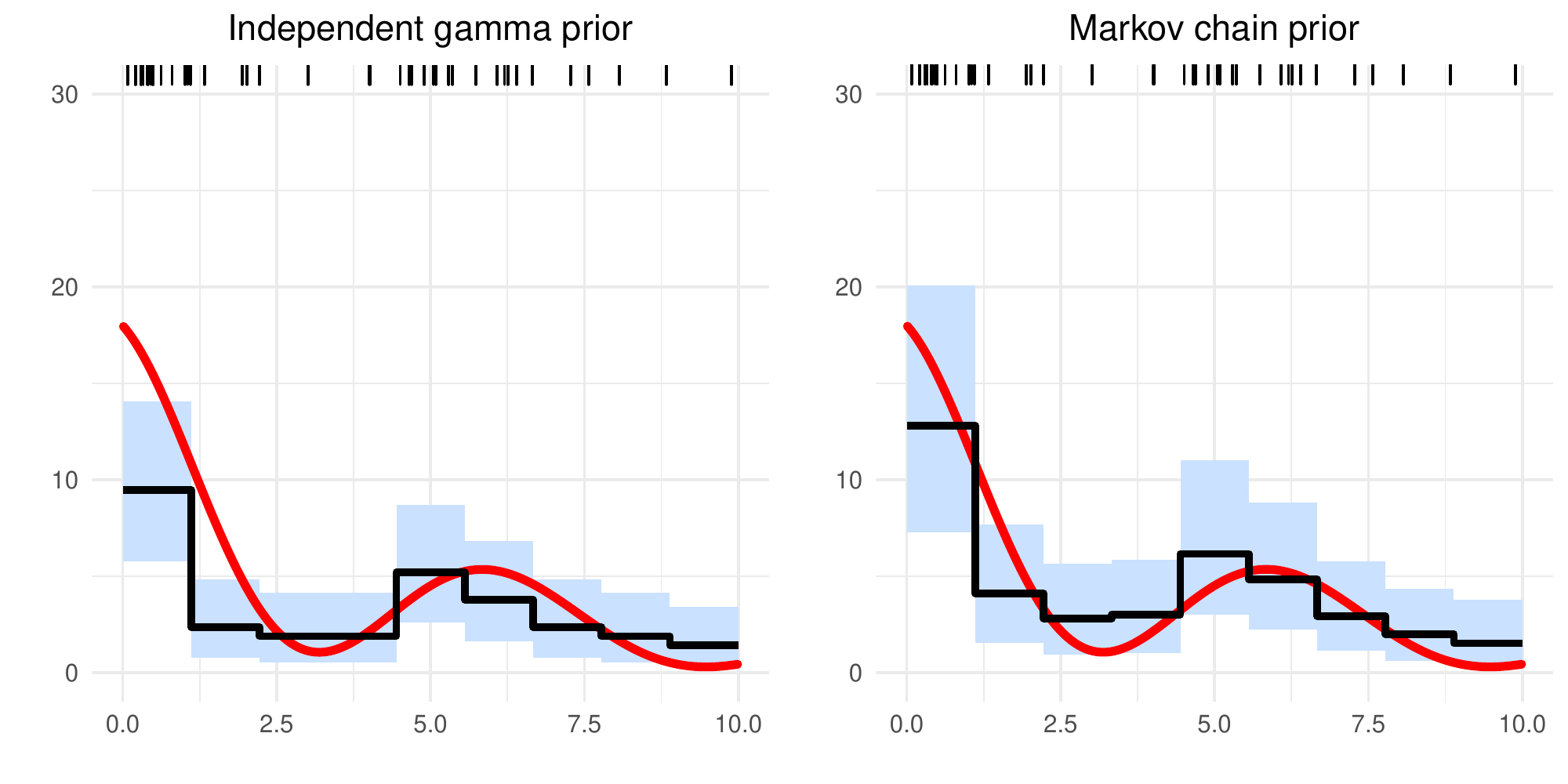}
		\includegraphics[width=0.9\textwidth]{./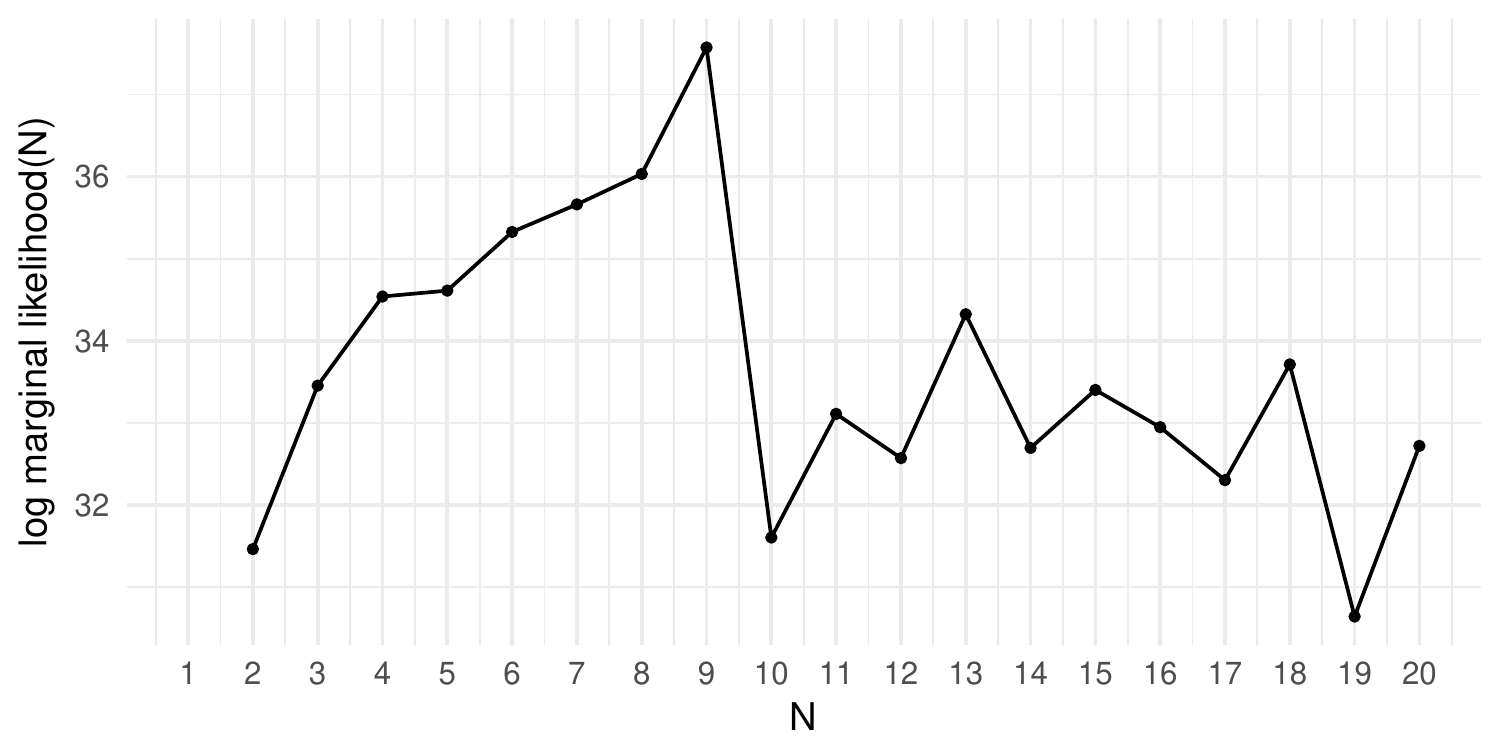}
	\end{center}
	\caption{Estimation results for the oscillating exponential function $\lambda_0$ from Subsection \ref{subsec:exponential} with $n=1$. The prior distribution on each $\psi_k$ is taken to be $\operatorname{G}(2,1)$.  The log marginal likelihood is plotted as a function of $N$ in the bottom panel (up to an additive constant independent of $N$). The optimal number of bins chosen via the empirical Bayes method is $N=9$.}
	\label{fig:example1n1ebayes}
\end{figure}

\subsubsection{Scalability}
\label{sec:scalability}
Our next goal was to illustrate scalability of our methods with big data. In Figures~\ref{fig:example1n4000ebayes}
and \ref{fig:example1n4000} we plot estimation results with a very large sample size $n=4000$, that resulted in $\mathcal{H}=177781$ Poisson points (we omit posterior means from the plots, as they obfuscated the resulting (narrow) marginal credible bands). As noted e.g.\ in \cite{adams09} and \cite{rao11}, samples of this size are far beyond the computational reach of their methods. On the other hand, both our methods perform excellently in terms of estimation quality. That the method based on the independent gamma prior is very fast (for Figure~\ref{fig:example1n4000ebayes} we used $N=48$) comes as no surprise. We also timed the method based on the GMC prior, specifically its part for running the Gibbs sampler, which task was completed in ca.~4 seconds for $N=200$ (used for Figure~\ref{fig:example1n4000}). 

In this example, given the huge amount of data, it makes sense to experiment with a large number of bins. We took $N=1000$ and compared two choices for a prior on $\alpha$ in Figure \ref{fig:example1n4000-1000bins}. In the right panel we took the L\'evy distribution (which is just the $\operatorname{IG}(1/2,1/2)$ distribution), while in the left panel we used the $\operatorname{Exp}(10)$ distribution as prior on $\alpha$. These distributions are very different, but only after zooming in and carefully studying the credible bands can one see a small difference: the band produced with the  L\'evy distribution appears to be smoother. These results were expected: a larger value for $\alpha$ induces a stronger positive correlation in a realisation from the gamma Markov chain prior. With $1000$ bins, the computing time was approximately $9$ seconds.  We conclude that our methods scale well with sample size.

\begin{figure}
	\begin{center}
		\includegraphics[width=0.9\textwidth]{./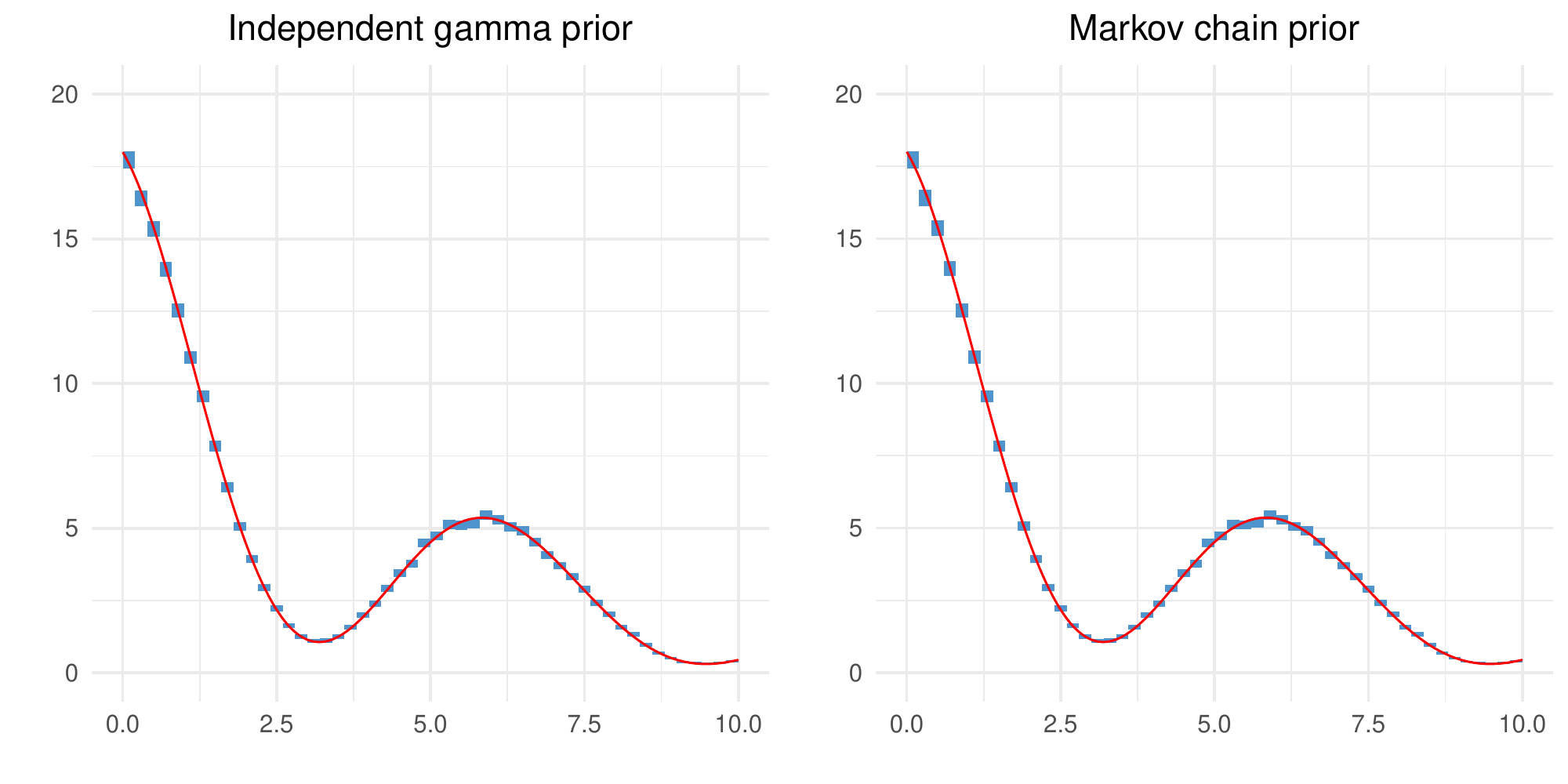}
		\includegraphics[width=0.9\textwidth]{./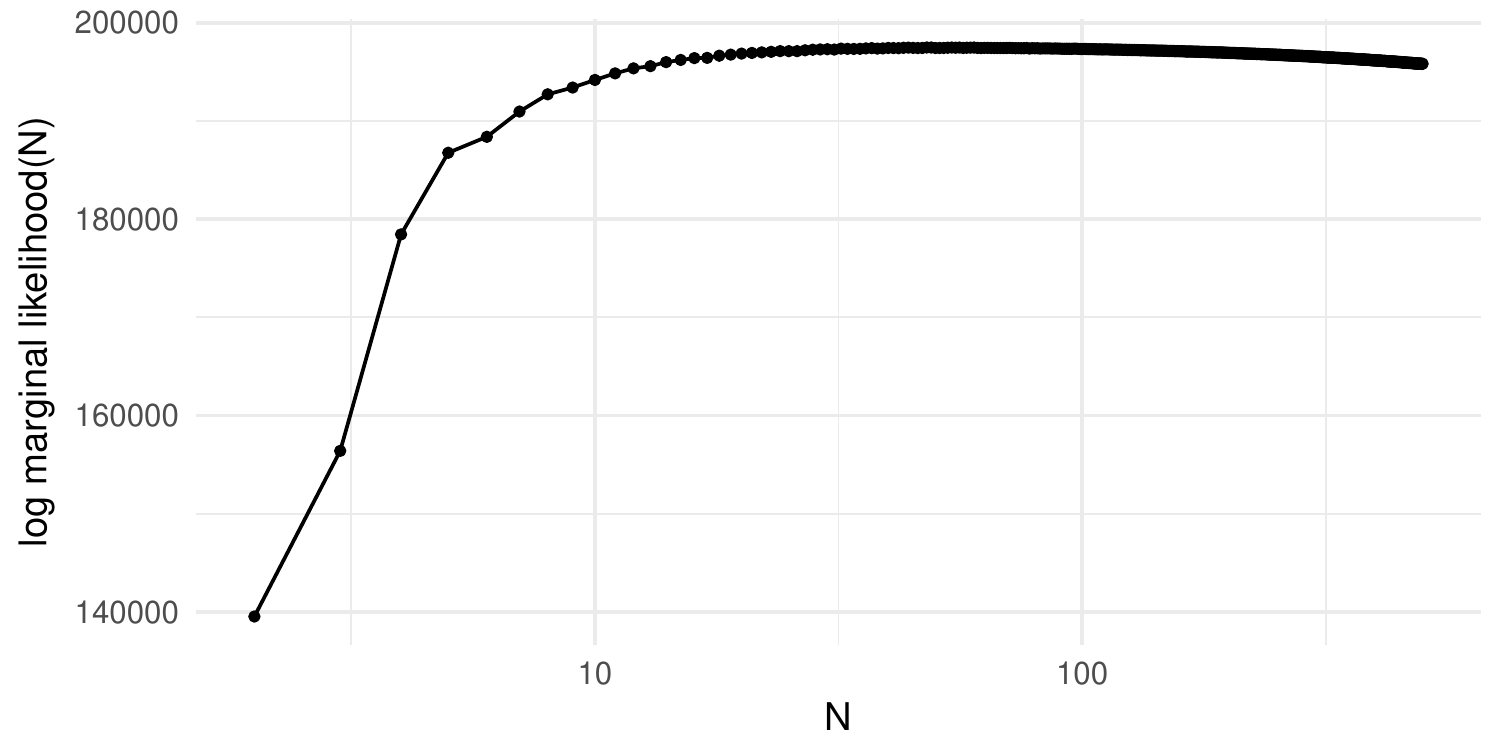}
	\end{center}
	\caption{Estimation results for the oscillating exponential function $\lambda_0$ from Subsection \ref{subsec:exponential} with $n=4000$. The bin number $N=48$ chosen via the empirical Bayes method, other hyperparameters determined as in the case $n=1$. The top panel gives the plots of $\lambda_0$ and the $95\%$ marginal credible bands. The bottom panel gives the plot of the log marginal likelihood as a function of $N$ (up to an additive constant independent of $N$).}\label{fig:example1n4000ebayes}
\end{figure}

\begin{figure}
	\begin{center}
		\includegraphics[width=0.9\textwidth]{./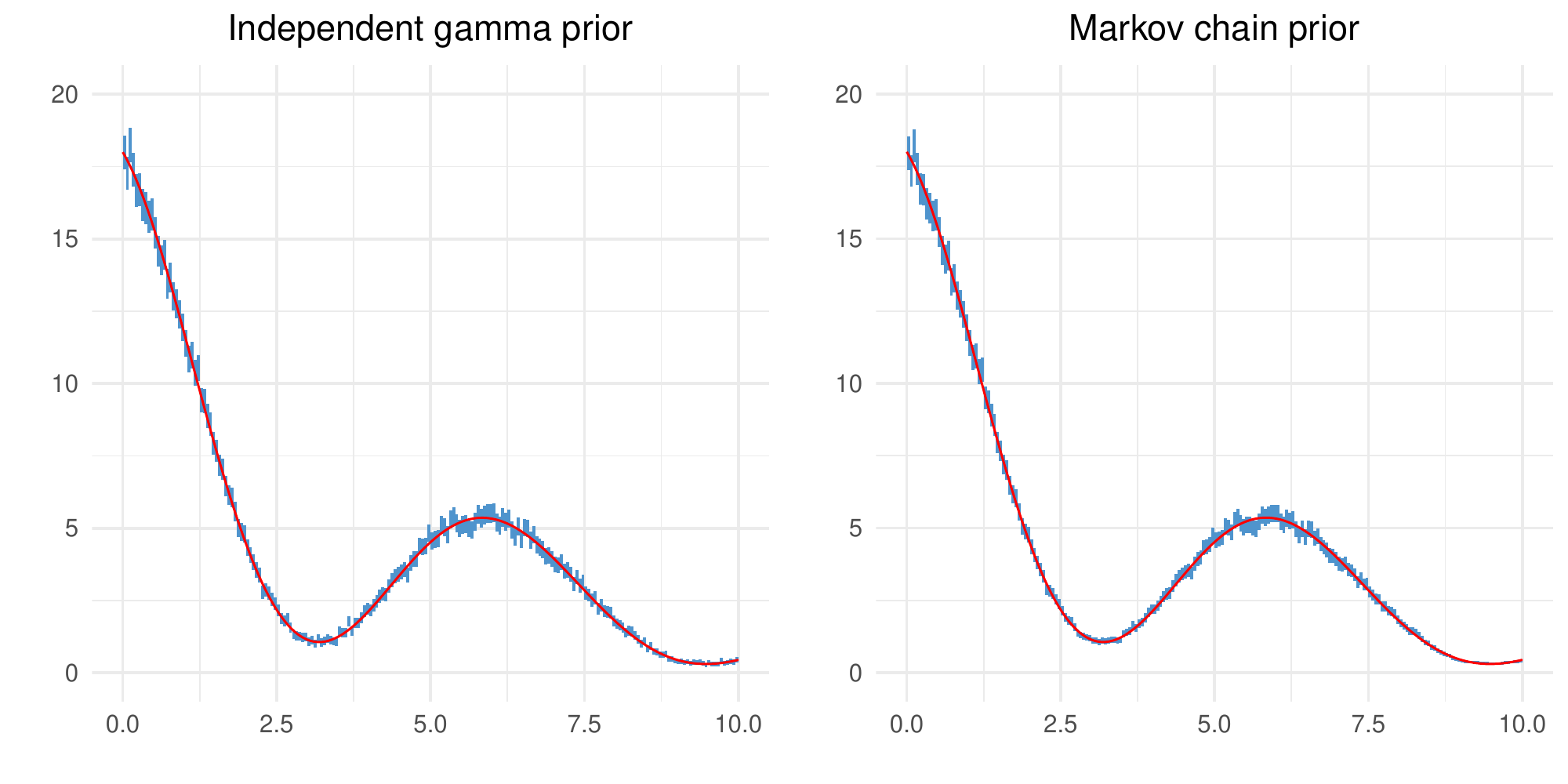}
	\end{center}
	\caption{Estimation results for the oscillating exponential function $\lambda_0$ from Subsection \ref{subsec:exponential} with $n=4000$. $N=200$, other hyperparameters determined as in the case $n=1$.}
	\label{fig:example1n4000}
\end{figure}

\begin{figure}
	\begin{center}
		\includegraphics[width=0.45\textwidth]{./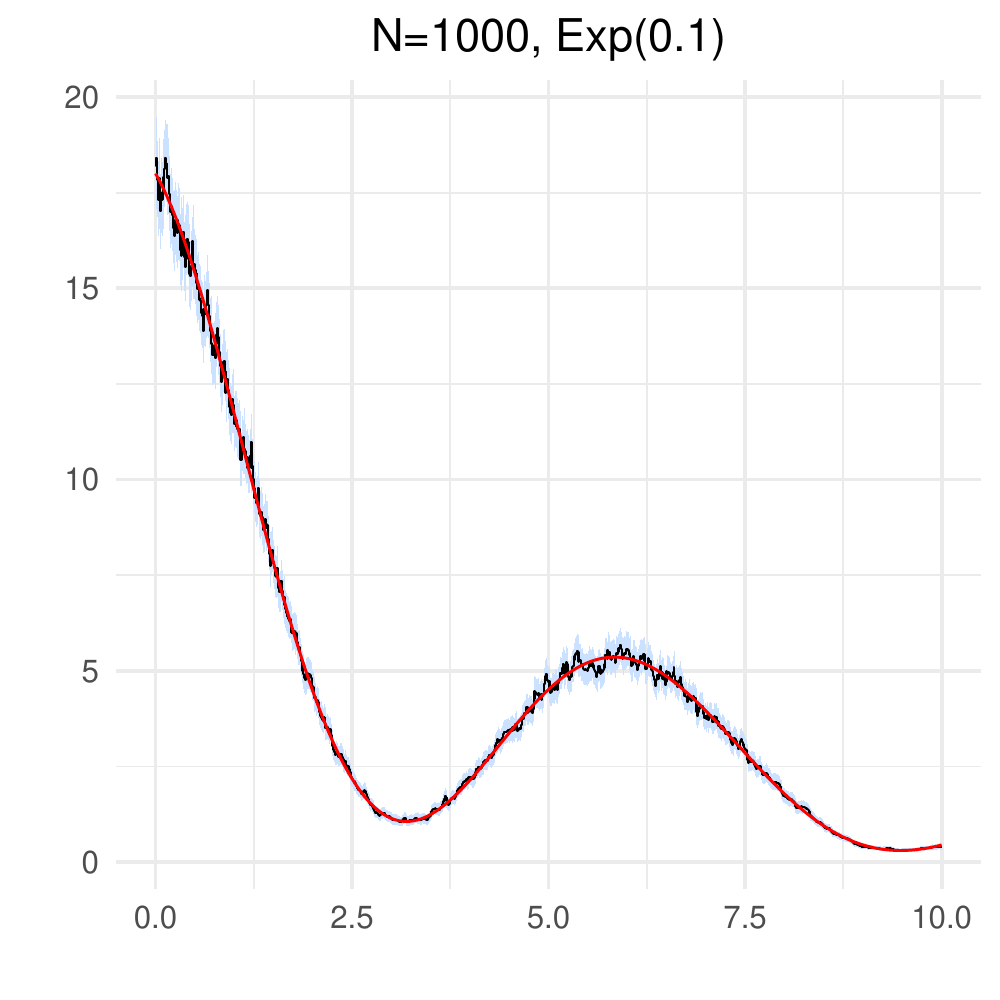}
		\includegraphics[width=0.45\textwidth]{./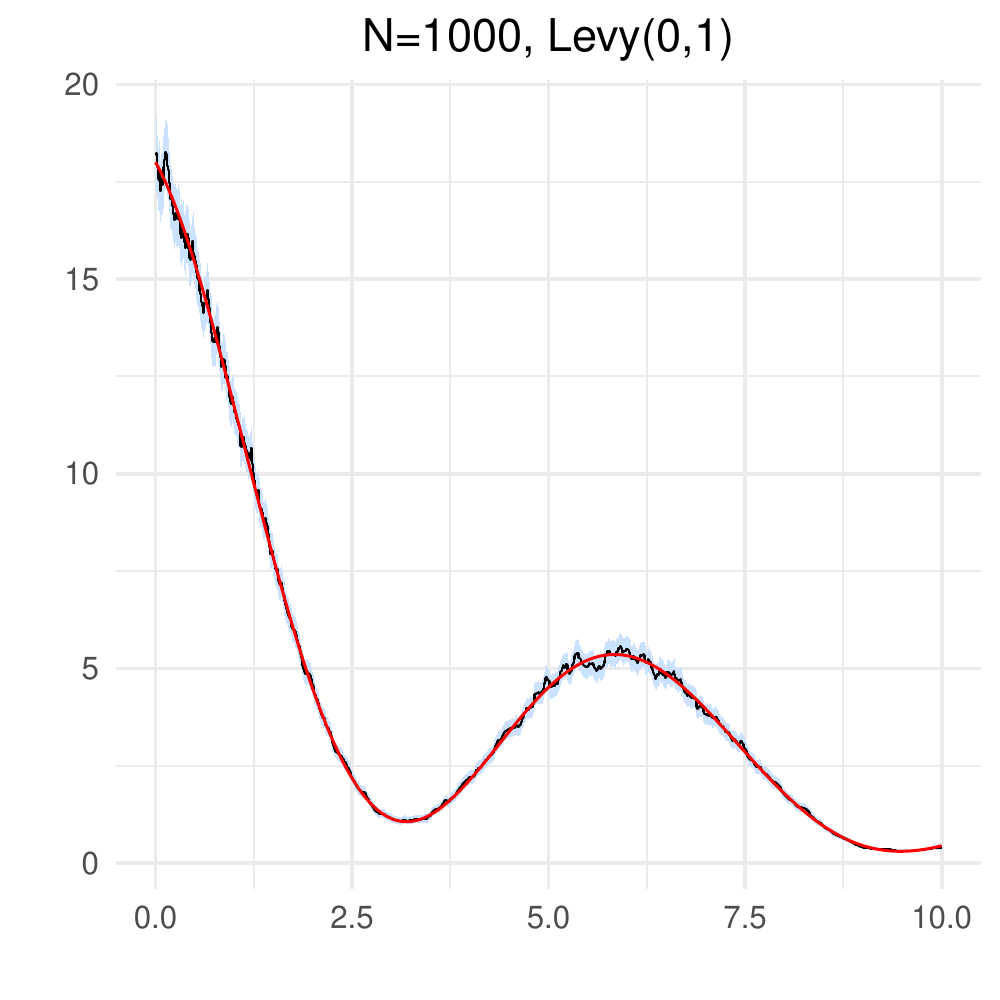}
	\end{center}
	\caption{Estimation results for the oscillating exponential function $\lambda_0$ from Subsection \ref{subsec:exponential} with $n=4000$ using the gamma Markov chain prior. The bin number was fixed at $N=1000$. Both plots show $\lambda_0$ and  $95\%$ marginal credible bands, where the true intensity is the red curve. In the left panel we took the $\operatorname{Exp}(10)$ distribution as prior on $\alpha$, whereas in the right panel we used the standard L\'evy distribution as prior on $\alpha$. }\label{fig:example1n4000-1000bins}
\end{figure}

\subsubsection{Sensitivity}
In Figure~\ref{fig:sensitivity} we illustrate sensitivity of the GMC procedure with respect to the choice of the prior on $\alpha$ by comparing estimation results with three different choices of the prior (we took  $N=12$, which follows from \eqref{eq:rule}). All three priors lead to qualitatively similar estimation results. In in all cases the GMC-based procedure calibrates the parameter $\alpha$ from the data, as evidenced by comparison of the prior density of $\alpha$ to the corresponding posterior density. Cf.~also Figure~\ref{fig:example1n4000-1000bins}.

\begin{figure}
	\begin{center}
		\includegraphics[width=0.3\textwidth]{./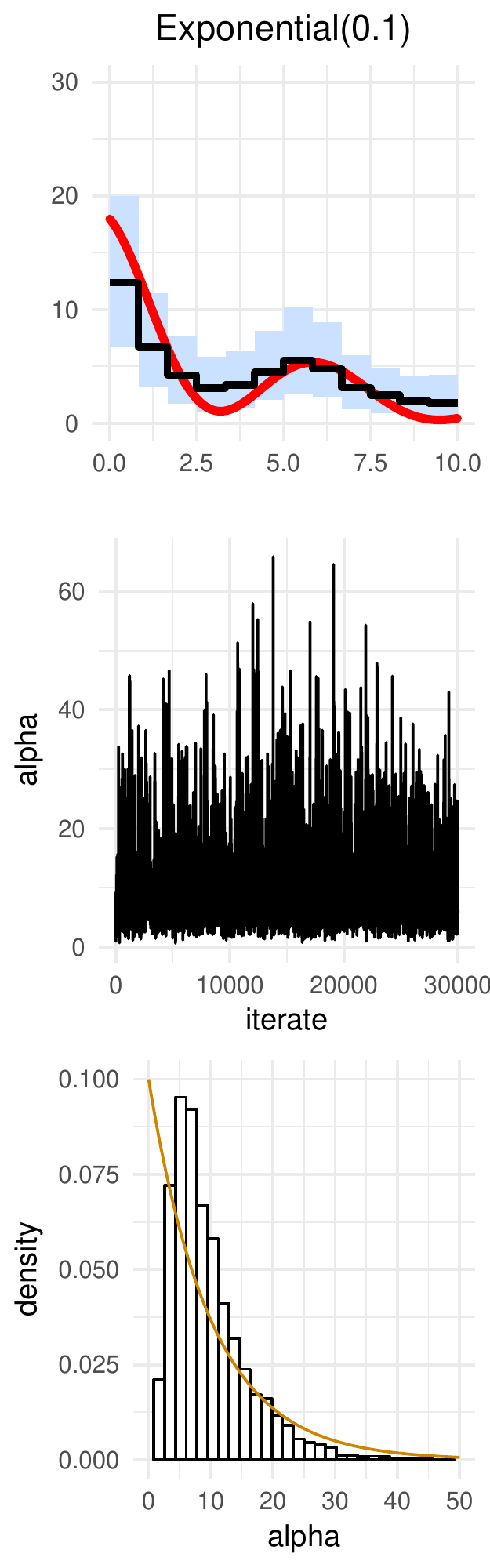}
		\includegraphics[width=0.3\textwidth]{./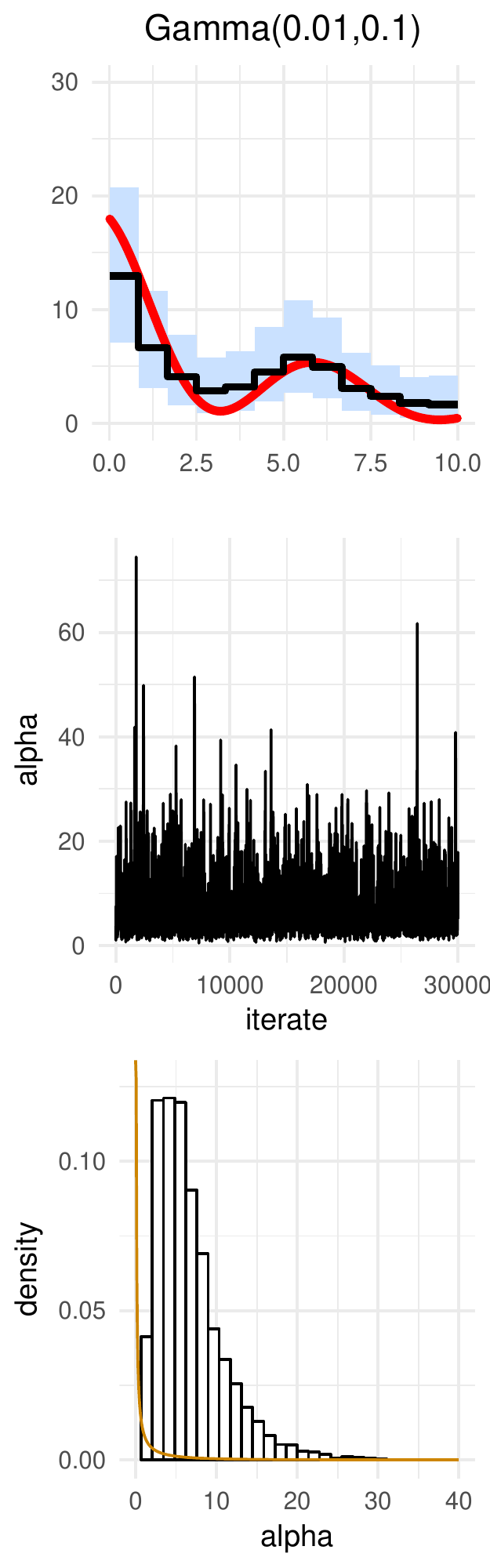}
		\includegraphics[width=0.3\textwidth]{./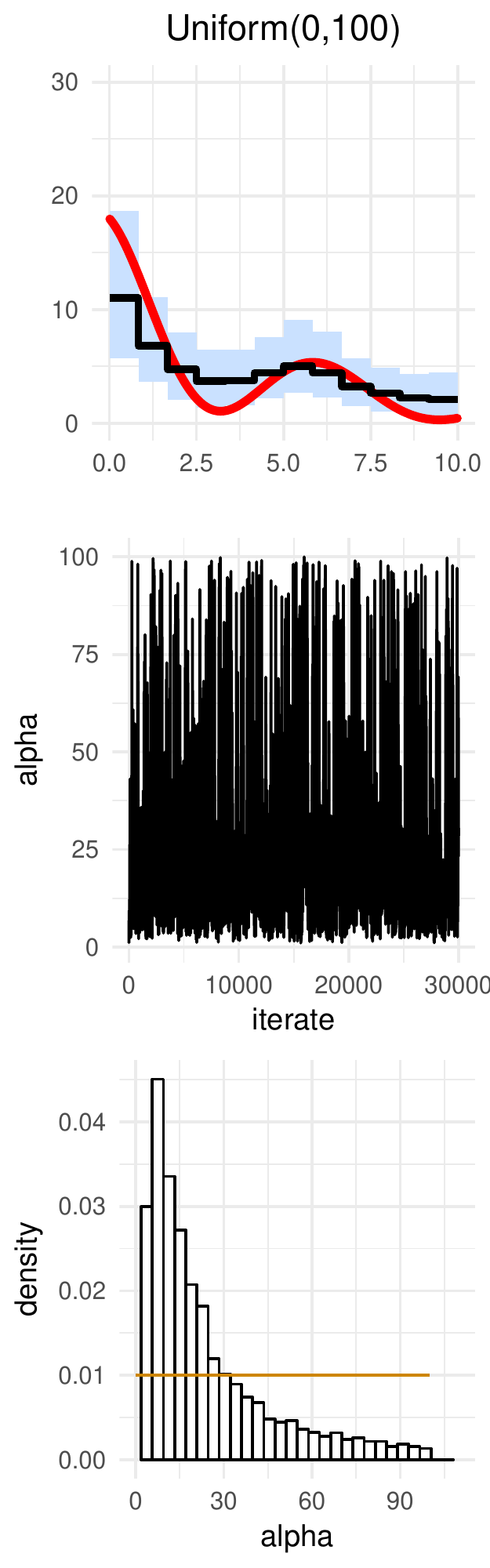}
	\end{center}
	\caption{Estimation results for the oscillating exponential function $\lambda_0$ from Subsection \ref{subsec:exponential} with $n=1$ using the GMC prior. The hyperparameters were $N=12$ and $\alpha_1=\beta_1=0.1$. The plots show the effect the prior on $\alpha$ has on inference results. In the left column, $\alpha\sim\operatorname{Exp}(0.1)$, in the middle $\alpha\sim\operatorname{G}(0.01, 0.1)$, and in the right $\alpha\sim\operatorname{Uniform}(0,100)$.  Top row: true intensity function (red solid), posterior mean (black solid) and $95\%$ marginal credible bands (light blue). Middle row: trace plots of $\alpha$. Bottom row: histogram of posterior samples of $\alpha$ (first half of the samples was discarded as a burn-in), with a prior density of $\alpha$ is added in orange.}
	\label{fig:sensitivity}
\end{figure}

\subsection{Bart Simpson function}
\label{sec:simpson}
As our second example, we consider the Bart Simpson intensity function defined as
\[
\lambda_0(x) = \frac{1}{2} \phi(x;3,1)+\frac{1}{10} \sum_{j=0}^4 \phi\left( x ; \frac{j}{2} +2, \frac{1}{10} \right), \quad x\in[0,6].
\]
This example has been adapted from the non-parametric density estimation context in Chapter 6 in \cite{wasserman06}. The Bart Simpson function is a mixture of Gaussian densities, interpreted as a half times the density $\phi(x;3,1)$  with a number of superimposed peaks. The latter can be thought of as corresponding to bursts of increased activity of a Poisson point process.

We give estimation results with sample sizes $n=200$ and $n=500$ in Figure~\ref{fig:bart200} (the number of Poisson points was $\mathcal{H}=200$ and $\mathcal{H}=491$, respectively) for both the independent gamma prior and the GMC prior. Additionally, in Figure~\ref{fig:bart:ebayes} we report the results when the number of bins with the independent gamma prior is selected via the empirical Bayes method. We note that the same optimal number of bins $N=7$ was obtained when we used a more informative prior $\psi_k \iid \operatorname{G}(2,1)$ too. We do not provide the corresponding plots, as they did not indicate a message different from Figure~\ref{fig:bart:ebayes}.

The following conclusions can be deduced from the figures:
\begin{itemize}
	\item As in the case of the oscillating exponential function from Subsection~\ref{subsec:exponential}, the empirical Bayes method to select $N$ performs visually unsatisfactorily.
	\item Posterior means of both of our estimation methods pick up nicely the peaks and valleys of the Bart Simpson function (provided for the independent gamma prior we use a sufficiently large number of bins $N$; both for the independent gamma prior and the GMC prior $N$ is chosen using our rule-of-thumb~\eqref{eq:rule}).  For the sample size $n=200$ and $n=500$, the posterior mean corresponding to the independent gamma prior does this somewhat better than the one corresponding to the GMC prior. On the other hand, the marginal credible bands for the GMC prior are tighter, and they also show far less spurious variability.
\end{itemize}

\begin{figure}
	\begin{center}
		\includegraphics[width=0.9\textwidth]{./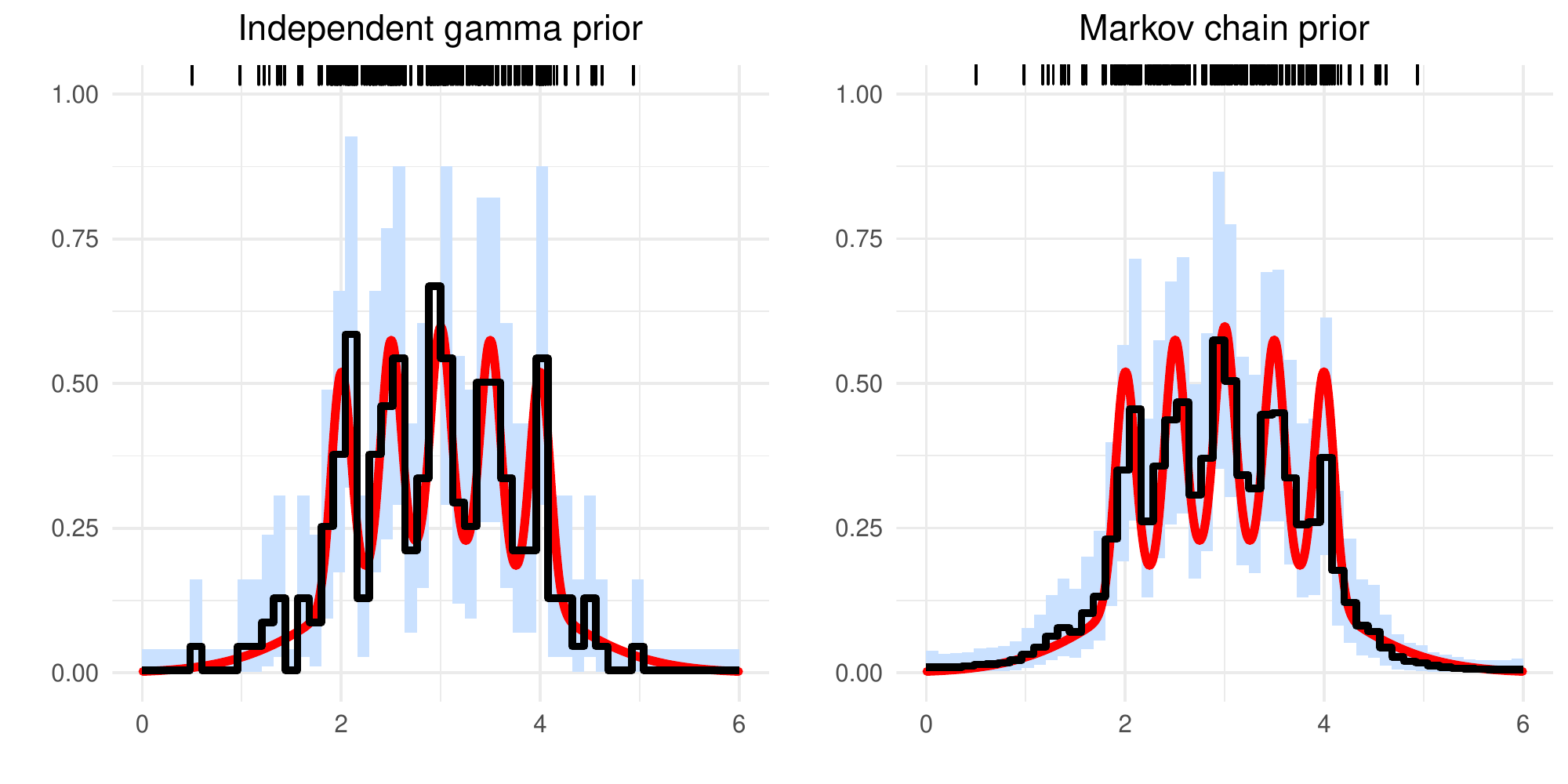}
		\includegraphics[width=0.9\textwidth]{./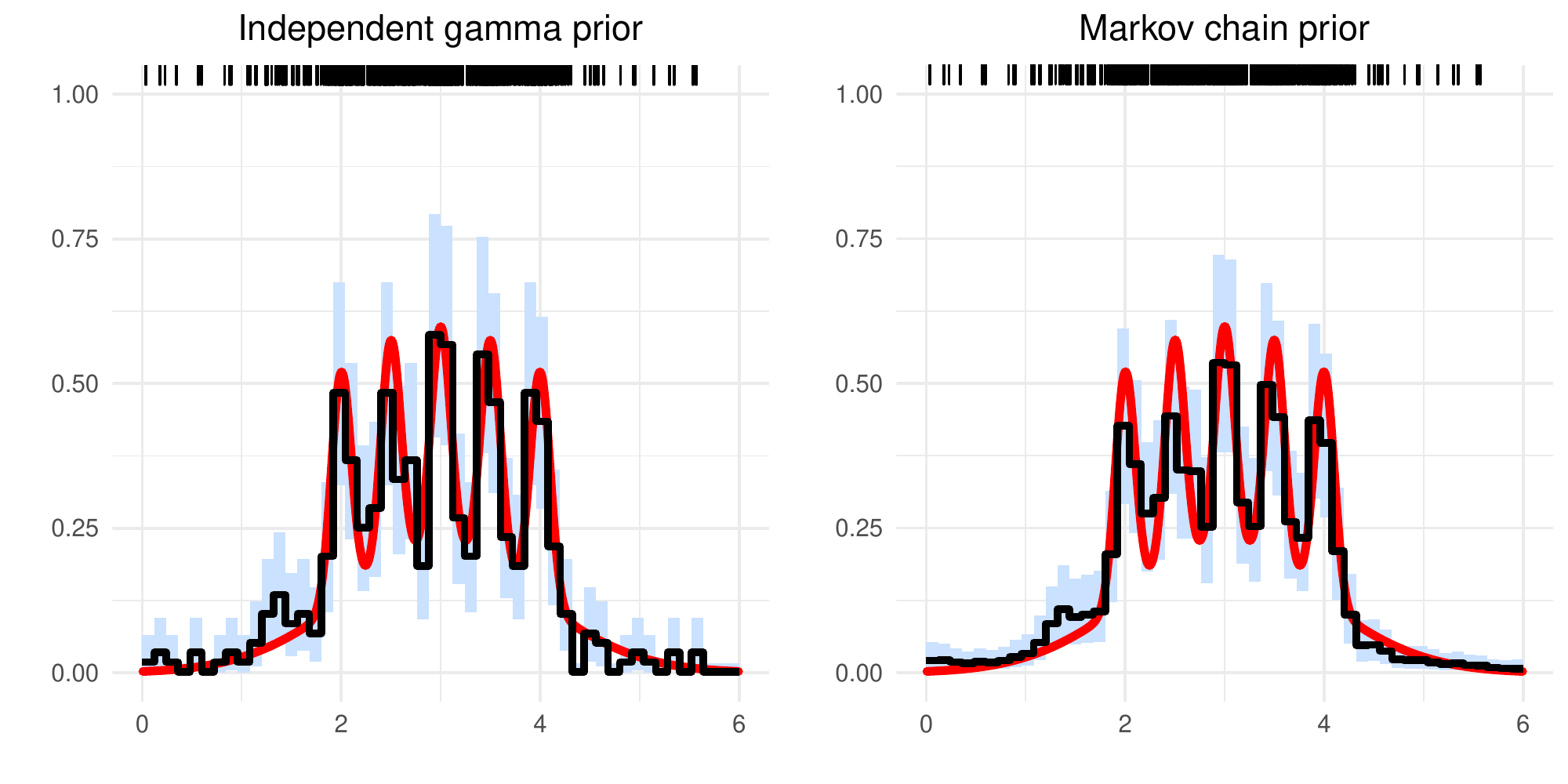}
	\end{center}
	\caption{Estimation results for the Bart Simpson function $\lambda_0$ from Subsection \ref{sec:simpson}. The hyperparameters are $N=50$, $\alpha=\beta=0.1$ for the independent gamma prior, and $N=50$ chosen via \eqref{eq:rule}, $\alpha_1=\beta_1=0.1$ and $\alpha\sim\operatorname{G}(1, 0.1)$ for the GMC prior.  Top row:  $n=200$, bottom row: $n=500$.}
	\label{fig:bart200}
\end{figure}

\begin{figure}
	\begin{center}
		\includegraphics[width=0.9\textwidth]{./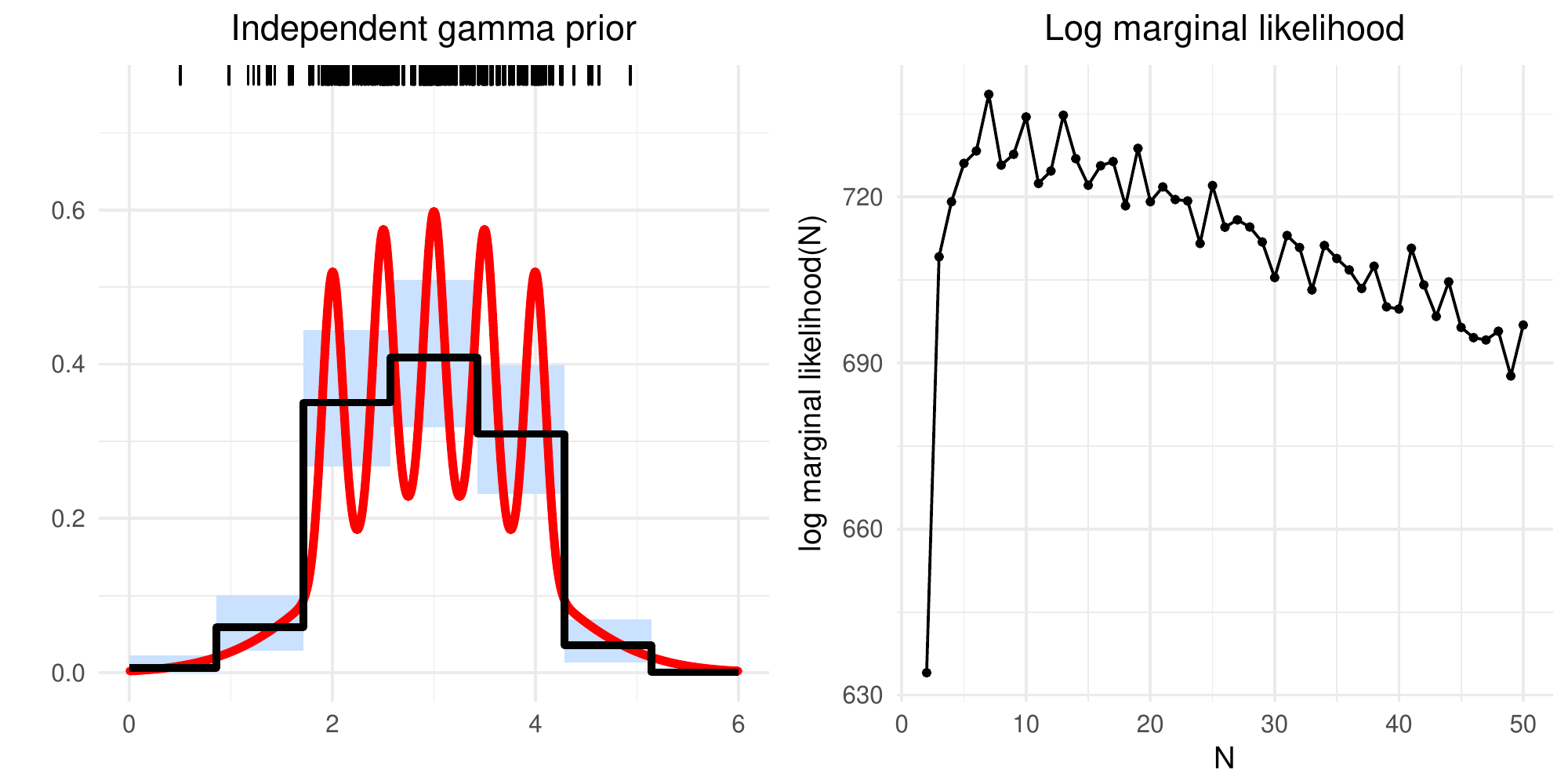}
	\end{center}
	\caption{Estimation results for the Bart Simpson function $\lambda_0$ from Subsection \ref{sec:simpson}, using the independent gamma prior. The sample size was $n=200$, the bin number $N=7$ was determined via the empirical Bayes method from Subsection \ref{sec:ebayes}, and the remaining hyperparameters were $\alpha=\beta=0.1$. The right panel displays the logarithm of the marginal likelihood as a function of $N$ (up to an additive constant independent of $N$).}
	\label{fig:bart:ebayes}
\end{figure}

\subsection{Practical recommendation}
\label{sec:recommendations}

The synthetic data examples considered in previous subsections allow us to formulate a practical recommendation. Specifically, we saw that both our methods are fast, scale with the  sample size and perform well in non-trivial inference problems. However, this good performance is conditional on an appropriate choice of the bin number $N$. In that respect, results we obtained for the independent gamma prior, when $N$ was chosen via the empirical Bayes method, were mixed. On the other hand, the method based on the GMC prior showed substantial stability with respect to a choice of $N$. In particular, a simple rule-of-thumb \eqref{eq:rule} led to good practical results. In light of these observations, we recommend the method based on the GMC prior as a default estimation strategy.

\section{Real data examples}\label{sec:realdata}
In this section, we apply our approach with the GMC prior on three real datasets. In light of Subsection \ref{sec:recommendations}, we do not consider the method based on the independent gamma prior.

Unless explicitly stated otherwise, we took $\alpha_1=\beta_1=0.1$, $N$ as in \eqref{eq:rule}  and $\alpha \sim \operatorname{Exp}(0.1)$. In each case, we ran the  Gibbs sampler for $30000$ iterations, and based the posterior inference on the second half of the generated posterior samples, with the first half dropped as a burn-in.

\subsection{UK coal mining disasters data}
\label{subsec:coal}

The dates of coal mining disasters (defined as accidents with $10$ or more fatal casualties, 191 events in total) in Britain between 15 March 1851 and 22 March 1962 provided in \cite{jarrett79} serve as a modern benchmark for point process inference. The dataset is accessible in {\bf R} as {\sl {coal}} in the {\bf{boot}} package; see \cite{canty17}.

The data have been analysed in \cite{green95} for change-point detection (it should be noted that  change-point estimation is a different inferential task from the one studied in this paper). 
A historical perspective on  change-point analysis for this problem is given in \cite{raftery86}, where it is suggested that an observed decrease in the accident rate over 1851--1962 is mainly due to an abrupt decrease around the years 1887--1895, possibly associated with changes in the coal industry around that time, such as decline in labour productivity (e.g., due to overtime) starting at the end of 1880s, and the emergence of the Miners' Federation in 1889. As a further possible reason, \cite{lloyd15} indicate passing of the Coal Mines Regulation Acts of 1872 and 1887 by the UK parliament with the aim of improving safety for mine workers. A more precise treatment would require adjustment for the coal mining industry size\footnote{Some relevant historical data are available at \url{https://www.gov.uk/government/statistical-data-sets/historical-coal-data-coal-production-availability-and-consumption} (Accessed on 22 June 2019), although the figures start to become annual only from 1913 on.}. This is not attempted neither in the above, nor in the following works, that give  
a machine learning perspective on inference for the coal mining disasters data: \cite{adams09}, \cite{hensman15}, \cite{komsamo15} and \cite{rao11}.

The rule-of-thumb \eqref{eq:rule} led to $N=48$ bins. The resulting posterior mean and the $75\%$ and $95\%$ marginal posterior credible bands are shown in Figure \ref{fig:coal}. The computing time for the Gibbs sampler was under a second. Our estimation results are, broadly speaking, similar to those already reported in the literature. Hence we only present a brief comparison with two state-of-the-art Bayesian methods from \cite{adams09} and \cite{lloyd15} (the corresponding posterior means were read off Figure 3 in \cite{lloyd15} via WebPlotDigitizer 4.1, see \cite{rohatgi17}). Our conclusions are as follows:

\begin{itemize}
	\item We believe that the method from \cite{adams09} does not pick up well enough the behaviour of the intensity function in the neighbourhood of the year 1851, which follows upon a comparison of the posterior mean to the rug plot of data points. Likewise, the posterior mean seems to drop off too sharply starting from mid-1950ies, at the right boundary point of the observation window, this being corroborated also by the method from \cite{lloyd15}. Both these features of the posterior mean from \cite{adams09} are possibly instances of edge effects (see, e.g., \cite{fan96} for a discussion of edge or boundary effects in non-parametric estimation). This boundary behaviour, but only at the left endpoint, is demonstrated by the method from \cite{lloyd15} as well.
	\item The method from \cite{adams09} appears to oversmooth in the years 1900--1925, which also follows by a comparison to the method from \cite{lloyd15}. As a consequence, the method from \cite{adams09} does not differentiate the World War I period in any particular way from the immediately preceding or successive years. This is unlike the World War II period, where it is in global  agreement with the other two methods.
	\item The posterior means from \cite{adams09} and \cite{lloyd15} do pass through the $95\%$ marginal credible band constructed via our GMC method.
	\item The $75\%$ marginal posterior bands for both our method and the method from \cite{lloyd15} look similar, except a disagreement on the first two bins, and a somewhat higher accident rate during the World War II period. As explained above, we believe the former to be due to an edge effect.
\end{itemize}

\begin{figure}
	\begin{center}
		\includegraphics[width=0.9\textwidth]{./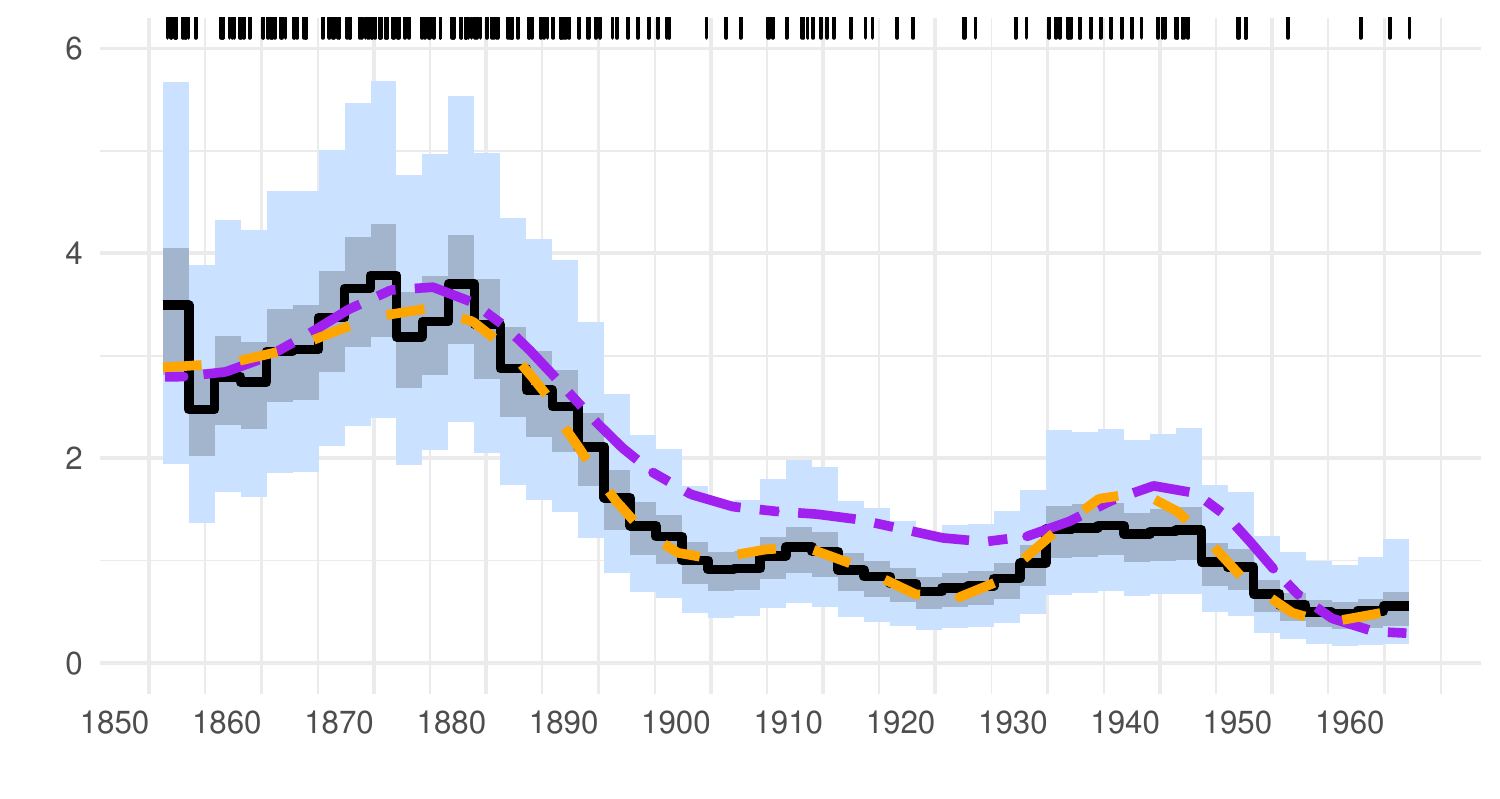}
		\includegraphics[width=0.9\textwidth]{./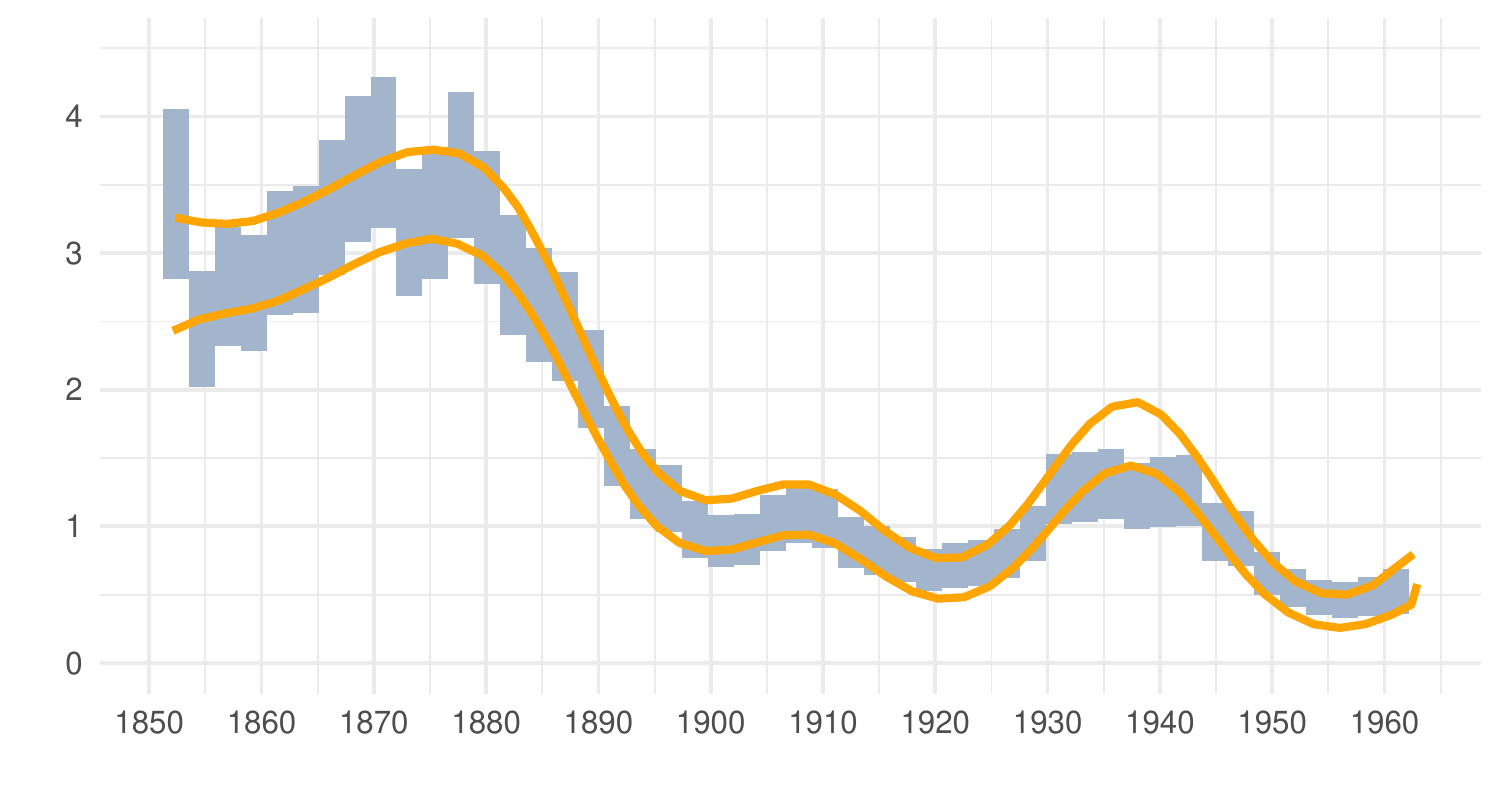}
	\end{center}
	\caption{Coal mining disasters data.  In the top plot displayed are the  posterior mean (black curve) with $75\%$ (grey) and $95\%$ (light blue) credible bands. The dashed purple curve is the estimate from \cite{adams09}, the dashed yellow line is an estimate from \cite{lloyd15}. In the bottom plot the $75\%$ marginal credible band is compared to that from \cite{lloyd15} (visualised via two yellow lines).}
	\label{fig:coal}
\end{figure}

\subsection{US mass shootings data}
\label{subsec:shootings}

In this subsection we will apply our GMC Bayesian method to analyse the US mass shootings data collected by the US nonprofit organisation Mother Jones, see \cite{follman18a} and \cite{follman18b}. The data cover the years 1982--2018 and provide extensive information on mass shootings in the US over that period. A mass shooting is defined as a single attack in public place in which four or more victims were killed (see \cite{follman18a}). This definition is essentially based on the FBI crime classification report from 2005. The definition excludes armed robbery, gang violence, or domestic violence, and focusses on shooting incidents where the motive appears to be indiscriminate mass murder (see \cite{follman12}). We note that for the data collected over 2013--2018, Mother Jones lowered the baseline of four fatal victims to three, reflecting a revised federal law.\footnote{Investigative Assistance for Violent Crimes Act of 2012. Pub.\ L.\ 112--265. 126 Stat. 2435--2436. 14 January 2013. \url{https://www.gpo.gov/fdsys/pkg/PLAW-112publ265/content-detail.html} (Accessed on 24 March 2018)} In our analysis, for uniformity in methods for data collection, we maintained the older definition. However, we acknowledge the fact that a precise definition of a mass shooting is somewhat elusive. This will affect inclusion or exclusion of specific shooting cases in any quantitative analysis.

The dataset used in our analysis consists of 85 shooting incidents over the period of 20 August 1982 -- 14 February 2018, that lead to 4 or more lethal casualties each. We model occurrences of mass shootings as realisations from a Poisson point process. This approach is in line with modelling the coal mining disasters data via a Poisson point process, see Subsection~\ref{subsec:coal}, as well as using the Poisson distribution in \cite{spiegelhalter19}, pages 248--251, to model the UK homicide rates. We set  1 January 1982 and 14 March 2018 as the start and end times of the observation period. 

We note that the US mass shootings data have already been analysed in \cite{cohen14} using a different approach (Shewhart's chart), and data up to 2014. A conclusion reached in \cite{cohen14} is that the rate of mass shooting incidents increased since September 2011. Such a conclusion is further corroborated by the FBI study of active shooter incidents in the 2000--2013 period, see \cite{blair14}.

We present our estimation results in Figure~\ref{fig:shootings} both for  $N=9$ bins and $N=21$ bins. The first setting corresponds to considering bins of $4$ years, whereas the second relies on the rule-of-thumb \eqref{eq:rule}.  The Gibbs sampler runs were completed under a second. Both of the presented  plots suggest, roughly speaking, an increasing intensity function, which can be taken as a substantial evidence of an increasing trend in occurrences of mass shootings. In fact, posterior means indicate a four-fold increase in the Poisson process intensity over 1982--2018. Part of this increase can possibly be correlated with an increase in the US population between 1982 to 2018 (from 232 to 327 million), but nevertheless the relative increase in population is considerably smaller than that in the intensity of mass shooting incidents. Contrast this to Figure \ref{fig:homicides}, where we display the number of homicides by firearms in the US over the years 1999--2017 (data for years prior to 1999 was not accessible to us)\footnote{Centers for Disease Control and Prevention, National Center for Health Statistics. Underlying Cause of Death 1999--2017 on CDC WONDER Online Database, released December, 2018. Data are from the Multiple Cause of Death Files, 1999--2017, as compiled from data provided by the 57 vital statistics jurisdictions through the Vital Statistics Cooperative Program. Accessed at \url{http://wonder.cdc.gov/ucd-icd10.html} on June 22, 2019.}; this has stayed relatively stable. Especially after around 2005 the shooting incidents appear to occur at an ever accelerating rate. A fuller treatment of the question would require inclusion of the population size as a covariate in the model, perhaps in combination with other covariates too. See, e.g., the right panel of Figure \ref{fig:homicides}, where we display the yearly average numbers of homicides per 100,000 inhabitants in the US\footnote{We downloaded the population data from: World Bank, Population, Total for United States [POPTOTUSA647NWDB], retrieved from FRED, Federal Reserve Bank of St. Louis; \url{https://fred.stlouisfed.org/series/POPTOTUSA647NWDB, September 14, 2019.}}. This, however, lies outside the scope of the present work, that deals exclusively with the Poisson point process inference.

\begin{figure}
	\begin{center}
		\includegraphics[width=0.9\textwidth]{./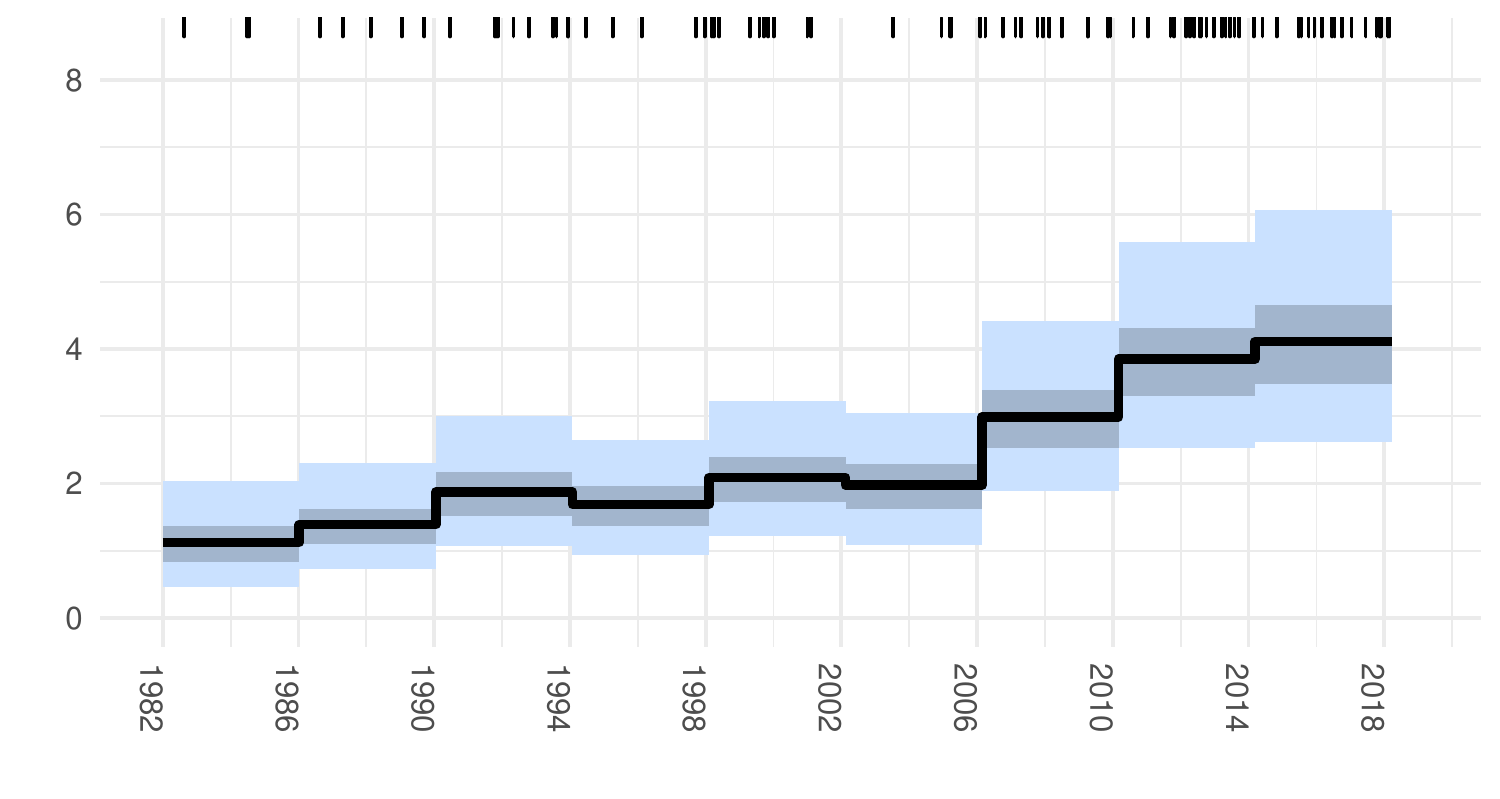}\\
		\includegraphics[width=0.9\textwidth]{./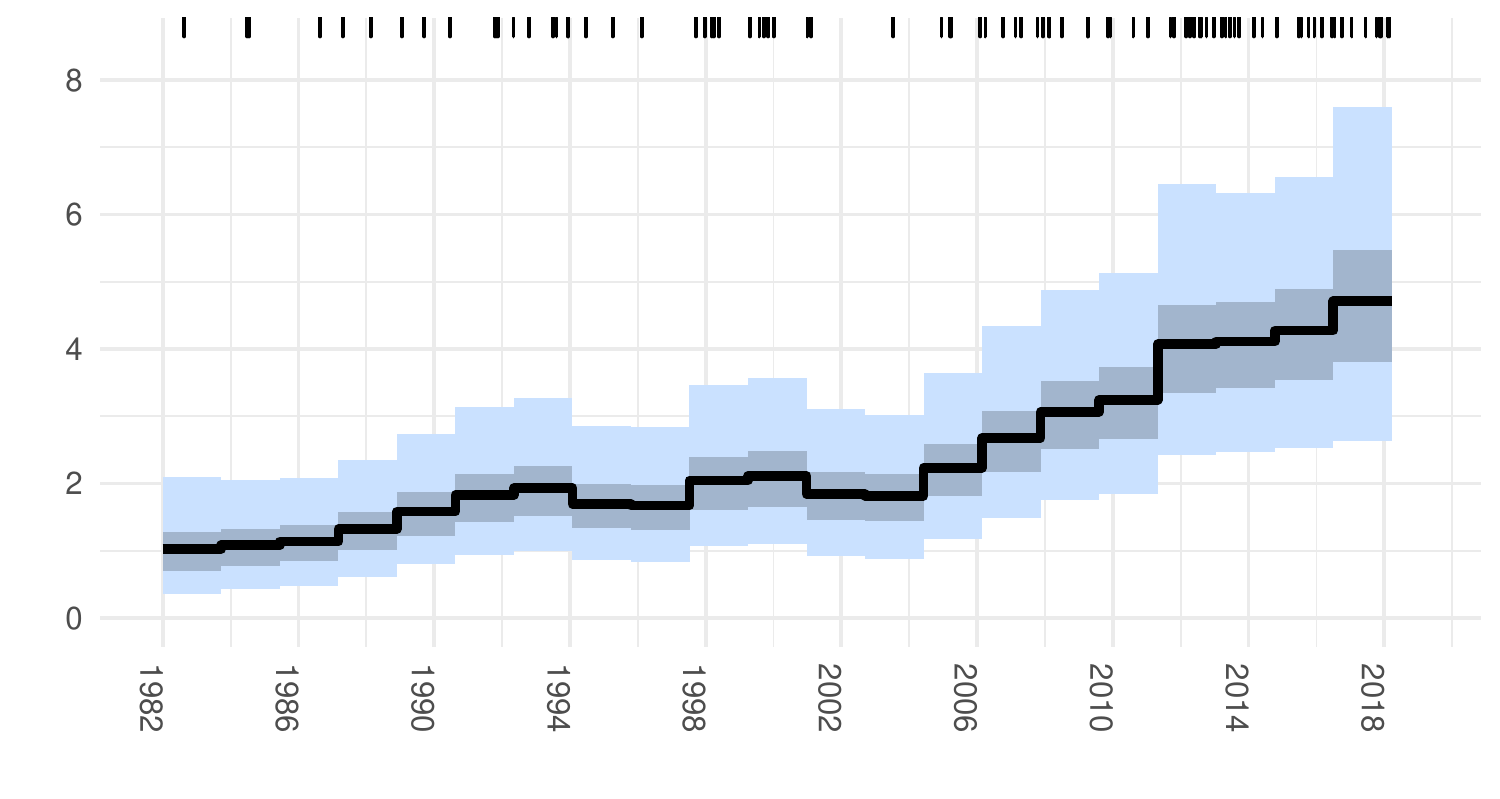}
	\end{center}
	\caption{US mass shootings data. The rug plot on the top displays the event times. Displayed are the  posterior mean (black curve) with $75\%$ (grey) and $95\%$ (light blue) credible bands. Top: $N=9$ bins (one bin for each $4$ years). Bottom: $N=21$ bins (according to the rule for determining $N$ given in equation \eqref{eq:rule}). }
	\label{fig:shootings}
\end{figure}

\begin{figure}
	\begin{center}
		\includegraphics[width=0.9\textwidth]{./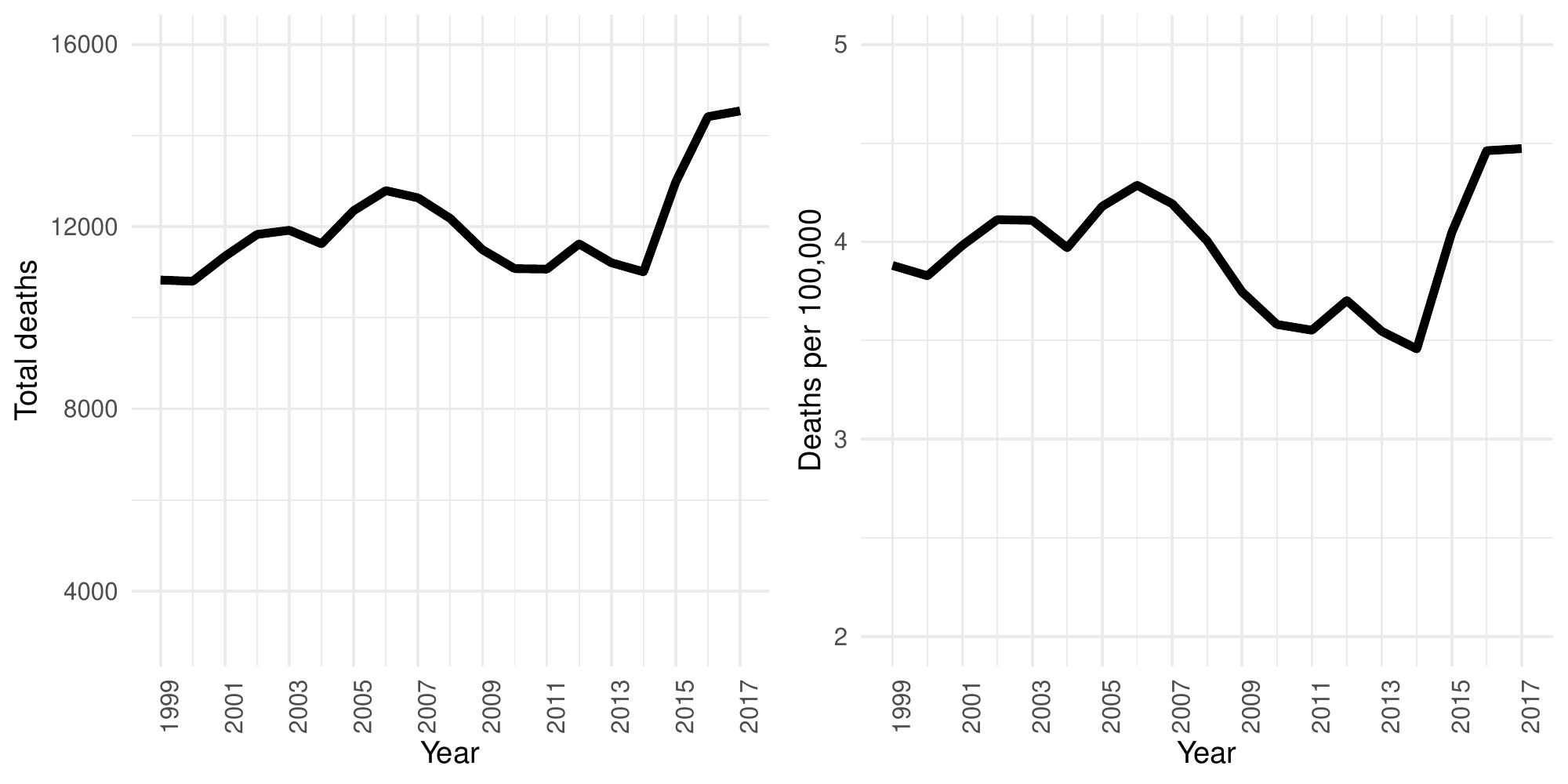}
	\end{center}
	\caption{Left panel: yearly number of homicides by firearms in the US over the years 1999--2017. Right panel: average number of homicides per 100,000 inhabitants.}
	\label{fig:homicides}
\end{figure}

\subsection{Trump's Twitter data}
\label{subsec:twitter}

Social media constitutes a natural field for application of point processes (cf.~\cite{lloyd15} and \cite{komsamo15}). In this subsection, we will analyse the tweet data from Donald J.~Trump's personal Twitter account {\tt{@realDonaldTrump}}. President Trump is an active Twitter user\footnote{A complete archive of Donald Trump's tweets, updated on an hourly basis, is available at \url{https://github.com/bpb27/trump_tweet_data_archive}}. His tweet data has already been a subject of quantitative studies in the past. Thus, using text mining tools, David Robinson (see \cite{robinson16} and \cite{robinson17}) analysed the question whether tweets posted on {\tt{@realDonaldTrump}} from Android and iPhone devices belong to different persons: Android tweets were posted or dictated by Donald Trump personally,\footnote{It is known that Donald Trump's personal mobile device was an Android, possibly a Samsung Galaxy S3. See, e.g., \url{https://www.androidcentral.com/which-android-phone-does-donald-trump-use} (Accessed on 26 March 2018)} whereas the iPhone tweets were written by his staff members. This question can be approached from the point process angle as well. Specifically, for each device type (Android and iPhone) we will model tweet arrival times as realisations from a Poisson point process. We will then infer the intensity functions and compare them. Significant differences in shapes of the intensity functions can be taken as an indicator of differing tweeting habits of the Android and iPhone users.

We model tweet times as realisations from a periodic Poisson process with a period of one day. The data we used in our analysis were all the tweets made from the Android and iPhone devices in the period between 16 June 2015 (the official launch of Donald Trump's presidential campaign) and 9 November 2016 (the date he won the presidential elections). For interpretability, we  took $N=48$ bins, which corresponds to bins of half an hour (note that the rule \eqref{eq:rule} would lead to $N=50$, a minor difference). The number of analysed tweets using  Android and  iPhone equals  $563$ and $952$ respectively.  Completion of each of the Gibbs samplers  took less than  a second.  Our estimation results are given in Figure~\ref{fig:twitter1}. The most prominent differences between the two intensity functions we inferred are that the iPhone user shows a high tweeting activity during the night hours and in the second half of the day. On the other hand, the Android user's main activity falls in the morning and during early afternoon hours.

\begin{figure}
	\begin{center}
		\includegraphics[width=0.9\textwidth]{./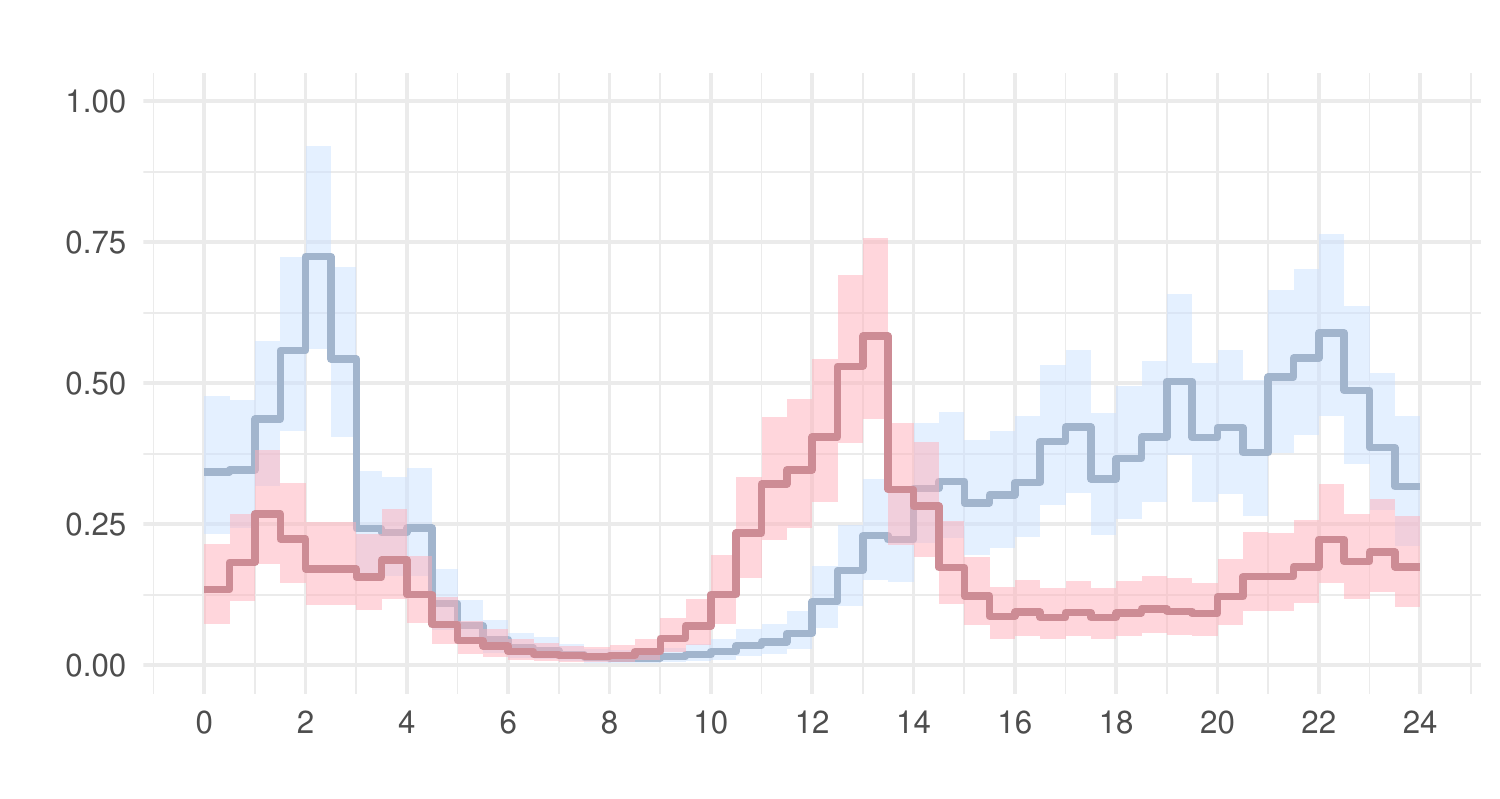}\end{center}
	\caption{Estimation results for the Twitter data from Subsection \ref{subsec:twitter}. Displayed are the posterior means with  $95\%$ credible bands. Blue: iPhone tweet data, pink: Android tweet data. The observation period in both cases is 16 June 2015 -- 9 November 2016.}
	\label{fig:twitter1}
\end{figure}

In Figure \ref{fig:twitter2} we compare the intensity function inferred from iPhone tweets made after 25 March 2017 (the date of the last Android tweet. As confirmed a few days later by  Dan Scavino Jr., the White House Director of Social Media and Assistant to the President, Donald Trump stopped using an Android device and switched to an iPhone\footnote{See, e.g., \url{https://www.businessinsider.nl/donald-trump-switches-apple-iphone-from-unsecure-android-2017-3/} (Accessed on 26 March 2018)}) until 1 January 2018 to the intensity function inferred from the combined Android and iPhone data over the period 16 June 2015 -- 9 November 2016, as well as only the Android data. A conclusion that emerges from the graphs is that although the inferred intensity functions do not match each other exactly, general tweeting habits of the Android user during 16 June 2015 -- 9 November 2016 and the iPhone user after 25 March 2017 are qualitatively similar. Specifically, tweeting activity over the night hours is largely the same for both these datasets, which also show a surge in tweeting around 11:00--13:00, and a subsequent decrease until a few hours prior to the midnight. On the other hand, observed differences between the two intensity functions can be plausibly explained by the fact that past the 25 March 2017 date, the iPhone is possibly still used by President Trump's aides, primarily in the morning and during early afternoon hours.

\begin{figure}
	\begin{center}
		\includegraphics[width=0.9\textwidth]{./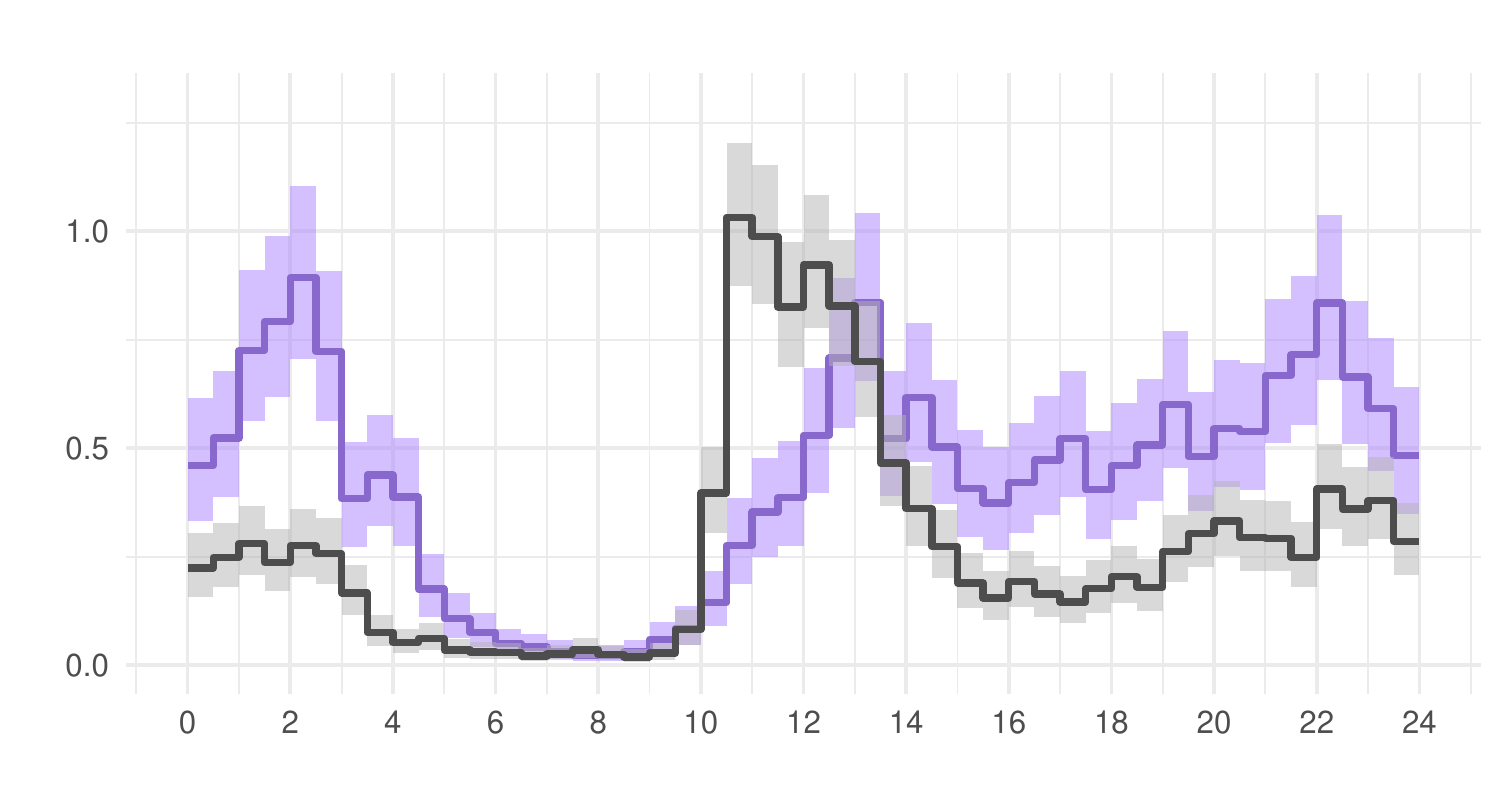}
		\includegraphics[width=0.9\textwidth]{./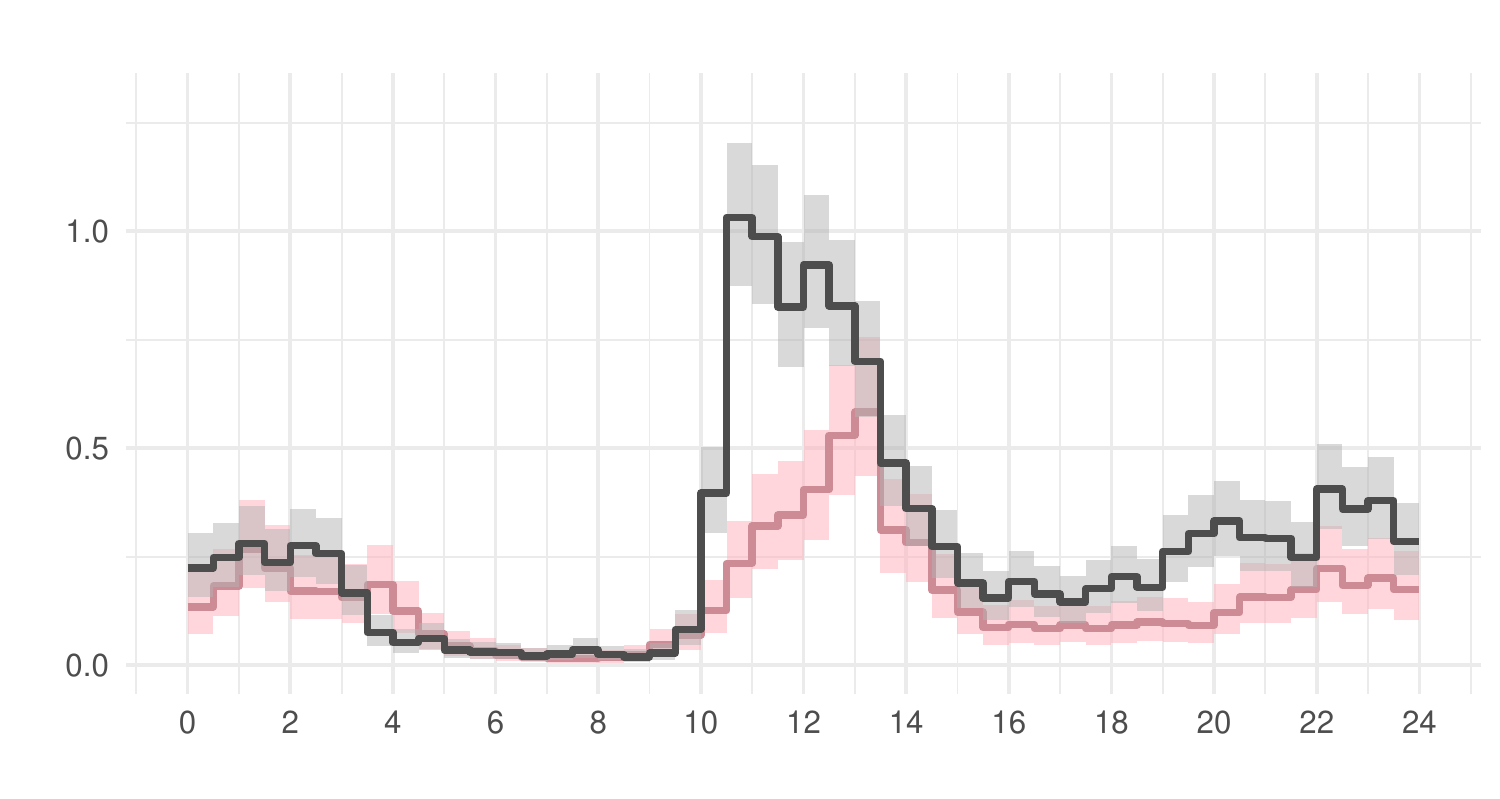}
	\end{center}
	\caption{Estimation results for the Twitter data from Subsection \ref{subsec:twitter}. Top plot: displayed are in black the  posterior mean with a $95\%$ credible band for all iPhone tweets over the period  3 March 2017 (date of last Android tweet) -- 1 January 2018, and in purple the posterior mean with a $95\%$ credible band for the combined iPhone and Android tweets over the period 16 June 2015 -- 9 November 2016. Bottom plot: displayed are in black the  posterior mean with a $95\%$ credible band for all iPhone tweets  over the period 3 March 2017 -- 1 January 2018, and in pink the posterior mean with a $95\%$ credible band for the iPhone tweets over the period 16 June 2015 -- 9 November 2016.}
	\label{fig:twitter2}
\end{figure}

\section{Discussion}
\label{sec:conclusions}

In this paper we studied two related non-parametric Bayesian approaches to estimation of the intensity function of a Poisson point process. Our methods are easy to implement, in that the posterior with the first method is available in closed form, whereas with the second method the posterior inference can be performed using a straightforward implementation of the Gibbs sampler. We believe that our methods are conceptually simpler than those previously developed in the statistical literature.
There also exists a solid body of machine learning research dedicated to inference in Poisson point processes. The methods used in this strand of the literature are largely based on optimisation techniques and the variational Bayes approach (see \cite{lloyd15}), or combinations of these techniques with MCMC (see \cite{hensman15}). However, they may underreport uncertainties in parameter estimates, which is an issue with the variational inference (see \cite{blei17}).
In that sense, the MCMC-based machine learning approaches to inference in point process models, such as \cite{adams09} and \cite{rao11}, hold an advantage over those employing optimisation techniques. On the downside, they typically do not scale well with large amounts of data. In contrast, implementation of our methods in {\bf{Julia}} is fast, with hundreds of thousands of data points and a large number of bins $N$ posing no significant computational challenges. 
Finally, we demonstrated good performance of our approach on simulated data examples and three real datasets: coal mining disasters data, the US mass shootings data and Donald Trump's Twitter data.

Two promising future research directions that we envision are an extension of our methods to a multi-dimensional setting and incorporation of covariates in the model. On the theoretical side, working out frequentist asymptotics for the method based on the GMC prior is interesting.

\section*{Acknowledgments}
The research leading to the results in this paper has received funding from the European Research Council under ERC Grant Agreement 320637. We would like to thank Adeline Leclercq Samson (Universit\'{e} Grenoble Alpes, Grenoble, France), Ryan Martin (North Carolina State University, Raleigh, North Carolina, USA) and Paulo Serra (Eindhoven University of Technology, Eindhoven, The Netherlands) for their constructive comments, that led to various improvements in the draft version of the paper. We have benefited from discussions on the GMC prior with Ali Taylan Cemgil (Bo\v{g}azi\c{c}i University, Istanbul, Turkey).

\appendix
\section{Proofs  of the technical results}\label{appendix}
\begin{proof}[Proof  of Lemma \ref{lem:indepposterior}]
	With $\lambda(x) = \sum_{k=1}^N \psi_k \ind_{B_k}(x)$, we get for the likelihood (as displayed in  \eqref{likelih})
	\begin{align*}
	L(X^{(n)};\lambda)&= \left(\prod_{j=1}^n\prod_{i=1}^{m_j}  \lambda(X_{ij})\right) \times \exp\left(nT - n\int_0^T \lambda(x) \dd x\right) \\ & \propto \left(\prod_{j=1}^n\prod_{i=1}^{m_j} \left[ \sum_{k=1}^N \psi_k \ind_{B_k}(X_{ij})\right]\right) \times \prod_{k=1}^N e^{-n\Delta_k \psi_k}.
	\end{align*}
	As the  factor with the double product can be rewritten as $\prod_{k=1}^N \psi_k^{H_k}$, we get
	\begin{equation}
	\label{eq:likelih}
	L(X^{(n)};\lambda)\propto \prod_{k=1}^N \psi_k^{H_k}e^{-n\Delta_k \psi_k}.
	\end{equation}
	The result now follows in a straightforward way by conjugacy of the prior, see Assumption~\ref{assump:prio}. 
\end{proof}

\begin{proof}[Proof of Theorem \ref{thm:mean}]
Recall from Subsection \ref{subsection:prior} that $B_k=[b_{k-1}, b_k)$. 
	We will use the fact that
	\[
	H_k \sim \operatorname{Poisson}\left(n \int_{B_k}\lambda_0(x)\dd x \right), \quad k=1,\ldots,N,
	\]
	which follows by independence of $X_1,\ldots, X_n$. 
	Hence, 
	\begin{equation}
	\label{meanvar}
	\ee[H_k]=\var[H_k]=n \int_{B_k}\lambda_0(x)\dd x, \quad k=1,\ldots,N.
	\end{equation}
	For notational simplicity we assume that $T=1$. This entails no loss of generality.
	Recall from \eqref{psik} that the posterior mean is piecewise constant with value equal to $\hat{\psi}_k = (n\Delta+\beta)^{-1}(H_k + \alpha)$ on the $k$-th bin $B_k$.

	Introduce
	\[
	\bar{\lambda}_0(x)=\sum_{k=1}^N \lambda_0(b_{k-1})\mathbf{1}_{B_k}(x).
	\]
	First we note
	\begin{equation}
	\label{biasvar}
	\ee[\|\hat{\lambda}-\lambda_0\|_2^2] \leq 2\left( \|\lambda_0-\bar{\lambda}_0\|_2^2 + \ee [\| \hat{\lambda} - \bar{\lambda}_0  \|_2^2]\right).
	\end{equation}
	Using H\"older-continuity of $\lambda_0$ (see Assumption \ref{lambda0}), we get for the first term on the right-hand side that
	\[
	\|\lambda_0-\bar{\lambda}_0\|_2^2 = \sum_{k=1}^N \int_{B_k} (\lambda_0(x)-\lambda(b_{k-1}))^2\dd x \leq L^2 \Delta^{2h}.
	\]
	For the second term on the right-hand side of \eqref{biasvar} we have
	\begin{align*}
	\ee [\| \hat{\lambda} - \bar{\lambda}_0  \|_2^2] &= \Delta \sum_{k=1}^N \ee [(\hat{\psi}_k-\lambda_0(b_{k-1}))^2] \\ &= \Delta\sum_{k=1}^N \{ \ee[\hat{\psi}_k] - \lambda_0(b_{k-1}) \}^2 + \Delta\sum_{k=1}^N \var[\hat{\psi}_k].
	\end{align*}
	By H\"older-continuity of $\lambda_0$, for $x\in B_k$, $\lambda_0(x)= \lambda_0(b_{k-1}) + O\left(\Delta^{h}\right)$. Combining this with  \eqref{meanvar}, we obtain
	\begin{equation}
	\label{mean:hk}
	\ee[H_k]=n\Delta \lambda_0(b_{k-1})+n O(\Delta^{1+h}), \quad k=1,\ldots,N,
	\end{equation}
	so that
	\[
	\ee[\hat{\psi}_k]-\lambda_0(b_{k-1})=O\left( \frac{1}{n\Delta} + \Delta^h \right)=O\left(\Delta^h \right), \quad k=1,\ldots,N.
	\]
	Here the last equality follows from the choice of $\Delta$.
	Furthermore, again by \eqref{meanvar} and Assumption \ref{lambda0},
	\[
	\var[\hat{\psi}_k] = \frac{1}{(n\Delta+\beta)^2}\var[H_k] \lesssim \frac{n\Delta}{(n\Delta+\beta)^2} \lesssim \frac{1}{n\Delta}.
	\]
	Thus,
	\[
	\ee [\| \hat{\lambda} - \bar{\lambda}_0  \|_2^2] \lesssim \Delta^{2h} + \frac{1}{n\Delta}.
	\]
	The statement of the theorem follows from the fact that $\Delta \asymp n^{-1/(2h+1)}$.
\end{proof}

\begin{proof}[Proof of Theorem \ref{thm:post}]
	By Chebyshev's inequality,
	\begin{equation}\label{eq:bv0}
	\ee [\Pi_N(\|\lambda-\lambda_0\|_2 \ge M_n \varepsilon_n \mid X^{(n)})] \le \frac{1} {M_n^2 \varepsilon_n^2}\ee \left[\ee_{\Pi_N} \left(\|\lambda-\lambda_0\|_2^2 \mid X^{(n)}\right)\right].
	\end{equation}
	Then, by the posterior bias-variance decomposition,
	\begin{equation}\label{eq:bv1}
	\ee \left[\ee_{\Pi_N} \left(\|\lambda-\lambda_0\|_2^2 \mid X^{(n)}\right)\right] =\ee \left [ \|\hat{\lambda}-\lambda_0\|_2^2\right] +\Delta\sum_{k=1}^N \ee [\var_{\Pi_N} (\psi_k \mid X^{(n)} )].
	\end{equation}
	We have for the second term on the right-hand side, using \eqref{post:gamma}, a formula for the variance of a gamma distribution and \eqref{mean:hk} that
	\begin{align*}
	&\Delta\sum_{k=1}^N \ee \left[\var_{\Pi_N} \left(\psi_k \mid X^{(n)} \right)\right]  = \Delta \sum_{k=1}^N \ee\left[ \frac{H_k+\alpha}{(n\Delta+\beta)^2} \right]\\
	&\qquad =\frac{\Delta}{(n\Delta+\beta)^2} \sum_{k=1}^N \ee[H_k]+O\left( \frac{1}{(n\Delta)^2} \right)
	= O\left( \frac{1}{n\Delta} \right) = O(n^{-2h/(2h+1)}).
	\end{align*}
	As far as the first term on the right-hand side of \eqref{eq:bv1} is concerned, by Theorem \ref{thm:mean} it is also of the order $n^{-2h/(2h+1)}$. Hence so is \eqref{eq:bv1}. Using this fact, the statement of the theorem now follows from equation \eqref{eq:bv0}.
\end{proof}

\begin{figure}
	\begin{center}
		\includegraphics[width=0.9\textwidth]{./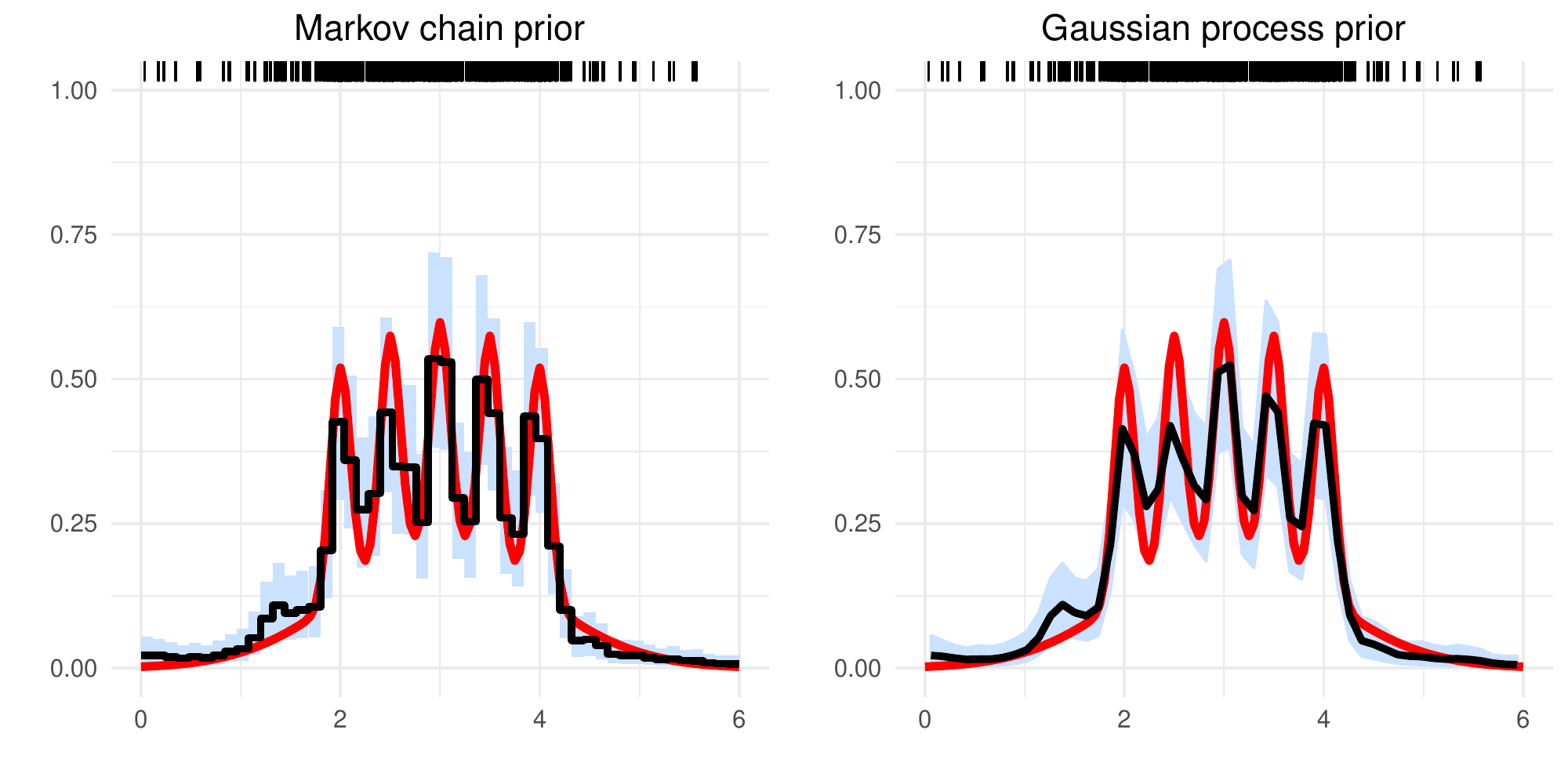}
	\end{center}
	\caption{Estimation results for the Bart Simpson function $\lambda_0$ from Subsection \ref{sec:simpson}, with the same $n=500$ replicated observations as used in that section. The settings for the GMC prior are described in the caption of Figure \ref{fig:bart200}. The Gaussian process prior is taken to have zero mean and covariance given by a Mat\'ern $3/2$ kernel, with its hyperparameters (on the log scale) assigned the independent $N(-2,4)$ prior distributions. $N=50$ bins are used to bin the count data and the GP posterior summaries are assigned to the midpoints of these bins.}
	\label{fig:bart:gauss}
\end{figure}

\section{Gaussian processes}\label{sec:gp}

Although the emphasis of the present paper is not on a comparison with other methods for point process inference per se, in this appendix we briefly report on the results we obtained with intensity estimation based on a Gaussian process prior (see the literature overview in Subsection \ref{subsec:literature}). As a test model, we chose the Bart Simpson function and used the same simulated dataset as in Subsection \ref{sec:simpson}, corresponding to $n=500$ observations. For numerical evaluation, we relied on the {\bf GaussianProcesses.jl} package of \cite{gaussian18}. As detailed in Section 4.3 of that technical report, the approach requires binning of the point process data (similar to our methods), and the model for the intensity of the process assumes the form
$
\Lambda_k = \exp(f_k),
$
for
$
k=1,\ldots,N,
$
where $\Lambda_k$ is the (integrated) intensity of the process on the $k$th bin, whereas $f_k$'s are assigned the Gaussian process (GP) prior. Posterior simulation relies on the Hamiltonian Monte Carlo (HMC) algorithm. From the assumed model it follows that in our notation, $\psi_k = e^{f_k}/(n\Delta)$ for $k=1,\ldots,N.$

To make results comparable with ours in Subsection \ref{sec:simpson}, we used $N=50$ bins. The GP had mean zero and a Mat\'ern $3/2$ kernel, with its hyperparameters (on the log scale) assigned the independent $N(-2,4)$ prior distributions. These choices correspond to the ones made in Section 4.3 in \cite{gaussian18}, and are not claimed to be optimal. After a trial-and-error procedure, we set the integration step size of HMC to $0.025$. This yielded a ca.\ $50\%$ acceptance rate of HMC. We kept other tuning quantities at their default values in {\bf GaussianProcesses.jl}. We ran the Markov chain for $30000$ iterations and dropped the first third of the posterior samples as burn-in; no thinning was used (cf.\ our settings in Section \ref{sec:simulations}). The posterior summaries were assigned to the midpoints of the bins.

The results are displayed in Figure \ref{fig:bart:gauss}. Overall, the performance of the GMC prior-based method is quite similar to that of the GP-based one  in this example and for this dataset. Our method appears to retain the heights of the sharp peaks of the Bart Simpson function somewhat better than the GP-based one, though admittedly the difference is marginal to be relevant in practice. More importantly, our method turned out to be much faster, with its computation completed in a fraction of a second, whereas the timing was in minutes for the GP-based approach.

\section{Imposing a prior on the number of bins}\label{sec:revjump}

Rather than fixing the number of bins $N$, it can be treated as an unknown parameter. From a Bayesian point of view this is a natural approach and the computational problem is commonly tackled using the reversible jump algorithm. Investigation of the performance of this approach, combined with the independent gamma prior, was suggested to us by  Ryan Martin. 

As different values of $N$ correspond to different binning of the data we write $H_k^{(N)}$  and $\Delta_k^{(N)}$ to highlight dependence on $N$. 
The marginal likelihood of the model indexed by $N$ is given by (see Equation \eqref{eq:marginal_lik}) \[
\operatorname{ML}_N(X^{(n)})=e^{Tn} \frac{\beta^{\alpha N}}{\Gamma(\alpha)^N} \prod_{k=1}^N \frac{\Gamma(\alpha+H_k^{(N)})}{(n\Delta_k^{(N)}+\beta)^{\alpha+H^{(N)}_k}},
\]
 Let $\pi_N$ be the prior probability of the model indexed by $N$. Define 
\[ R(N) = \log \operatorname{ML}_N(X^{(n)}) + \log \pi_N. \]
Suppose $q(N^\circ \mid N)$ is a Markov kernel on the model indices. Let the current iterate be $N$ with  value $R(N)$. The reversible jump algorithm iterates over the following steps:
\begin{enumerate}
  \item Select a new model $N^\circ$ from $q(\cdot \mid N)$. 
  \item Compute 
\[ A = R(N^\circ) - R(N) + \log q(N\mid N^\circ) - \log q(N^\circ \mid N). \]
\item Draw $U \sim \operatorname{Uniform}(0,1)$. If $\log U < A$, set $N$ to $N^\circ$.
\end{enumerate}
 This iterative scheme yields a sequence $\{N_i\}$ from the posterior of the model index (here $i$ is the Monte-Carlo iteration number). Within a model indexed by $N$, coefficients can be sampled from their posterior
\[ \psi_k^{(N)} \sim \operatorname{G}\left(\alpha + H_k^{(N)}, \beta + n\Delta_k^{(N)}\right), \qquad k \in \{1,\ldots, N\}.\] 
Note that existence of a closed form expression for $\operatorname{ML}_N(X^{(n)})$ greatly simplifies derivation of the reversible jump algorithm.

For each proposed model $N$ the binning vector $(H_1^{(N)},\ldots, H_N^{(N)})$ needs to be computed. However,  for a particular value of $N$ this only needs to be done once over all MCMC iterations.

To illustrate performance of the method, we revisit the example of Subsection \ref{subsec:exponential} and compare the results to those that were displayed in Figure \ref{fig:example1n1}. 

The proposal $q$ on the model index was defined as follows: Fix  $\eta\in (0,1/2)$. If the present model index is $\ge 2$, we propose to go one index up or one index down with probability $\eta$. With probability $1-2\eta$ we propose to stay within the present model. When the present model index is $1$, with equal chances we stay in this model or move up to model $2$.

We ran the sampler for  $30000$ iterations, discarding the first half of the iterates as burn-in. We chose $\eta=0.45$.  Within each model we assumed the $\operatorname{G}(0.1,0.1)$ prior distribution on $\psi_k$'s. We considered two different choices of the prior on $N$:
\begin{enumerate}
	\item $N \sim \operatorname{DiscreteUniform}(\{1,\ldots, 50\})$;
	\item $N \sim \operatorname{ShiftPoisson}(23)$, by which we mean the $\operatorname{Poisson}(23)$ distribution restricted to $\{1, 2, \ldots\}$. 
\end{enumerate}
The choice $23$ was motivated by the lower panel in Figure \ref{fig:example1n1}, where the number of bins equals $23$. Plots of the posterior along with marginal credible bands are in Figure \ref{fig:estimates-revjump}. The results do not appear to be satisfactory. Especially for the uniform prior on $N$ the posterior mean is visually oversmoothing. Under both priors, realisations from the posterior do not have any visual smoothness property, in the sense that conditional on the model index, the heights over the bins are both a priori and a posteriori independent. Hence our preference for the GMC prior. 

To illustrate difficulties with using a prior on the number of bins, we report a numerical experiment where we took 
$\lambda(x) = 2 + 0.2 \sin(30x) + \ind_{[0.7,1]}(x)$ and $n=10000$.  We employed the gamma Markov chain prior with $\operatorname{Exp}(0.1)$ on the smoothing parameter $\alpha$ and fixed number of bins $N=200$, as well as the independent gamma prior with $N \sim \operatorname{DiscreteUniform}(\{1,\ldots, 50\})$. 
The results are shown in Figure \ref{fig:last}. As can be seen from the bottom right panel there, the log-likelihood has many local maxima. The resulting difference in plots in the top row can be attributed to a defect of a sampler that only proposes visits to neighbouring sites. Having multiple local maxima in the marginal likelihood is not surprising, as any ``good'' approximation to $\lambda$ must be able to capture the discontinuity at $0.7$. One could instead use more advanced samplers, though this will come at a further computational cost. As a side remark, we note that estimation of discontinuous functions might be a delicate matter with the GMC prior. See \cite{gugu18wavelets} for further remarks.

Rather than employing a prior on $N$ for equal-sized bins, one could instead consider bins obtained from a non-equidistant grid of points. We conjecture that this will give visually more appealing results. However, such an approach requires a reversible jump algorithm over the locations of the bin endpoints. This is disadvantageous: the computing times of such  algorithms do not scale well with the number of observations.


\begin{figure}
\begin{center}
		\includegraphics[width=0.45\textwidth]{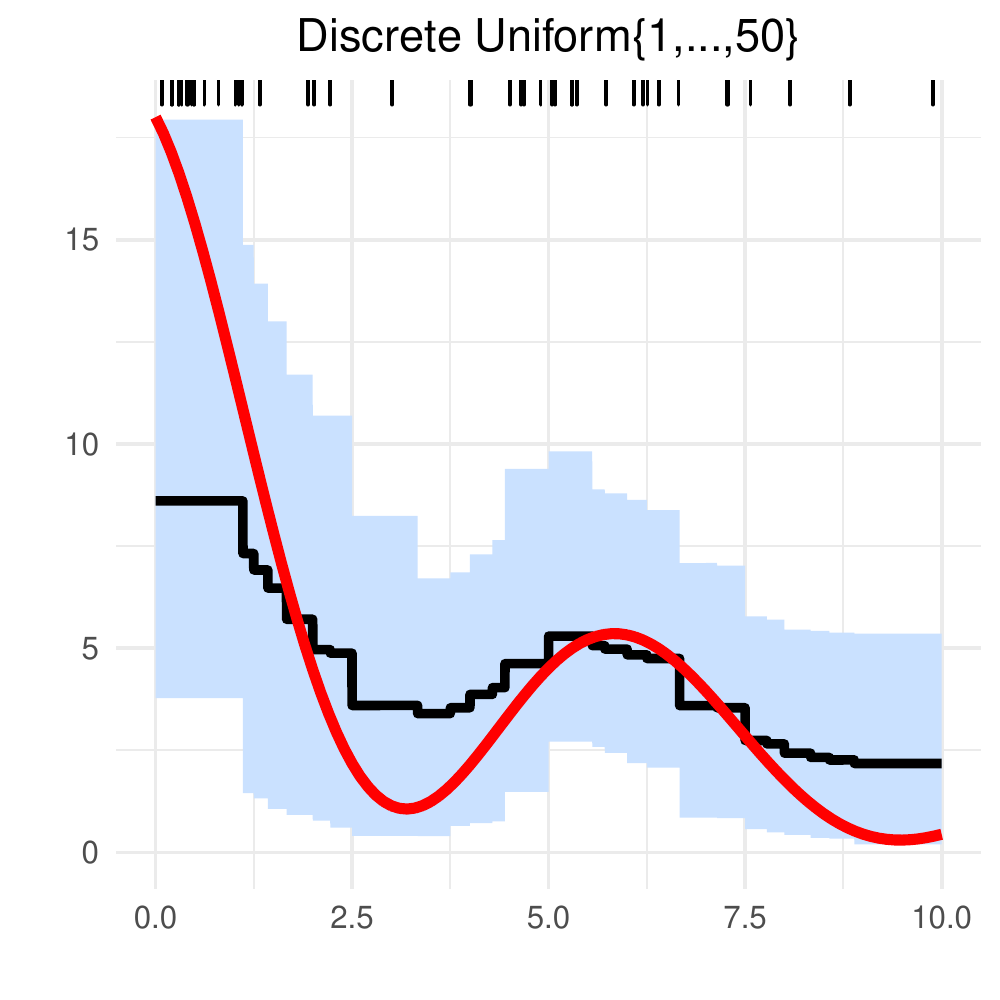}
	\includegraphics[width=0.45\textwidth]{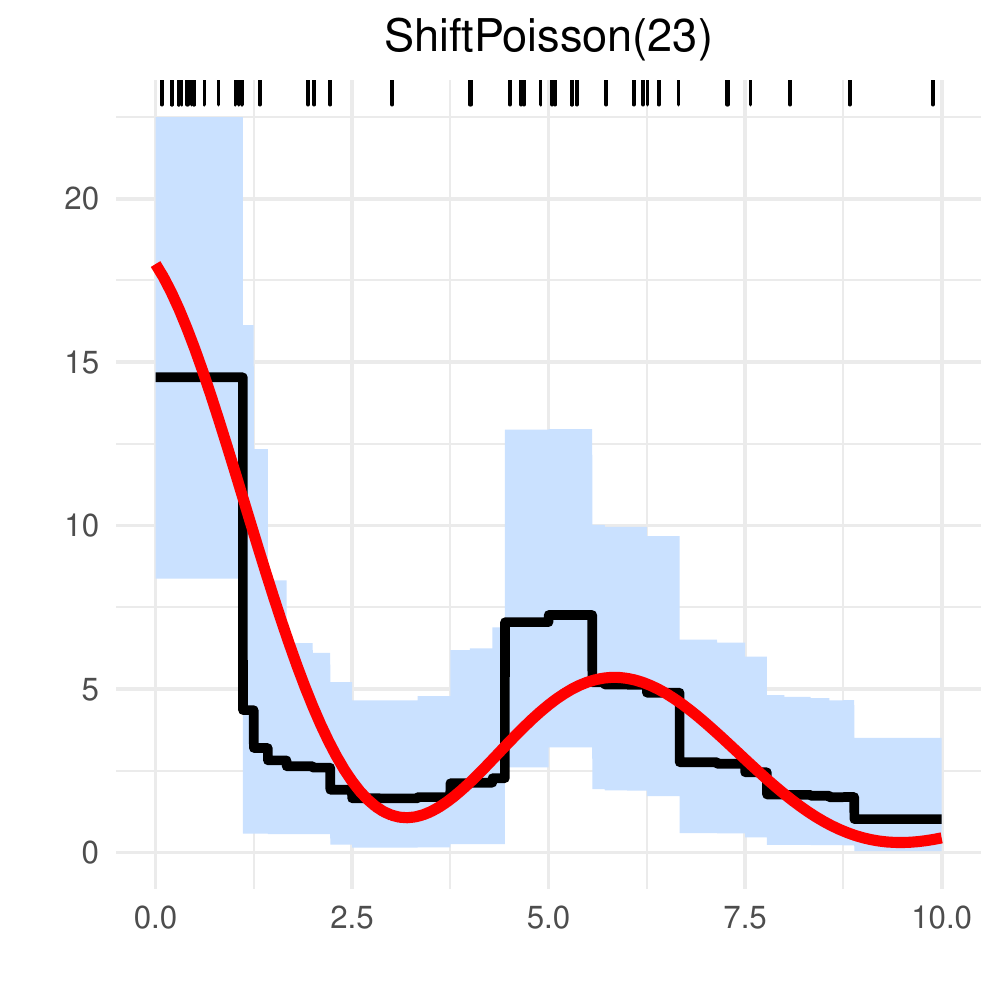}
	\caption{Estimation results for the oscillating exponential function $\lambda_0$ from Subsection \ref{subsec:exponential} with $n=1$ using a prior on the model index $N$. The prior on $\psi_k$'s is $\operatorname{G}(0.1,0.1)$.  Left: $N \sim \operatorname{DiscreteUniform}(\{1,\ldots, 50\})$. Right: $N \sim \operatorname{ShiftPoisson}(23)$. \label{fig:estimates-revjump}}
\end{center}
\end{figure}

\begin{figure}
\begin{center}
	\includegraphics[width=0.9\textwidth]{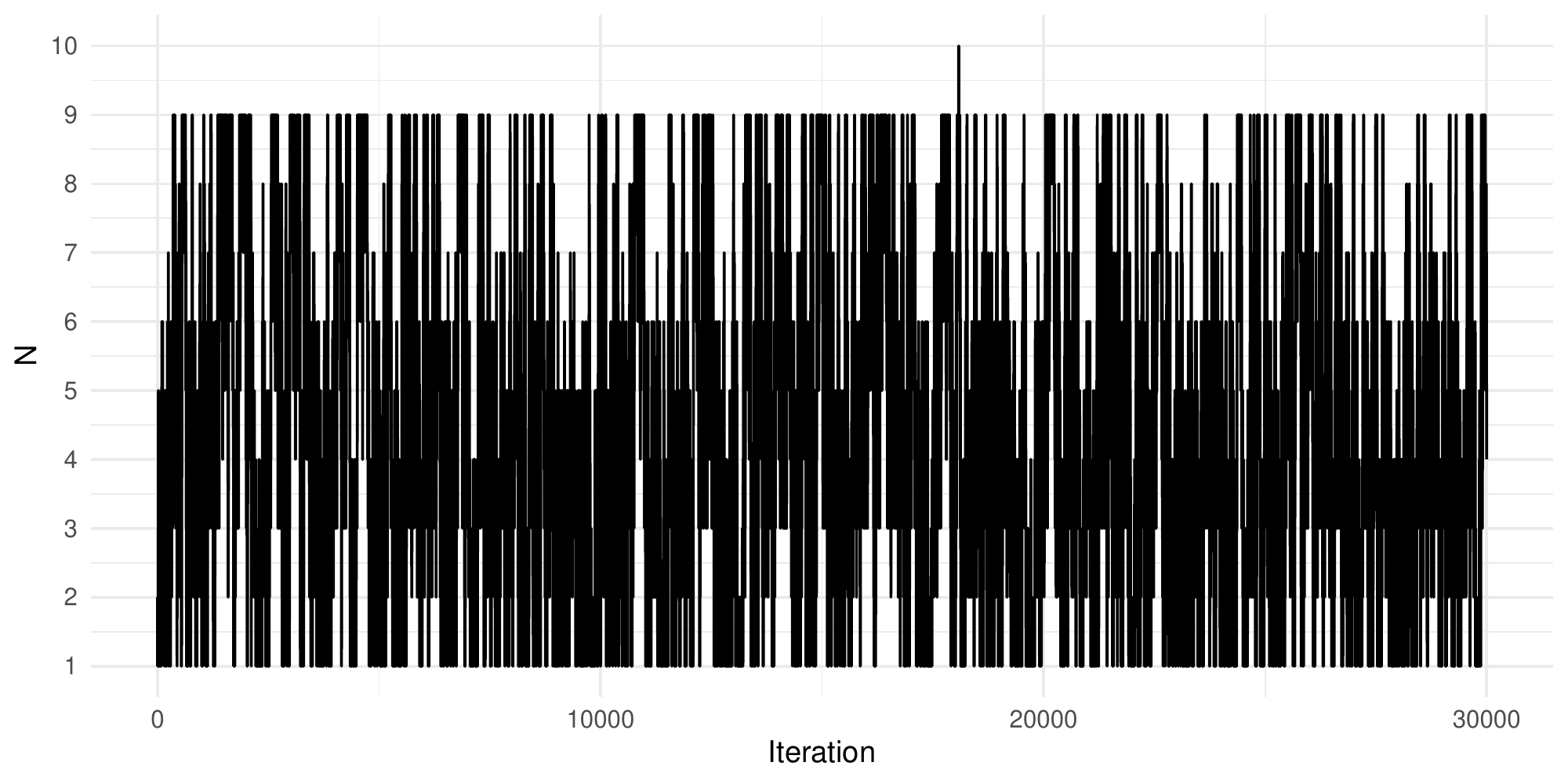}
	\includegraphics[width=0.9\textwidth]{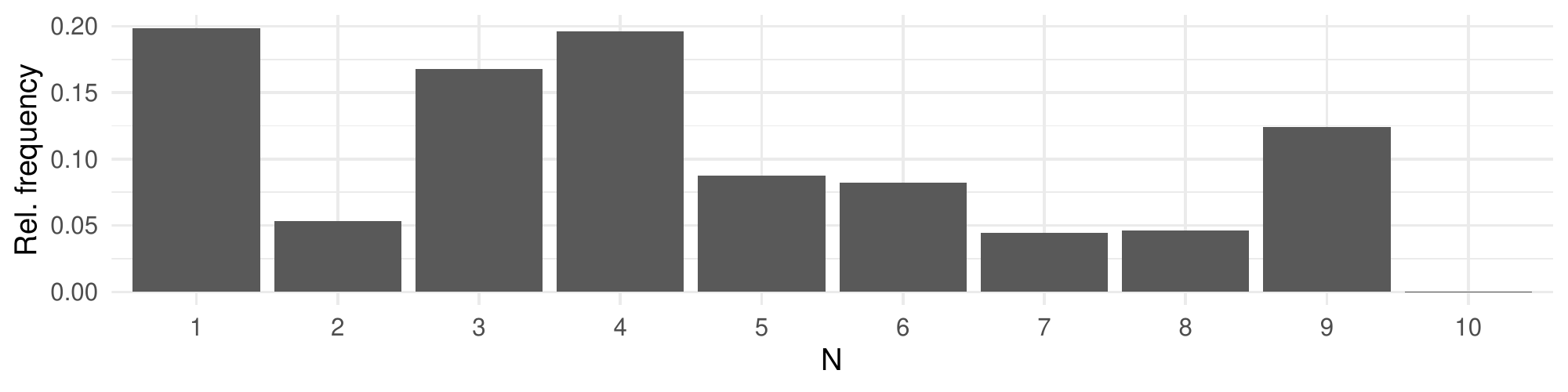}
	\end{center}
	\caption{ Results with $N \sim \operatorname{DiscreteUniform}(\{1,\ldots, 50\})$. Top: traceplot on the model index $N$. Bottom: relative frequencies of models.}
\end{figure}

\begin{figure}
\begin{center}
	\includegraphics[width=0.9\textwidth]{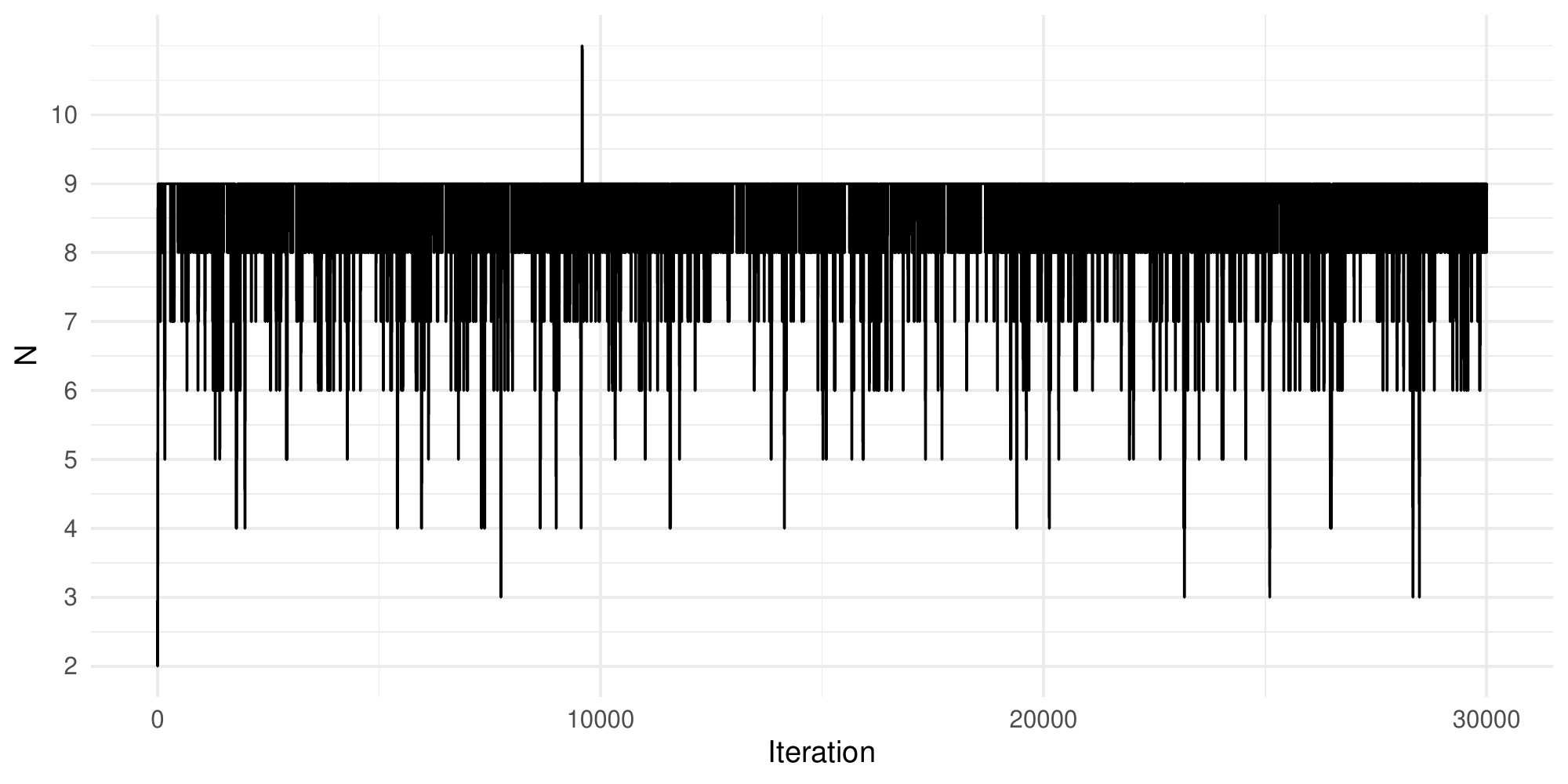}
	\includegraphics[width=0.9\textwidth]{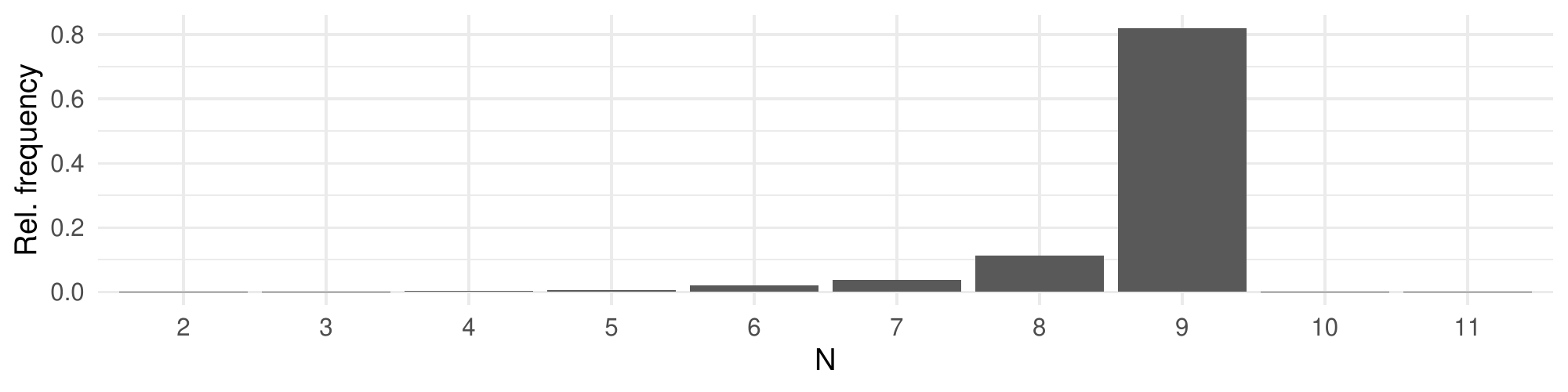}
	\end{center}
		\caption{Results with $N \sim \operatorname{ShiftPoisson}(23)$ prior. Top: traceplot on the model index $N$. Bottom: relative frequencies of models.}
\end{figure}

\begin{figure}
\begin{center}
	\includegraphics[width=0.45\textwidth]{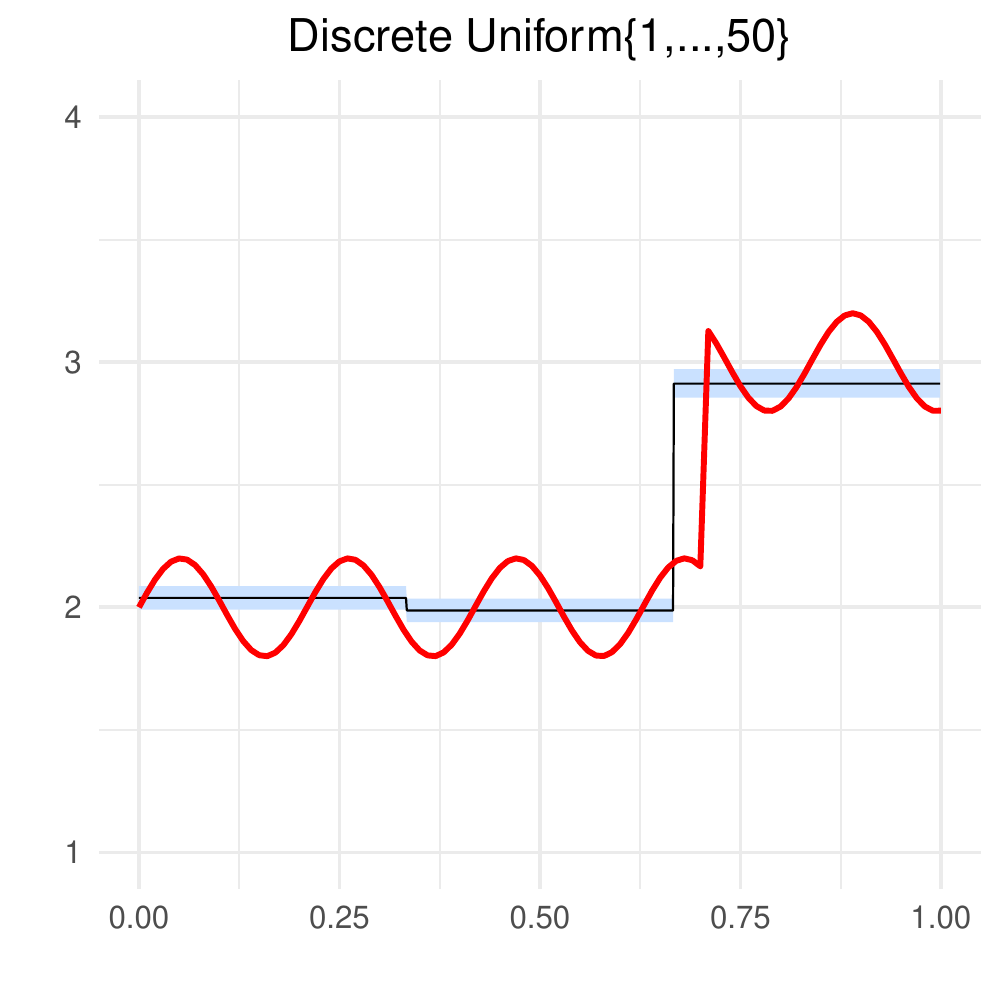}
	\includegraphics[width=0.45\textwidth]{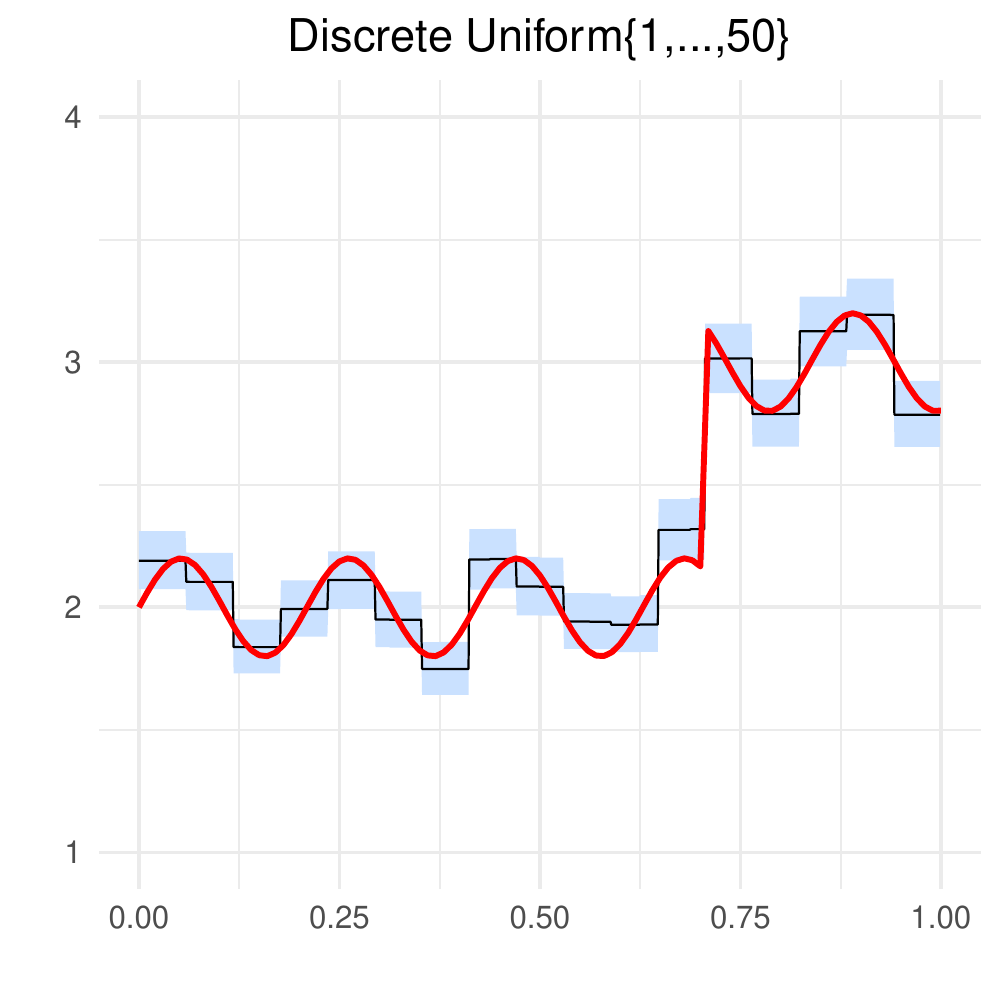}
	\includegraphics[width=0.45\textwidth]{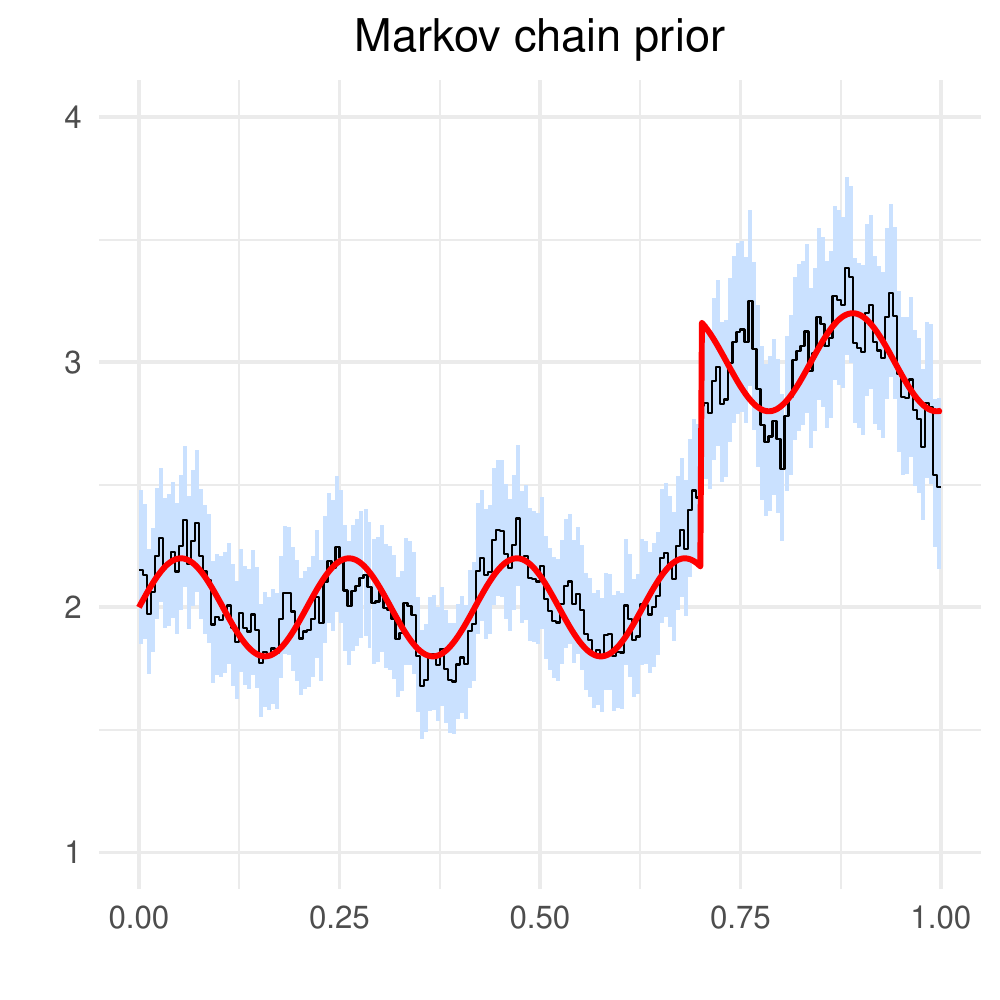}
	\includegraphics[width=0.45\textwidth]{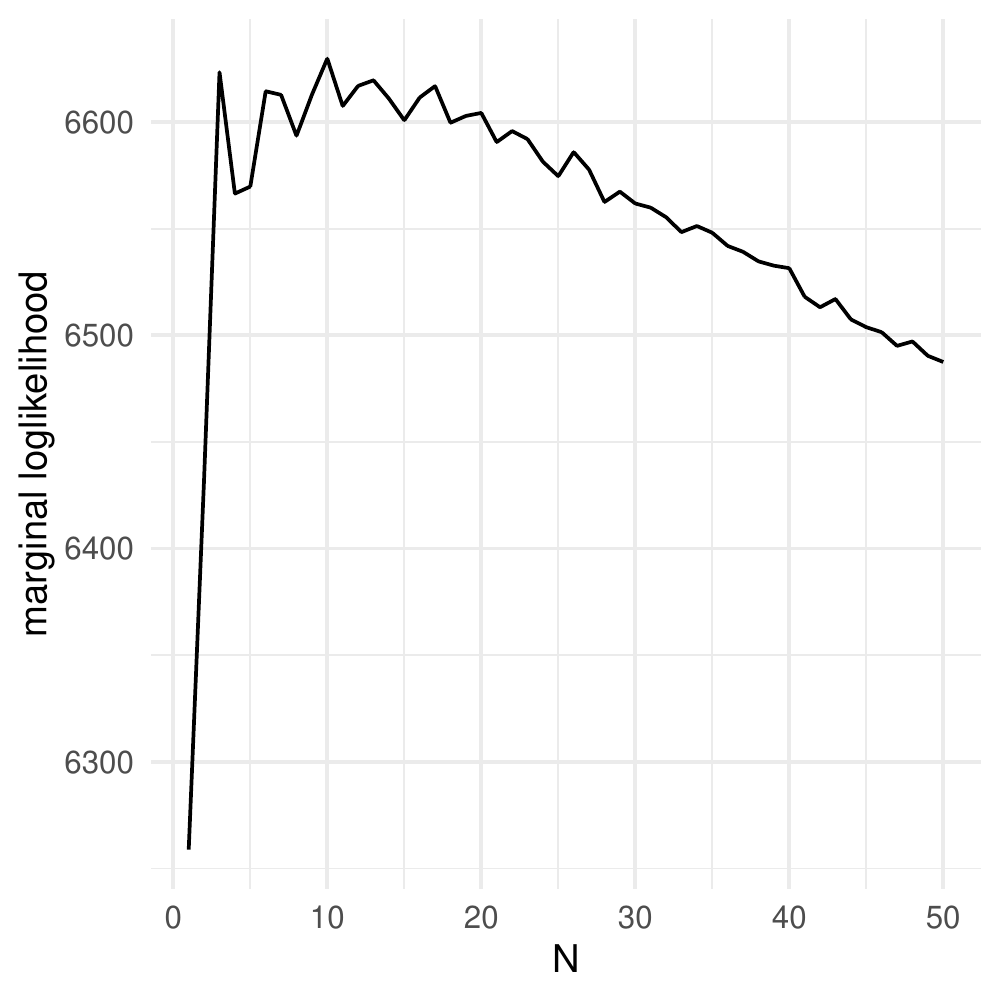}	
	
\end{center}
	\caption{Estimation results for true intensity $\lambda(x) = 2 + 0.2 \sin(30x) + \ind_{[0.7,1]}(x)$ with $n=10000$. Top left: independent gamma prior with reversible jump algorithm initialised at model index $2$. Top right: independent gamma prior with reversible jump algorithm initialised at model index $20$. Bottom left:  GMC prior with $N=200$ bins and $\operatorname{Exp}(0.1)$ prior on the smoothing parameter $\alpha$. Bottom right: marginal log-likelihood as function of the model index.  \label{fig:last}}
\end{figure}

\ifthenelse{1=2}{
	
	
	\section*{References}
	
	\begin{description}
		
		\item First reference
		
		\item Second reference
		
	\end{description}
	
}{
	
	
	\bibliographystyle{apa-good}
	\bibliography{bibppp}
	
}

\end{document}